\documentclass[sigconf, screen, nonacm]{acmart}

\newif\ifcameraready
\camerareadyfalse

\ifcameraready
\setcopyright{acmcopyright}
\copyrightyear{2026}
\acmYear{2026}
\acmDOI{XX.XXXX/XXXXXXX.XXXXXXX}
\acmISBN{978-1-4503-XXXX-X/26/10}
\acmConference[CCS '26]{ACM SIGSAC Conference on Computer and Communications Security}{October 2026}{Location, Country}
\acmBooktitle{Proceedings of the 2026 ACM SIGSAC Conference on Computer and Communications Security (CCS '26), October 2026, Location, Country}
\fi

\usepackage{algorithm}
\usepackage{algpseudocode}
\usepackage{enumitem}
\usepackage{multirow}
\usepackage{subcaption}
\usepackage{soul}
\usepackage[disable]{todonotes}
\usepackage{xspace}
\usepackage{xurl}
\usepackage{tikz}

\theoremstyle{definition}
\newtheorem{definition}{Definition}

\theoremstyle{plain}
\newtheorem{theorem}{Theorem}
\newtheorem{corollary}{Corollary}

\newcommand{\F}{\mathcal{F}}
\newcommand{\Sim}{\mathcal{S}}

\newcommand{\OSort}{\mathsf{OSort}}

\newcommand{\OCmpSet}{\mathsf{OCmpSet}}
\newcommand{\OHT}{\mathsf{OHT}}
\newcommand{\PADSIZE}{\mathsf{PADSIZE}}

\def\system{\texttt{TENNOR}\xspace}
\def\rpname{TENNOR\xspace}
\def\ptitle{\rpname: Trustworthy Execution for Neural Networks through Obliviousness and Retrievals}
\def\pkeywords{Neural Networks, Obliviousness, Trusted Execution Environments}

\newcommand*\circledbl[1]{\tikz[baseline=(char.base)]{
\node[shape=circle,draw,fill=black,inner sep=1pt,text=white] (char) {#1};}}

\newcommand*\circled[1]{\tikz[baseline=(char.base)]{
\node[shape=circle,draw,inner sep=1pt] (char) {#1};}}

\def\addauthnote#1#2{%
	\expandafter\def\csname#1\endcsname##1{%
		\todo[inline,size=\footnotesize,color=#2]
			{\textbf{\underline{\texttt{#1}}:} ##1}\xspace}
}

\addauthnote{vpk}{blue!25}
\addauthnote{giuseppe}{green!25}
\addauthnote{evgenios}{orange!25}
\addauthnote{TODO}{yellow!25}
\addauthnote{modify}{purple!25}
\addauthnote{question}{fuchsia!25}
\addauthnote{note}{red!25}

\algtext*{EndFor}
\algtext*{EndProcedure}
\algtext*{EndFunction}
\algtext*{EndIf}

\graphicspath{{img/}}

\AtBeginDocument{%
  }

\setcopyright{acmlicensed}
\copyrightyear{2018}
\acmYear{2018}
\acmDOI{XXXXXXX.XXXXXXX}
\acmConference[Conference acronym 'XX]{Make sure to enter the correct
  conference title from your rights confirmation email}{June 03--05,
  2018}{Woodstock, NY}
\acmISBN{978-1-4503-XXXX-X/2018/06}

\begin{document}

\author{Zifan Qu}
\affiliation{
 \institution{George Mason University}
 \city{Fairfax, VA, USA}
 \country{}
}
\email{zqu4@gmu.edu}

\author{Vasileios P. Kemerlis}
\affiliation{
 \institution{Brown University}
 \city{Providence, RI, USA}
 \country{}
}
\email{vpk@cs.brown.edu}

\author{Giuseppe Ateniese}
\affiliation{
 \institution{George Mason University}
 \city{Fairfax, VA, USA}
 \country{}
}
\email{ateniese@gmu.edu}

\author{Evgenios M. Kornaropoulos}
\affiliation{
 \institution{George Mason University}
 \city{Fairfax, VA, USA}
 \country{}
}
\email{evgenios@gmu.edu}

\title{\ptitle}

\begin{abstract}
Training wide neural networks on sensitive data in untrusted cloud environments
requires simultaneously achieving computational efficiency and rigorous privacy
guarantees. Sparsification techniques, essential for scalable training of wide
layers, expose input-dependent memory-access patterns (i.e., leakage) that are
visible and can be exploited by a host OS/\-hypervisor, even when computation
is protected by a Trusted Execution Environment. 

We present \system, a system that resolves this tension by co-designing the
neural network training pipeline with doubly oblivious primitives, eliminating
value-dependent neural network
access-pattern leakage while also utilizing adaptive sparsification. \system
recasts sparse neuron activation as a locality-sensitive hashing (LSH)
retrieval problem, reducing secure sparsification to doubly oblivious accesses
over an LSH data structure. To eliminate the prohibitive storage cost of
``multi-table'' LSH, we introduce Multi-Probe Winner-Take-All (MP-WTA): the
first multi-probe scheme for rank-based LSH, achieving a $50\times$ reduction
in (hash table) memory while preserving model accuracy. We evaluate \system on
extreme multi-label classification benchmarks, with output layers of up to
hundreds of thousands of neurons under different batch sizes inside an Intel TDX Trusted Domain,
achieving
per-batch training
speedups of
$13\times$--$470\times$ over a Path ORAM baseline and reducing a 208-hour run
to about 26 minutes.

\end{abstract}

\begin{CCSXML}
<ccs2012>
   <concept>
       <concept_id>10002978.10002991.10002995</concept_id>
       <concept_desc>Security and privacy~Privacy-preserving protocols</concept_desc>
       <concept_significance>500</concept_significance>
       </concept>
 </ccs2012>
\end{CCSXML}

\ccsdesc[500]{Security and privacy~Privacy-preserving protocols}

\keywords{\pkeywords}

\settopmatter{printacmref=false, printccs=false, printfolios=false}
\renewcommand\footnotetextcopyrightpermission[1]{}

\maketitle

\pagestyle{empty}

\section{Introduction}
\renewcommand{\algorithmicrequire}{\textbf{Input:}}
\renewcommand{\algorithmicensure}{\textbf{Output:}}

The success of neural networks (NN) has led the community to develop a range of architectures, each tailored to different tasks and presenting their own computational challenges. 
Whether the pipeline is built around convolutional networks for vision, transformers for language, or recommendation models for personalization~\cite{he2016deep, vaswani2017attention, naumov2019deep, covington2016deep}, the choice of architecture fundamentally shapes how computation must be organized, scheduled, and optimized. 
Among these, a particularly demanding class consists of networks containing \emph{extremely wide layers}\footnote{The width
 of a network is defined to be the maximal number of nodes in a layer.}: i.e.,~layers whose dimensionality reaches into the hundreds of thousands, or even millions, of neurons. 
Such layers arise naturally in the output layers of extreme multi-label classification and recommendation systems~\cite{babbar2017dismec, yu2022pecos, medini2019extreme, schultheis2023towards, covington2016deep}.

\textbf{Efficient and Adaptive Access via Sparsification.} Wide layers are computationally demanding even in standard, non-privacy-preserving settings. 
When a single layer contains tens of thousands of neurons, the cost of organizing, storing, and accessing its parameters becomes a bottleneck. 
This state of affairs has motivated a substantial body of work on \emph{sparsification}~\cite{hoefler2021sparsity}. 
That is, rather than activating all neurons in the wide layer for every input, only a small subset participates, dramatically reducing computation per access. 
Sparsification comes in two flavors: \emph{non-adaptive}~\cite{han2015learning, 
frankle2018lottery}, in which the network is pruned to permanently remove neurons that do not contribute to model performance; and \emph{adaptive}~\cite{bengio2015conditional, 
shazeer2017outrageously, spring2017scalable}, in which sparsity is applied dynamically, at access time, depending on the input. 
In this work, we focus on the latter. 
The resulting techniques for making wide-layer networks scalable have become a standard in modern systems~\cite{shazeer2017outrageously, fedus2022switch, chen2020mongoose}.

While inference accesses the wide layer once, training requires repeated accesses over the wide layer. 
This repetition makes wide-layer training the dominant computational bottleneck, far exceeding the cost of inference in any realistic deployment scenario~\cite{spring2017scalable, chen2020slide, chen2020mongoose, schultheis2023towards}. 
Crucially, adaptive sparsification transforms wide-layer computation from dense
matrix operations into sparse, irregular memory accesses, which is a workload
that aligns poorly with GPU architectures that are optimized for regular data
parallelism. CPUs, with their deep cache hierarchies and flexible control flow,
are a natural fit for such access patterns, and \emph{prior work has shown that
sparse training on CPUs can match or outperform GPUs on wide-layer
networks~\cite{chen2020slide, daghaghi2021tale}}.

\textbf{Trustworthy NN Training.} In this work, we introduce a \emph{trustworthiness} dimension to this setting by requiring that model parameters remain confidential via a Trusted Execution Environment (TEE), where an untrusted OS/\-hypervisor can observe memory access patterns even when data is encrypted. 
Unfortunately, sparsification, which is essential for scalability, becomes a liability in adversarial settings~\cite{kato2023olive, ohrimenko2016oblivious}. 
The pattern of which a subset of neurons is adaptively activated for a given input is precisely the kind of access pattern that must not be revealed, since it leaks information about the input values. 
A growing body of attacks has demonstrated that \emph{access patterns over encrypted data can be exploited} to recover the underlying plaintext~\cite{response-hiding,k-nn-attack,state-of-the-uniform,10.1109/SP.2019.00030,DBLP:conf/ccs/CashGPR15}. 
Thus, naively applying sparsification to a wide layer exposes an attack vector, as memory accesses are revealed. 
This tension between (1)~the efficiency gains from sparsification and (2)~the access-pattern leakage motivates the privacy-preserving system we propose in this work. 

\system (which stands for \underline{T}rustworthy \underline{E}xecution for \underline{N}eural \underline{N}etworks through \underline{O}bliviousness and \underline{R}etrieval) is a system that is designed around CPUs and synthesizes \emph{doubly-oblivious} primitives to enable private training on wide neural networks without revealing sparsification access patterns to the host OS/\-hypervisor. 
Central to our efficiency is the observation that sparsification can be cast as a retrieval task over a locality-sensitive hashing (LSH) data structure~\cite{indyk1998approximate, shrivastava2014asymmetric, spring2017scalable, chen2020slide, chen2020mongoose}. 
This framing allows us to adaptively identify active neurons without scanning the full layer, and to do so obliviously. 
We show that by co-designing the obliviousness mechanism with a sparse-activation structure, we achieve secure training on wide-layer networks at orders-of-magnitude lower cost than existing oblivious approaches, while providing rigorous guarantees against access-pattern leakage.

We make the following contributions:

\noindent\textbf{(1) Doubly-oblivious Wide-layer Training Pipeline.} We design and
implement \system, a doubly-oblivious training pipeline for wide
neural networks within TEEs that eliminates value-dependent access-pattern leakage beyond the public input support.  \system
carefully composes oblivious hash tables, oblivious sorting, and branch-free
primitives into a provably secure adaptive sparsification pipeline, with
security formally established via a simulation-based proof under a standard TEE
adversarial model.

\noindent\textbf{(2) Multi-Probe Winner-Take-All LSH.} At the heart of our design is a
newly introduced multi-probe scheme for rank-based LSH (dubbed MP-WTA), which
defines a principled notion of neighboring buckets under WTA's ordinal hashing
framework.  MP-WTA achieves retrieval quality equivalent to that of $50$
independent hash tables, but it uses only a single table, reducing LSH storage
by $50\times$ with negligible accuracy loss.

\noindent\textbf{(3) Orders-of-magnitude Performance over Baseline.} On extreme
multi-label classification benchmarks, with hundreds of thousands of output labels, \system
achieves $13\times$--$470\times$ speedup over a Path ORAM baseline running
inside Intel TDX.  For this performance, we introduce three targeted
optimizations (i.e., deferred dummy payload instantiation, parallel oblivious
LSH scanning, and OHT layout reuse) that collectively eliminate the dominant
bottlenecks while preserving double obliviousness.

\section{Background}
\label{sec:background}

\subsection{Trusted Execution Environments}
\label{sec:TEE}

Trusted Execution Environments (TEEs) (e.g.,~Intel SGX~\cite{costan2016intel},
ARM TrustZone~\cite{pinto2019demystifying}, AMD
SEV~\cite{sev2020strengthening}
) are
\emph{hardware-enforced, isolated runtime environments} that protect certain
code and data from the rest of the system, including privileged software like
the OS or a hypervisor. In VM-based TEEs, such as Intel
TDX~\cite{cheng2024intel}, the protected workload (including a guest OS, if
present) runs inside a \emph{Trust Domain} (TD), while the host and hypervisor
remain outside the trust boundary. Beyond isolation, TEEs provide
\emph{attestation}, enabling remote parties to verify that a workload is
genuinely executing within a secure enclave. Together, these guarantees
underpin cloud \emph{confidential computing}, such as secure machine learning
services, allowing users to outsource sensitive data and computations without
exposing their contents to an infrastructure provider.

Despite these protections, TEEs do not defend against most classes of
\emph{side-channel attacks}. Prior work has demonstrated attacks
compromising \emph{confidentiality} by exploiting leakage from secret-dependent
memory accesses or control-flow behavior~\cite{brasser2017software, lee2020off,
lee2017inferring, moghimi2017cachezoom, moghimi2020copycat,
shinde2016preventing, van2017telling, xu2015controlled}. Complementing
architectural~\cite{costan2016sanctum, bourgeat2019mi6} and software
defenses~\cite{shih2017t, chen2017detecting}, \emph{data-oblivious programming}
offers an alternative mitigation strategy: programs are written such that their
memory access patterns are independent of any secret inputs
(\S\ref{sec:threat-model}).

\subsection{Oblivious Computation}
\label{sec:obliviousness}

At a high level, a computation is \emph{oblivious} if its memory access pattern
depends only on public information (e.g.,~input length and fixed parameters)
and not on sensitive values. More formally, a computation is oblivious iff any
two input sequences of the same length induce memory access patterns that are
indistinguishable to a computationally-bounded adversary. Obliviousness further
requires that branches, loop bounds, and memory indices in the executed code do
not depend on secret data.

\textbf{Oblivious RAM.}
\emph{Oblivious RAM} (ORAM), introduced by Goldreich and
Ostrovsky~\cite{goldreich1996software}, is a general-purpose technique for
achieving obliviousness. At a high level, ORAM conceals which logical block is
accessed by performing \emph{additional} reads and writes and periodically
shuffling data, so that the observed access pattern reveals only the number of
operations, not their targets. ORAM has been substantially improved over the
past few decades~\cite{stefanov2018path, wang2015circuit, patel2018panorama,
asharov2020optorama, asharov2023futorama}, yet it remains a generic
construction. \emph{Application-specific oblivious designs have been shown to
achieve significantly better performance, as we also demonstrate in this work.}

\textbf{Application-tailored Obliviousness.}
Beyond generic ORAM, the research community has developed \emph{oblivious data
structures} (ODS)~\cite{wang2014oblivious} and oblivious algorithms tailored to
specific tasks. Prior work has produced oblivious counterparts for a range of
algorithms and data structures, including transactional and map-like
systems~\cite{crooks2018obladi, tinoco2023enigmap}, search indexes and database
query processing~\cite{mishra2018oblix, eskandarian2017oblidb},
hashing~\cite{chan2017oblivious}, and sorting~\cite{batcher1968sorting,
asharov2020bucket, sasy2022fast, ngai2024distributed}. At the primitive level,
basic operations such as \emph{comparison}, \emph{swap}, and \emph{copy} must
also be oblivious. Several works~\cite{sasy2018zerotrace, dauterman2021snoopy,
sasy2022fast, ngai2024distributed, tinoco2023enigmap} employ branch-free
primitives, such as \texttt{CMOV} (x86) instructions and bitwise masking, to
ensure that access patterns are independent of the input.

\textbf{Double Obliviousness.}
\label{sec:double-oblivious}
In TEEs, access-pattern leakage can arise from three sources: (1)~memory
accesses inside the TEE, (2)~accesses to encrypted memory outside the TEE, and
(3)~contention on shared microarchitectural resources. Early work largely
focused on concealing access patterns at the boundary between the TEE and
external memory---i.e.,~point~(2). This scope was broadened by
Oblix~\cite{mishra2018oblix}, which introduced the notion of \emph{double
obliviousness}, requiring that access patterns within the TEE are also
protected---i.e.,~point~(1). Subsequent works~\cite{tinoco2023enigmap,
chamani2023graphos, dauterman2021snoopy} adopt this definition, ensuring that
operations inside the TEE are themselves oblivious.

\subsection{CPU-friendly Neural Networks}
\label{sec:SLIDE}

\emph{Deep neural networks} (DNNs) are commonly trained on GPUs and specialized
accelerators, as dense layers map naturally to large matrix multiplications.
In this work, we target CPU-based TEEs, making CPU-efficient DNN designs
particularly relevant; more broadly, \emph{CPUs are the platform of choice when
GPUs are unavailable (e.g.,~at the edge) or when power and cost constraints
dominate}~\cite{mittal2021survey}.

Running DNN inference on general-purpose CPUs is challenging for two reasons.
First, arithmetic throughput is limited: CPUs have fewer cores and narrower
SIMD units, making naive dense-layer execution orders of magnitude slower.
Second, performance is often memory-bound: CPUs rely heavily on caching to
compensate for slower RAM access, so models whose working sets exceed the
cache, or whose access patterns are irregular, suffer significant slowdowns.

\textbf{Sub-Linear Deep Learning Engine.}
Chen~et~al.~\cite{chen2020slide} introduced the \emph{Sub-Linear Deep Learning
Engine} (SLIDE): a CPU-oriented framework for training and inference that
replaces dense computation with adaptive sparsity. Given an input, SLIDE uses
\emph{locality-sensitive hashing} (LSH) to identify a small, input-dependent
active subset of neurons---in wide layers---and restricts forward and backward
passes to this subset. This reduces dense matrix multiplication to a small
number of hash-table lookups and sparse dot products, better exploiting CPU
strengths. However, because the active set is only an approximation of the true
top-activated neurons, accuracy can degrade if the active set is too small.
More critically for our setting, \emph{the accessed hash buckets and updated
neurons depend directly on the input, making naive SLIDE-style execution highly
access-pattern-sensitive and therefore insecure inside a TEE}.

\subsection{Locality-sensitive Hashing}
\label{sec:LSH}

\emph{Locality-sensitive hashing} (LSH) is a family of randomized hash
functions with the property that similar items are more likely to
\emph{collide} (i.e.,~land in the same bucket) than dissimilar
ones~\cite{indyk1998approximate}. Unlike traditional hashing, collisions in LSH
are intentional: they provide a fast mechanism for retrieving a short list of
candidates without scanning the full dataset.

Several LSH families have been proposed for different similarity measures,
including MinHash for Jaccard similarity~\cite{broder1998min}, SimHash for
cosine similarity~\cite{charikar2002simhash}, and Winner-Take-All hashing for
rank-based comparisons~\cite{yagnik2011power}. LSH has found application across
a wide range of domains, including approximate nearest-neighbor (ANN)
search~\cite{andoni2015practical}, recommendation
systems~\cite{shrivastava2014asymmetric}, 
and accelerating machine
learning~\cite{kitaev2020reformer}
\emph{SLIDE
uses LSH for fast network sparsification during training} (\S\ref{sec:SLIDE}).

More formally, let $d$ be the \emph{distance} measure on the domain $S$.

\textbf{Definition 2.1 (LSH family).} \textit{A function family $H$ is called
$(r, c\cdot r, p_1, p_2)$-sensitive for $d$, if for any $p,q \in S$:}
\begin{itemize}
	\item \textit{If $d(p,q) \leq r$ , then $\Pr_{h\leftarrow H}[h(p) = h(q)] \geq p_1$,}
	\item \textit{If $d(p,q) > c\cdot r$, then $\Pr_{h\leftarrow H}[h(p)=h(q)] \leq p_2$}
\end{itemize}

For ANN search, we require $c > 1$ and $p_1 > p_2$: points within distance $r$
collide with probability at least $p_1$, while points farther than $cr$ collide
with probability at most $p_2$. In practice, this gap is amplified by
concatenating multiple hash functions and/\-or maintaining multiple hash tables,
then probing the corresponding buckets to retrieve candidates. We next describe
Winner-Take-All hashing, the rank-based LSH family used by both SLIDE and
\system.

\subsection{Winner-Take-All Hashing}
\label{sec:WTA}

Winner-Take-All (WTA)~\cite{yagnik2011power} is an LSH family for rank-based
similarity that captures \emph{ordinal relationships} between features rather
than metric distances. Unlike traditional LSH, WTA is invariant to positive
rescaling of features, making it particularly suitable for high-dimensional
representations common in deep learning.

\textbf{LSH for Rank Correlation.}
Most LSH families target metric distances, such as Euclidean or angular
distance. WTA instead targets \emph{rank correlation}, which depends only on
the relative ordering of feature values rather than their absolute magnitudes.
Under this notion, two vectors are similar when their features share the same
ranking, even if the actual values differ. Rank correlation is quantified by
counting pairwise order agreements between two rankings as:
\begin{equation}
    \begin{aligned}
    PO(X,Y) = \sum_{i=1}^{n} \sum_{j < i} T\left((x_i - x_j)(y_i - y_j)\right),
    \end{aligned}
    \label{equation:PO-prelim}
\end{equation}

$x_i$ and $y_i$ are the values of the $i^{th}$ feature of $X, Y \in
\mathbb{R}^n$ and $T(z)$ is a threshold function that takes value $1$, if
$z>0$, or value $0$, if $z\leq 0$. Informally, this counts the number of
feature pairs on which $X$ and $Y$ agree in ordering (i.e.,~either both $x_i >
x_j$ and $y_i
> y_j$, or vice versa). A high $PO$ value indicates that the two vectors induce
very similar rankings over their $n$ features. \emph{Rank correlation is
particularly well-suited for domains with noisy, heterogeneously scaled, or
sparse feature representations}.

\textbf{WTA Hashing.}
While rank correlation is theoretically well-founded, exact computation
requires $\mathcal{O}(n^{2})$ comparisons per vector and does not scale. WTA
approximates rank similarity by sampling a random subset of features and
recording only the winner---i.e.,~the index of the maximum value---within that
subset. Concretely, each hash function $h_i(\cdot)$ is parameterized by a fixed
random permutation and a window size $M$; given a vector $X$, it outputs the
index of the maximum among the $M$ sampled features. Applying $K$ such
functions yields a \emph{signature} $\mathbf{h}(X) = (h_1(X), \ldots, h_K(X))$.
\emph{Two vectors with similar rankings tend to agree on the winner across many
sampled windows, so the Hamming distance between their signatures serves as a
proxy for rank-based similarity.}

\section{Threat Model}
\label{sec:threat-model}

We assume a TEE-based threat model, aligned with prior
work~\cite{chamani2023graphos, mishra2018oblix, tinoco2023enigmap}, which
\emph{combines protected execution with double obliviousness}. In our
prototype, the TEE is instantiated as an Intel TDX TD/\-VM. We assume a
\emph{powerful adversary} $\mathcal{A}$ that: (1)~controls the host software
stack (e.g.,~the host OS and/\-or hypervisor); and (2)~can observe memory access
patterns generated by the TD during execution.

In particular, $\mathcal{A}$ can observe memory accesses at both page and cache-line granularity.
However, $\mathcal{A}$ cannot access processor-internal secrets within the TEE, including the hardware root key (used for memory encryption, attestation, key derivation, etc.), nor can it inspect CPU register contents during TEE execution.

\emph{Our guarantee applies at the algorithmic access pattern
level}.
Additionally, we consider power-analysis~\cite{lipp2021platypus,
kogler2023collide+}, rollback~\cite{matetic2017rote, zhang2024cachewarp},
timing-based~\cite{wang2022hertzbleed, wilke2024tdxdown}, and
denial-of-service~\cite{cheng2024intel} attacks out of scope. We use
Snoopy’s~\cite{dauterman2021snoopy} code primitives (i.e.,~copy, sort), but we
do not claim whole-program, binary-level obliviousness after compiler
optimizations. Our goal is to eliminate value-dependent NN access-pattern leakage under the
aforementioned adversarial observation capabilities.

Our system runs inside a Trusted Domain (TD) provided by Intel TDX on an
untrusted cloud server. The user's client establishes a secure channel to the
remote TD/\-TEE using remote attestation, derives a session key, and transmits
sensitive data over this channel. The code executes within the TD, and all
sensitive data are processed within the TD. Unlike an SGX-style enclave
setting, where one often distinguishes enclave memory from memory
\emph{outside} the enclave, the relevant boundary here is the TD/\-VM  vs.
entities \emph{external} to the TD/\-VM (i.e.,~OS, VMM).
Although our prototype focuses on Intel TDX, the threat model considered relies
only on standard TEE guarantees (e.g.,~protected execution, memory encryption)
and therefore the same high-level objectives are also relevant to other TEEs,
such as Intel SGX and AMD SEV---although the concrete trust boundary and system
instantiation may differ.

\noindent\textbf{Towards End-to-End Obliviousness.}
\system treats the number and positions of non-zero input features as public
parameters, while the corresponding feature values remain private.
\system protects value-dependent access patterns within the neural network but does not hide which input features are non-zero.
The reported measurements in this work exclude the additional cost of hiding which input features are non-zero. 
For completeness, we discuss a potential front-end that can make \system end-to-end oblivious.

One straightforward way to create a front-end that hides which input features are non-zero is to ``touch'' and compute on every dimension in the original input, including dimensions whose values are zero. 
Towards analyzing the complexity of this approach, let $D$ denote the input dimensions and $N_0$ the number of neurons in the first layer.
For each batch of $B$ input vectors, the front end scans all $D$ dimensions in a fixed order and performs the same multiply-accumulate operations for each of the $N_0$ first-layer neurons, regardless of the input value.
This task can be parallelized by assigning fixed ranges of features or neurons to $p$ threads, resulting in $O(BN_0D)$ total work parallelized in $O(\frac{BN_0D}{p})$ time.
Across all workloads evaluated in \S\ref{sec:eval-performance}, this front-end adds at most $0.07\%$ to \system's per-batch runtime for 44 threads.
This overhead is small because the sequential scan of high-dimensional inputs exhibits strong memory locality.
In terms of computation, every input feature (zero or non-zero) needs to perform a multiplication and a write; thus, the end result is doubly oblivious. 
Appendix~\ref{app:input-support} provides the per-dataset measurements.

\begin{figure*}[t]
\centering
\includegraphics[width=0.95\linewidth]{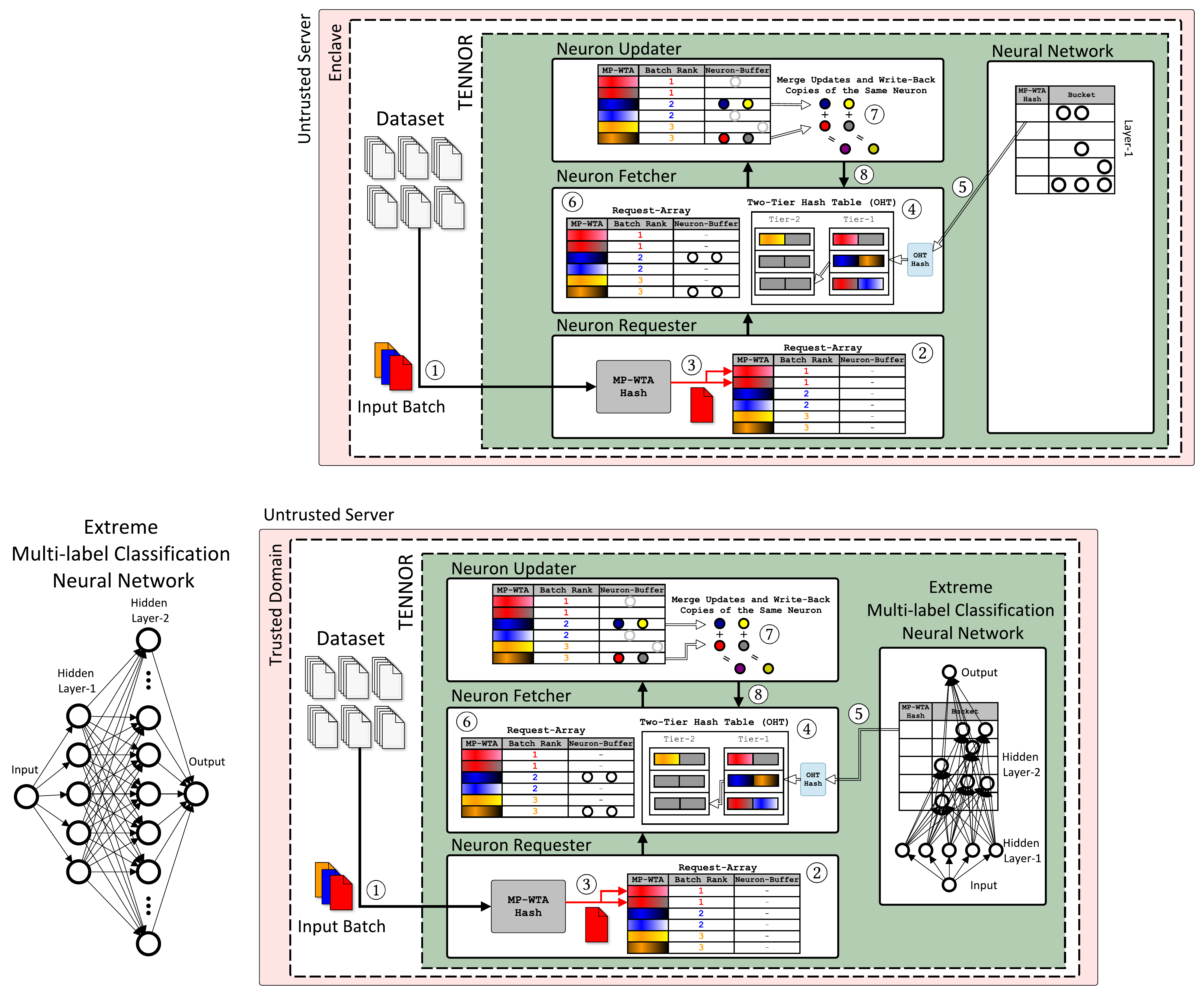}
\caption{Design of \system. The components \texttt{Neuron Requester},
	\texttt{Neuron Fetcher}, and \texttt{Neuron Updater} follow a
	doubly-oblivious design when accessing or updating the wide NN stored in
	Locality-Sensitive Hashing (LSH) table(s), shown on the
	right-hand side. The LSH tables use our newly introduced Multi-Probe
	Winner-Take-All (MP-WTA) hashing scheme (see~\S\ref{sec:MP-WTA}).}
\label{fig:overview}
\end{figure*}

\section{Overview}
\label{sec:overview}

\subsection{Key Insights}
\label{sec:insights}

Secure Neural Network training on an untrusted cloud requires \emph{double
obliviousness} at the algorithmic access-pattern level, i.e.,~data-dependent access patterns, across the memory hierarchy, must be avoided. 
The key challenge is to introduce new techniques and carefully compose existing ones,
tailored to the wide NN training pipeline so as to produce an efficient
doubly-oblivious system.

\circledbl{$1$}~\textbf{NN-Adaptive Sparsification via Data Structure
Retrieval.\xspace} Textbook NN training processes all neurons for each input.
More advanced approaches apply static pruning to sparsify the network, reducing
both computation and storage. Instead, \emph{\system dynamically selects a
small subset of dominant neurons, at runtime, via adaptive sparsification}.
Specifically, \system transforms dense matrix multiplication into an LSH
table-retrieval task, a method originally introduced by Chen~et
al.~\cite{chen2020slide}. By casting sparsification as a retrieval problem,
\system avoids computing the subset of activated neurons via MPC or HE.
Instead, \system performs lookups in an LSH table that groups neurons during
initialization to facilitate sparsification. \emph{Consequently, secure
sparsification reduces to performing doubly-oblivious accesses to the LSH table(s)}.

\circledbl{$2$}~\textbf{Reduced LSH Tables via Multi-Probe WTA.\xspace} Given
that LSH is a data structure with probabilistic guarantees, a standard approach
to amplify the probability of success is to construct multiple LSH tables and
fetch a bucket from each. Instead, \emph{\system introduces a novel LSH scheme
for WTA that retrieves multiple buckets from a single table}, rather than
retrieving a single bucket from multiple tables (as SLIDE~\cite{chen2020slide}
does). This \emph{Multi-Probe WTA} scheme, which is of independent interest,
preserves NN training accuracy (see~\S\ref{sec:acc-eval}) while reducing space
usage by $50\times$.

\circledbl{$3$}~\textbf{Tailored Doubly-Oblivious Design.\xspace} With these
design choices, wide NN training within \system reduces adaptive sparsification to mapping training data
points to neurons. The key efficiency question is which collection of
``objects'' (e.g.,~training data points or neurons) should serve as the target
of oblivious fetching. Rather than fixing a training point and obliviously
scanning across all neurons, \emph{\system reverses this approach completely:
for each fixed neuron, it obliviously fetches the training points that require
it}. Naively applying this idea to the entire training set would be costly, due
to the scalability limitations of oblivious computation over such a large
collection (of training data). To address this, \system partitions the training data into batches
and performs oblivious fetching per batch. Overall, the linear scan of the LSH
table that stores neurons is doubly-oblivious, while the request entries within
each batch are organized in a doubly-oblivious hash
table~\cite{chan2017oblivious}, providing the overall obliviousness guarantees.
Hence, under cache-line-level leakage, \system provides double obliviousness.

\subsection{Approach}
\label{sec:approach}

Figure~\ref{fig:overview} shows the high-level workflow and components
of \system. The system operates entirely within an Intel TDX Trusted Domain
(TD) and is designed to provide double obliviousness.

\textbf{NN Preprocessing.\xspace} During initialization,  \system samples hash functions for the wide layer of the NN and stores that
layer in the LSH key-value store shown on the right-hand side of
Figure~\ref{fig:overview}. Importantly, each LSH key-value pair stores the full
information associated with a neuron (e.g.,~weights, gradients, bias), rather
than a pointer that must be dereferenced to retrieve neuron state from another
data structure. The training data is then partitioned into batches.
\circled{1}~The workflow begins when a batch of training inputs is transmitted
to \system, and execution proceeds through three doubly-oblivious components (outlined below).

\textbf{Neuron Requester.\xspace} Each input in the batch ``requests'' a subset
of active neurons for training. Multiple inputs may require the same neuron,
which can lead to redundant fetches. \circled{2}~To handle this efficiently, we
organize neuron requests from all inputs into a \texttt{Request-Array} that
stores the triplet: ($i$)~the probed \texttt{LSHbucketID}, ($ii$)~the input's
position in the batch, and ($iii$)~a placeholder buffer for the neuron data
(to be populated later). \circled{3}~A single
training data point probes multiple locations in the LSH table, generating
multiple entries in the \texttt{Request-Array}.

\textbf{Neuron Fetcher.\xspace} The \texttt{Request-Array} does not group
requests by neuron, so determining which training data points in a batch
trigger a particular neuron requires additional bookkeeping. \circled{4}~To
address this, \system transforms the \texttt{Request-Array} into an Oblivious
Hash Table (OHT)~\cite{chan2017oblivious}, which acts as an oblivious key-value store where the key is
an \texttt{LSHbucketID} and the value consists of the corresponding \texttt{Request-Array}
triplet(s). At this stage, the triplets do not yet store any neurons.
\circled{5}~Next, \system performs a linear scan over the LSH table; at each
iteration, the current \texttt{LSHbucketID} is used as a key to the OHT to locate the
associated triplet(s) and populate them with the neurons stored in the
corresponding bucket. \circled{6}~After the scan completes, the OHT is
transformed back into a \emph{populated} \texttt{Request-Array} containing the
requested neurons. If multiple triplets request the same \texttt{LSHbucketID},
\system creates a separate copy of the padded bucket contents for each triplet.

\textbf{Neuron Updater.\xspace} Once the \texttt{Request-Array} has been
populated, \system performs feedforward and backpropagation only on the neurons
contained in it. \circled{7}~Because the same neuron may appear in multiple
triplets, \system obliviously aggregates the gradient updates from all copies
to compute a single consolidated update as part of the backpropagation. \circled{8}~Finally, \system
obliviously writes the updated neurons back to the LSH table by invoking Neuron
Fetcher again, this time with the updated neurons. 
Unlike the first invocation
of Neuron Fetcher (which takes place during feedforward) that transfers neurons from the LSH table to the OHT, this
step (now in backpropagation) writes the updated neurons from the OHT back to the LSH table.

Although our prototype targets Intel TDX, its modular design allows it to be
adapted to other TEE implementations, such as Intel SGX and AMD SEV, with
minimal (if any) changes.

\section{Design and Implementation of \rpname}
\label{sec:design-implementation}

\subsection{LSH Storage Reduction}
\label{sec:MP-WTA}

Recall that amplifying the probability of success in LSH typically requires
constructing multiple hash tables and querying the corresponding bucket in each
table. An alternative approach from the information retrieval community, called
\emph{Multi-Probe LSH} (MP-LSH)~\cite{lv2007multi}, addresses this requirement
by probing multiple buckets within a \emph{single} hash table. The key
intuition is that if a stored item is ``close'' to the query item $q$, but does
not fall into the same bucket, it is likely to reside in a ``nearby''
bucket---i.e.,~a bucket whose hash value differs only slightly from that of
$q$. Accordingly, MP-LSH probes a sequence $\texttt{Seq}$ of neighboring
buckets around the original hash $\mathbf{h}(q)$ \emph{within the same
table}:~$\texttt{Seq}(q) = (\mathbf{h}(q), \mathbf{h}_1(q), \mathbf{h}_2(q),
\ldots, \mathbf{h}_z(q)).$ Each bucket $\mathbf{h}_i(q)$ differs from the original
$\mathbf{h}(q)$ by a \emph{perturbation} $\delta_i$. This strategy achieves
search quality comparable to standard LSH while requiring significantly fewer
hash tables.

\textbf{Multi-Probe WTA.\xspace} To the best of our knowledge, no multi-probe scheme has been proposed for WTA in the literature. The main challenge lies in
defining what constitutes a ``neighboring'' bucket within WTA's rank-based
hashing framework. In the original MP-LSH~\cite{lv2007multi} work,
perturbations are typically $\pm 1$. 
That is, for a given dimension of $\mathbf{h}(q)$,
the two neighboring probe buckets are obtained by incrementing or decrementing
that dimension's integer value by one. In contrast, WTA hashing operates
differently; it samples a subset of elements from the query vector and returns
the index of the maximum element within that subset. Consequently, unlike LSH
for Euclidean distance, there is no obvious notion of a ``neighbor'' LSH bucket in
this ranking-based setting.

\textbf{Our Approach.\xspace} Suppose the WTA signature of a query (or vector) $q
\in \mathbb{R}^m$ is $\mathbf{h}(q) = \bigl(h_1(q), \ldots, h_K(q)\bigr)$,
where each hash $h_i(q)$ returns the index of the maximum feature value among the
$M$ sampled features used in the $i^{\text{th}}$ WTA function.
If multiple sampled features have the same value, ties are broken
deterministically according to the fixed sampled order of that hash function.
Our approach defines a neighboring LSH bucket by considering the index of the \emph{second-largest} feature value within the corresponding sampled subset; this index is denoted by $h^2_i(\cdot)$. 
We generate a perturbed hash by selecting a signature dimension $i$ and replacing $h_i(q)$ with $h^2_i(q)$. 
Consider two vectors with nearly identical feature values except for a single feature that appears as a large outlier in one of the vectors; if this outlier is sampled during the computation of $h_i$, its index will dominate that position of the signature. 
\emph{By falling back to
the second-largest value, we obtain a more robust representation that avoids
penalizing otherwise similar vectors that differ only in the identity of the
maximum feature}.

\textbf{Further Relaxation.\xspace} The above approach perturbs a single
position of $\mathbf{h}(q)$ using one specific alternative. We extend this idea along two axes. First, instead of considering only the
second-largest value, we allow the perturbed position to use the index of the
$r$-th largest feature value, where $2 \le r \le M$. We define $h^r_i(q)$ as
the function that returns the index of the $r$-th largest element within the
sampled subset associated with the $i$-th WTA hash function.

Second, we allow simultaneous perturbations of up to $N$ positions of signature 
$\mathbf{h}(q)$, where $1 \le N \le K$. If the number of perturbed positions is
$i$, where $1 \le i \le N$, then there are $\binom{K}{i}$ ways to choose the
$i$ perturbed positions among the $K$ signature dimensions. For each chosen
position, there are $(r-1)$ alternative hash values available, namely from the
second-largest $h^2_*(q)$ through the $r$-th largest $h^r_*(q)$. Thus, there
are $(r-1)^{\,i}$ possible assignments of alternative hash values across the
$i$ perturbed positions.

Given the above, the total length of the probing sequence is $\mathsf{lenSeq} =
1 + \sum_{i=1}^{N} \binom{K}{i}(r-1)^{\,i}$, where the first term corresponds
to the original LSH bucket $\mathbf{h}(q)$. In our implementation, we set $r=3$ and
$N=3$. Under these parameters, position $i$ of the signature can be replaced by
either $h^2_i$ or $h^3_i$, and up to three positions may be perturbed
simultaneously. Algorithm~\ref{alg:oblivious_mpwta_mp} describes our
\texttt{MP-WTA}.

\begin{algorithm}[tb]
  \caption{\textsc{MP-WTA}$_{r=3,\,N=3}$}
  \footnotesize
  \label{alg:oblivious_mpwta_mp}
  \begin{algorithmic}[1]
    \Require Query vector $q$, Sequence of WTA hash functions $h_1,\ldots,h_K$
    \Ensure Probing sequence $\texttt{Seq}$ of length $\mathsf{lenSeq}=1+\sum_{i=1}^{N}\binom{K}{i}(r-1)^{i}$, with $r=3$ and $N=3$
    \State Compute $\mathbf{h}(q)=\bigl(h_1(q),\ldots,h_K(q)\bigr)$ and $\texttt{Seq} \gets \emptyset$
    \For{$i\in\{1,\ldots,K\}$} compute $h^2_i(q),\;h^3_i(q)$ \Comment{$2^{nd}$ and $3^{rd}$ maximum for all $h_i$}
    \EndFor
    \State $\texttt{Seq}\gets\bigl(\mathbf{h}(q)\bigr)$
    \For{$n\in\{1,2,3\}$} \Comment{number of positions to perturb}
      \For{\textbf{each} subset $S\subseteq\{1,\dots,K\}$ of positions s.t. $|S|=n$} 
        \For{\textbf{each} tuple $(r_j)_{j\in S}$ with values $r_j\in\{2,3\}$}
          \State $\tilde{\mathbf{h}} \gets \mathbf{h}(q)$
          \Comment{copy original hash as the base for perturbation}
          \For{$j\in S$} $\tilde{h}_j \gets h^{r_j}_j(q)$
          \Comment{replace with $2$nd or $3$rd maximum}
          \EndFor
          \State Append $\tilde{\mathbf{h}}$ to $\texttt{Seq}$
        \EndFor
      \EndFor
    \EndFor
    \State \Return $\texttt{Seq}$
  \end{algorithmic}
\end{algorithm}

Note that the parameters are fixed in advance, independently of the training
data, yielding an oblivious multi-probe schedule.
The full exposition of $\mathbf{h}(\cdot), h^2_*(\cdot), h^3_*(\cdot)$ 's
oblivious algorithms are available in Appendix~\ref{app:mp-wta}.

\subsection{\rpname Components}
\label{sec:components}

Before describing the components, we fix the public parameters used by \system.
Let $P$ denote the collection of public parameters known to the adversary.
These include the batch size, the NN architecture and layer dimensions, the layer(s) to which LSH is applied, the public LSH parameters, the number of LSH buckets, the public probe-sequence length, the public padding parameter \texttt{PADSIZE}, the public parameters of the \texttt{OHT}, such as the number of bins and the bin capacity, the LSH rebuild period, the optimizer hyperparameters, and the current training-step index.
For the first layer, we also treat the non-zero feature indices of the input as public.
We treat the public number of training steps or batch iterations as part of $P$.
These parameters determine the shapes of intermediate arrays, the bounds of all loops, and the schedule of memory traversals used by the components.
All other information is private, including the values of the training data, the values of the model parameters, optimizer state, intermediate activations, gradients, and the concrete mapping from requests to buckets and from buckets to fetched neurons. (Refer to Appendix~\ref{sec:public-private} for additional details.)

\subsubsection{Neuron Requester}
\label{sec:neuron-requester}

In our prototype, \system applies LSH-based adaptive sparsification only to the wide layer, which is wide by nature in the extreme multilabel classification setting and thus dominates the training cost (see~\S\ref{sec:overview}). 
We note that other parameterizations of \system may apply LSH to multiple (even all) layers. 
The term \emph{input} to the layer being processed by the Neuron Requester refers to the activation vector fed into that layer. 
Algorithm~\ref{alg:neuron-requester} processes the batch's fetch requests for LSH that correspond to NN layer~$\ell$ and creates an intermediate data structure \texttt{Request-Array}. 
For a batch of $B$ inputs, Neuron Requester invokes \texttt{MP-WTA} on each input $q_i$ to obtain a probe sequence $\texttt{Seq}_i$ of length $\mathsf{lenSeq}_\ell$. 

\textbf{\textit{Obliviousness.}}
To preserve obliviousness, the Neuron Requester uses a statically pre-allocated array $\mathsf{ReqArray}_\ell$ of $T = B \cdot \mathsf{lenSeq}_\ell$ entries instead of variable-sized data structures, whose dynamic resizing pattern would leak information about the data.
The probe/request with local index $t$ for input $i$ is placed in the $(i-1)\cdot\mathsf{lenSeq}_\ell+t$ slot of the \texttt{Request-Array}. 
Each entry stores a triplet consisting of ($i$)~the signature \texttt{LSHbucketID}, ($ii$)~\texttt{Batch-Rank}, and ($iii$)~a \texttt{Neuron-Buffer} containing \texttt{PADSIZE} dummies. 
This way \texttt{Request\-Array} has a fixed structure for the subsequent fetch stage, and its layout does not leak information about the input. 
Although the LSH bucket signature itself depends on the input values, Algorithm~\ref{alg:oblivious_mpwta_mp} computes the signature in a data-oblivious manner. 
We note that public system parameters $P$ fully determine the observable quantities such as the size of $\mathsf{ReqArray}_\ell$, the number of probes issued per input, and the write schedule. 
Thus, the observable layout and write schedule of the Neuron Requester reveal nothing beyond what is already known via $P$.
Notice that lines 6--12 of Algorithm~\ref{alg:neuron-requester} can be further optimized by populating each entry of the \texttt{Request-Array} in parallel since there is no dependence across the content or the computation of \texttt{Request-Array}'s entries.

\begin{algorithm}[tb]
    \caption{\textsc{Neuron Requester for layer $\ell$}}
    \label{alg:neuron-requester}
    \footnotesize
    \begin{algorithmic}[1]
        \Require Layer index $\ell$, batch of input data of size $B$, hash functions $h_1,\ldots,h_K$, and the maximum number of neurons for each \texttt{Neuron-Buffer}, denoted by $\mathsf{PADSIZE}$
    \Ensure Initialized $\mathsf{ReqArray}_\ell$
    
    \State $K \gets$ \# of hash functions at layer $\ell$
    \State $\mathsf{lenSeq}_\ell \gets$ length of probing sequence \texttt{Seq} for the LSH of layer $\ell$
    \State $T \gets B \cdot \mathsf{lenSeq}_\ell$ \Comment{total number of LSH probes for batch $B$}
    \State Allocate empty \texttt{Request-Array} $\mathsf{ReqArray}_\ell[0..T-1]$
    
    \State \textit{// Run MP–WTA and construct $\mathsf{ReqArray}$}
    \For{each input $i = 1$ to $B$}
        \State $\texttt{Seq}_i \gets$ MP-WTA$_{r=3,\,N=3}$($q_i, (h_1,\ldots,h_K$)), where $q_i$ is an entry of the batch
         \For{$t = 0$ to $\mathsf{lenSeq}_\ell-1$}
            \State Index of $\mathsf{ReqArray}$ to populate is $\texttt{ind}\gets(i-1)\cdot\mathsf{lenSeq}_\ell+t$
            \State $\mathsf{ReqArray}_\ell[\texttt{ind}].\texttt{LSHbucketID} \gets \texttt{Seq}_i[t]$
            \State $\mathsf{ReqArray}_\ell[\texttt{ind}].\texttt{Batch-Rank} \gets i$
            \State $\mathsf{ReqArray}_\ell[\texttt{ind}].\texttt{Neuron-Buffer}\gets$initialize with $\mathsf{PADSIZE}$ dummies
    \EndFor
\EndFor
    \State \Return $\mathsf{ReqArray}_\ell$

  \end{algorithmic}
\end{algorithm}

\subsubsection{Neuron Fetcher}
\label{sec:neuron-fetcher}

The main design goal of Neuron Fetcher is to populate the \texttt{Neuron-Buffer} for each \texttt{Request-Array}. 
To do that, buckets from the LSH table need to be retrieved and copied to the memory of \texttt{Neuron-Buffer}. 
Specifically, Neuron Fetcher's retrieval step must hide two data-dependent pieces of information: (1)~which LSH buckets are accessed during retrieval, and (2)~the number of neurons residing in each queried bucket for a given request. 
\system addresses the latter by padding every LSH bucket to a fixed public \texttt{PADSIZE} capacity with dummy neurons, so that each bucket retrieval fetches the same number of neurons.
Hiding which buckets are accessed requires a different mechanism.

As described in Section~\ref{sec:approach}, for efficiency reasons, \system adopts a reverse-mapping strategy; it iterates over LSH buckets and for each of them it identifies all requests that target it.
To achieve the desired security guarantees, we need a data structure that can \emph{obliviously} map LSH signatures to requests. 
This is achieved by reorganizing $\mathsf{ReqArray}_\ell$ into an Oblivious Two-Tier Hash Table (\texttt{OHT}), as proposed by Chan~et al.~\cite{chan2017oblivious}. 
\texttt{OHT} stores one entry per request. The key is \texttt{LSHbucketID}; the value is a triplet of \texttt{LSHbucketID}, \texttt{Batch-Rank}, and \texttt{Neuron-Buffer}. 
Thus, multiple requests in the batch may appear as distinct \texttt{OHT} entries with the same \texttt{LSHbucketID}. 

Because the \texttt{OHT} maps keys to bins using a hash function, entries
(requests) with different \texttt{LSHbucketID}s may be placed in the same bin
due to hash collisions. To allow the fetcher to identify the correct entries
when scanning a bin, each stored triplet also records its \texttt{LSHbucketID}
as a field, in addition to the \texttt{OHT} lookup key.

In more detail (see Algorithm~\ref{alg:fetcher-main}), Neuron Fetcher linearly scans all LSH buckets, computes the corresponding tier-1 and tier-2 \texttt{OHT} bins for the current bucket, and linearly inspects the entries in those bins. 
For each candidate entry, Neuron Fetcher uses: \texttt{OblCompare}~\cite{dauterman2021snoopy}, a \emph{data-oblivious comparison primitive}, to test whether its \texttt{LSHbucketID} matches the current LSH bucket; and \texttt{OblChoose}~\cite{dauterman2021snoopy}, a \emph{data-oblivious conditional selection+assignment primitive}, to copy that bucket's padded contents into the corresponding \texttt{Neuron-Buffer}(s). (Refer to Section~\ref{app:obl-primitive} for details.)
After all buckets have been processed, Neuron Fetcher obliviously transforms the \texttt{OHT} back to \texttt{Request-Array}, whose \texttt{Neuron-Buffer}s now hold the retrieved neuron contents. 
Appendix~\ref{app:neuron-fetcher} details this procedure. 
The design of Neuron Fetcher is \emph{doubly-oblivious}. To build the
\texttt{OHT}, we instantiate \texttt{OblSort} with Batcher's bitonic sorting
network, whose compare and exchange schedule is fixed and therefore
data-independent~\cite{batcher1968sorting}.

\begin{algorithm}[tb]
\caption{Neuron Fetcher for layer $\ell$ (Read mode; Write mode reverses the transfer direction)}
\label{alg:fetcher-main}
\footnotesize
\begin{algorithmic}[1]
\Require $\mathsf{ReqArray}_\ell$ of $T = B\cdot\mathrm{lenSeq}$ entries; LSH table $\mathsf{LSH}_\ell$ with $\mathrm{numLshBucs}$ buckets, each padded to PADSIZE; public OHT parameters $\mathrm{numBin}, \mathrm{binCap}$
\Ensure each OHT entry's \texttt{Neuron-Buffer} populated with its corresponding LSH bucket's neurons
\State \textbf{// Phase 1: build OHT from Request-Array, keyed by LSHbucketID}
\State $\mathsf{OHT}.\Call{build}{\mathsf{ReqArray}_\ell, \mathrm{numBin}, \mathrm{binCap}}$
\State \textbf{// Phase 2: linear scan over all LSH buckets}
\For{$b = 0$ to $\mathrm{numLshBucs}-1$}
  \For{$\mathsf{tier} \in \{1,2\}$}
    \For{each of the $\mathrm{binCap}$ entries $e$ in bin $\mathsf{OHT}.\Call{lookup}{tier, b}$}
      \State $\mathit{match} \gets \Call{OblCompare}{e.\texttt{LSHbucketID}, b}$
      \State $e.\texttt{Neuron-Buffer} \gets \Call{OblChoose}{\mathit{match}, \mathsf{LSH}_\ell[b], e.\texttt{Neuron-Buffer}}$
    \EndFor
  \EndFor
\EndFor
\State \textbf{// Phase 3: restore fixed-length Request-Array}
\State collect all OHT entries into flat array $A$
\State $\Call{OblSort}{A, \texttt{key}=(\texttt{isDummy}, \texttt{Batch-Rank})}$
\State \Return $A[0..T-1]$ \Comment{populated $\mathsf{ReqArray}_\ell$}
\end{algorithmic}
\end{algorithm}

\textbf{\textit{Obliviousness.}}
Algorithm~\ref{alg:fetcher-main} summarizes the Fetcher's fixed three-phase schedule: oblivious OHT construction, a complete scan of all LSH buckets and corresponding fixed-capacity OHT bins, and oblivious recovery of the \texttt{Request-Array}. 

As established by Chan~et al.~\cite{chan2017oblivious}, double obliviousness
requires that each key of the \texttt{OHT} be queried at most once. This
requirement is satisfied by our construction; since each padded LSH
bucket is scanned exactly once during the linear pass, no key is ever accessed
more than once. Finally, to convert the \texttt{OHT} back into a
\emph{populated} \texttt{Request-Array}, we again apply oblivious sorting.

\textbf{Optimization \#1 (O1): Deferred Dummy Payload.\xspace}
\label{opt:neuronf-fetcher}\label{opt:payload-elision}
Recall that during the feedforward phase, the \texttt{Request-Array} must be transformed into an \texttt{OHT} via an oblivious sort operation. 
At this point, however, the \texttt{Neuron-Buffer} field of each \texttt{Request-Array}'s triplet is occupied \emph{solely by dummy values}. 
Moving these placeholders through the sort operation, i.e., rewriting them alongside the meaningful fields of the triplet, introduces significant overhead (see~\S\ref{sec:opt-eval}) without any security benefit.

\textbf{\textit{Obliviousness.}}
\system optimizes performance based on the above observation by \emph{deferring dummy payload instantiation}; during construction, the oblivious sort operates only on the fields \texttt{LSHbucketID} and \texttt{Batch-Rank} of \texttt{Request-Array}. 
Once all request items have been placed into their target \texttt{OHT} bins, \system initializes every \texttt{Neuron-Buffer} with dummy neurons, thereby restoring the fixed-capacity layout required by the subsequent fetch stage. 
Both the oblivious sorting and the subsequent dummy-initialization process operate on a fixed number of entries in a fixed order.
Thus, this optimization preserves obliviousness. 
Hence, \system avoids redundant oblivious movement of dummy \texttt{Neuron-Buffer}s without affecting the obliviousness of feedforward \texttt{OHT} construction.

\textbf{Optimization \#2 (O2): Oblivious Parallelization.\xspace}
\label{opt:parallel}
Our experiments in Section~\ref{sec:opt-eval} show that the linear scan of the LSH table that populates the \texttt{OHT} is a performance bottleneck.
In this design, each LSH bucket encountered during the scan immediately triggers its corresponding Tier-1 and Tier-2 \texttt{OHT} lookups, forcing LSH bucket iteration and \texttt{OHT} updates to proceed in lockstep.
\system breaks this lockstep by deferring all \texttt{OHT} updates to a separate phase after the scan completes, enabling parallel updates across threads.

Naively parallelizing across LSH buckets would introduce write contention, since multiple buckets may want to write to the same \texttt{OHT} bin. \system eliminates this contention through a two-phase approach. In the first phase, \system iterates over the LSH buckets and initializes a \emph{scheduler} (in fact, one per \texttt{OHT} tier) that records which LSH buckets map to which bins (as determined by the \texttt{OHT} hash functions). 
In the second phase, once the LSH iteration completes, \system partitions the \texttt{OHT} bins into disjoint regions and assigns each region together with its associated LSH buckets to a dedicated thread, guaranteeing that no two threads write to the same bin.

\textbf{\textit{Obliviousness.}}
This optimization preserves obliviousness.
The scheduler is derived from the fixed sequence of LSH bucket indices and the \texttt{OHT} hash functions, 
so its construction has no data-dependent leakage.
The resulting partition assigns OHT-bin regions to threads without leakage, beyond the public parameters, and does not alter the set of buckets, bins, or entries processed.
Notice that this parallel Neuron Fetcher performs \emph{the same} LSH table iteration, updates \emph{the same} \texttt{OHT} bins, and applies \emph{the same} oblivious operations as the sequential design.
The only difference is the order of \texttt{OHT} updates, not which memory locations are accessed.

\textbf{Optimization \#3 (O3): OHT Layout Reuse.\xspace}
\label{opt:reuse-oht}
Both feedforward and backpropagation construct an \texttt{OHT} from scratch over a \texttt{Request\-Array}, each incurring multiple rounds of oblivious sorting that our experiments identify as another bottleneck (see~\S\ref{sec:opt-eval}).
Since the \texttt{OHT} construction in both phases operates on the same set of requests, the request-to-\texttt{OHT}-bin assignment in \texttt{OHT} established during feedforward remains valid for backpropagation and need not be recomputed.
To support reuse, during the feedforward \texttt{OHT} construction, \system records each request's \texttt{OHT} bin assignment and its position within that bin. 
During backpropagation, the aforementioned \texttt{OHT} assignment reduces the multi-round construction to a single oblivious sort that places each request in its predetermined location.
The scheduler for the parallel Neuron Fetcher is also retained, as the bin-to-\{LSH-bucket, thread\} mappings are unchanged.

\textbf{\textit{Obliviousness.}}
This optimization preserves obliviousness. 
The recorded layout information is consumed through the same fixed-size oblivious sorting procedure, so it does not introduce data-dependent branches or direct data-dependent addressing.
The oblivious sorting operation moves the same requests into each bin as the original construction, and the subsequent Neuron Fetcher traverses the same LSH buckets and \texttt{OHT} bins under the same scheduler.
The only difference is the direction of data movement; neurons \emph{transfer into} the \texttt{OHT} during the feedforward pass and \emph{move back out} during backpropagation.

\subsubsection{Neuron Updater}
\label{sec:neuron-updater}
With the \texttt{Request-Array} fully populated by Neuron Fetcher, the Neuron Updater performs feedforward and backpropagation on the fixed-size \texttt{Request-Array}.
Because requests are arranged in an FCFS order within \texttt{Request-Array}, the Neuron Updater processes them by linear scan rather than by a data-dependent traversal. 
Specifically, for each layer input $i$ in a batch and layer number $\ell$, \system preallocates two arrays to store each fetched neuron's network-level index (or NeuronID) and its activation value, namely $\mathsf{ActiveNodes}[i][\ell]$ and $\mathsf{ActiveVals}[i][\ell]$.
Both arrays have a fixed capacity of $R_\ell = \mathsf{lenSeq}_\ell \cdot \mathsf{PADSIZE}$, avoiding any data-dependent resizing or compaction during training.

During feedforward, the Neuron Updater scans linearly $\mathsf{ReqArray}_\ell$.
For each neuron in each request's \texttt{Neuron-Buffer}, Neuron Updater computes the neuron's activation from the previous layer's active neurons, applies the activation function, and records the NeuronID and activation into $\mathsf{ActiveNodes}[i][\ell]$ and $\mathsf{ActiveVals}[i][\ell]$.
During backpropagation, the updater traverses the same order---i.e., it computes deltas for each neuron in each request's \texttt{Neuron-Buffer}---, then runs nested loops whose bounds are determined solely by public parameters $P$ to propagate errors and accumulate per-edge weight updates.
See Appendix~\ref{app:neuron-updater} for more details.

\textbf{\textit{Obliviousness.}}
The Neuron Updater is oblivious because (1)~all loop bounds depend solely on the public parameters $P$ and (2)~it linearly scans and updates the copies of neurons within each individual \texttt{Request-Array} entry.
We use oblivious primitives for comparisons and conditional updates as opposed to data-dependent branches or memory accesses.
Both feedforward and backpropagation follow a fixed, data-independent access pattern.

\subsubsection{Oblivious Neuron Merge and LSH Rebuild.}
\label{sec:neuron-merge}

Before write-back, multiple requests may hold duplicate copies of the same
neuron with updates from different inputs in the batch. To group these
duplicate copies obliviously, \system first applies \texttt{OblSort}, an
\emph{oblivious sorting primitive} implemented with a fixed data-independent
compare and exchange pattern (e,g., bitonic sort~\cite{batcher1968sorting}).
\system resolves these duplicates by obliviously grouping requests by
\texttt{LSHbucketID} and then linearly scanning and consolidating to a single
neuron copy action within each group (retaining the redundant entries as fake
writes to preserve security). The merged results are then used to rebuild
\texttt{OHT}, which initiates a write-back pass that traverses the same LSH
buckets and \texttt{OHT} bins as the feedforward fetch, with only the direction
of neuron transfer reversed.

\system also periodically rebuilds the LSH table by recomputing hash signatures for all neurons and relocating them in the new LSH table.
During initialization, \system adds \texttt{PADSIZE} dummy neurons per LSH bucket.
Periodic rebuilds then reuse this fixed dummy reservoir while recomputing LSH signatures. 
If the sum of dummies and real neurons is more than capacity, oblivious compaction sends neurons to an (LSH) \emph{overflow region}.
The same refresh routine is used for initialization, with fresh dummies added in initialization.
The overflow region may hold real neurons, not just dummies, when more than \texttt{PADSIZE} hash to the same LSH bucket. 
See Appendix~\ref{app:obl-merge} and Appendix~\ref{app:lsh-refresh} for the merge and refresh routines, respectively, with the latter also covering the empirical choice of \texttt{PADSIZE}.

\noindent\textbf{Overflow Handling and Security.}

The overflow region has a fixed size (known via public parameters) and is present regardless of whether it contains real neurons or only dummy entries. 
During every LSH initialization or refresh, \system obliviously sorts all the neurons and executes a linear scan to obliviously mark the overflow neurons\footnote{Notice, though, that no neuron is permanently evicted from the table; thus, they have the chance to be considered after a rebuild.}. As a next step, \system fills up every LSH bucket with real/dummy neurons (or both) and moves a fixed number of overflowing neurons (known via public parameters) to the overflow region (see Alg~\ref{alg:lsh-refresh-obl}, Alg~\ref{alg:enforce-capacity} and \S\ref{app:lsh-refresh}). 
During training \system accesses only the fixed-size main region (LSH table) and ignores the overflow content. 
The overflowed neurons become available again during the next scheduled refresh. 
Overall, the overflow phenomenon may affect which neurons are updated between refreshes, but the handling of moving to and out of the overflow region is done obliviously. 
Thus, security is unaffected by actions in the overflow region.

Appendix~\ref{app:padsize-accuracy} empirically evaluates the effect of \texttt{PADSIZE}  on the training accuracy of the neural network.

\subsection{Implementation}
\label{sec:implementation}

We implement \system in approximately 5,000 LoC in C++. The entire
system runs in user space and requires no modifications to the OS kernel, TEE
runtime, or system libraries. The implementation covers the full oblivious
training pipeline, including MP-WTA hashing, the \texttt{Request-Array}, the
OHT-based Neuron Fetcher, the fixed-shape Neuron Updater, and the oblivious
merge and LSH refresh routines. We reuse only the network initialization and
data-loading routines from SLIDE's open-source codebase~\cite{chen2020slide}.

For low-level oblivious primitives, we follow the design principles of
Snoopy~\cite{dauterman2021snoopy}, which provides data-oblivious scalar and
byte operations for TEEs. We extend this approach with oblivious routines for
\system's specific composite objects, such as request entries and neuron
buffers. These routines are implemented within \system's own codebase rather
than by modifying Snoopy's library. Refer to Appendix~\ref{app:obl-primitive}
for additional details.

\section{Security Guarantees}
\label{sec:security-proof}

Recall that $P$ denotes the public parameter set (\S\ref{sec:components}, \S\ref{sec:public-private}).
We model \system as a \emph{two-party computation} between a User and an OS/\-Hypervisor adversary $\mathcal{H}$. 
The User holds input $(P,D)$, where $D$ is the private training data; the Hypervisor holds input $P$.

\textbf{Real World.}  The parties execute protocol $\pi_{\system}$ in the TD. 
The User sends $(P,D)$ to the TD, which runs \system and returns the trained model $M$ to the User.
The Hypervisor is passive.
    Its view consists of the public input $P$ together with the memory-access trace $\tau$ observed during execution.
We write
$
\mathsf{view}_{\mathsf H}^{\pi_{\system}}(P,D)=\{P,\tau\}.
$

\textbf{Ideal World.} The parties interact only through the functionality $\F_{\mathsf{train}}\bigl[(P, D),\; P\bigr] = \bigl[M,\; \bot\bigr]$. 
On input $((P,D),P)$, the functionality $\F_{\mathsf{train}}$ returns the trained model $M$ to the User and returns $\bot$ to the Hypervisor.
We say that $\pi_{\system}$ securely realizes $\F_{\mathsf{train}}$ if there exists a PPT simulator $\Sim$ such that, for all $P$ and all $D$ consistent with $P$, the joint output distribution in the ideal world is computationally indistinguishable from that in the real world.
\[
    \bigl\{\mathbf{Ideal}_{\F_{\mathsf{train}},\Sim}(\kappa;\, (P,D),\, P)\bigr\}
    \;\approx\;
    \bigl\{\mathbf{Real}_{\pi_{\system},\mathcal{H}}(\kappa;\, (P,D),\, P)\bigr\}
\]

In our setting, the Hypervisor output is $\bot$ in both worlds, and the security argument focuses on the information leaked through the Hypervisor's view, namely $\tau$.
The proof shows that the real Hypervisor view can be simulated from the public parameters $P$ alone---i.e., $\tau$ does not reveal information beyond the leakage explicitly included in $P$.
Equivalently, \system securely realizes $\F_{\mathsf{train}}$ if there exists a PPT simulator $\Sim$ that, given only $P$, produces a simulated Hypervisor view such that
\[
\{\Sim(P)\}\approx_c \{\mathsf{view}_{\mathsf H}^{\pi_{\system}}(P,D)\}.
\]

We prove Theorem~\ref{thm:security} in the \emph{hybrid model}~\cite{canetti2000security}, where self-contained primitives (i.e.,~oblivious sort, comparison, and hash table; see Appendix~\ref{app:obl-primitive}) are abstracted as ideal functionalities. 
Specifically, we prove that \system is secure given \emph{access to these functionalities}; each is then replaced by its corresponding protocol, which implements the functionality and whose security follows from prior work.

\begin{theorem}
\label{thm:security}[Informal]
Assume that each underlying oblivious primitive securely realizes its corresponding functionality. 
\system then securely realizes $\F_{\mathsf{train}}$ against the Hypervisor in the hybrid model~\cite{canetti2000security}.
\end{theorem}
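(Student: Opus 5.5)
We construct a simulator $\Sim$ that receives only $P$ and replays a dummy execution of \system, then show its output is indistinguishable from the real trace $\tau$ through a hybrid argument that goes component by component. \textbf{Step 1 (trace decomposition).} The public parameters in $P$ fix the number of training steps, the batch size $B$, and the LSH rebuild period. So the real trace has a fixed, data-independent shape:
\[
\tau=\tau_{\mathrm{init}}\,\|\,\prod_{s}\bigl(\tau^{(s)}_{\mathrm{Req}}\,\|\,\tau^{(s)}_{\mathrm{Fetch}}\,\|\,\tau^{(s)}_{\mathrm{Upd}}\,\|\,\tau^{(s)}_{\mathrm{Merge}}\,\|\,\tau^{(s)}_{\mathrm{WB}}\bigr)\,\|\,\tau_{\mathrm{refresh}}\cdots
\]
The pieces are the LSH initialization, then per step the Neuron Requester, Neuron Fetcher (read), Neuron Updater, Oblivious Neuron Merge, and write-back Fetcher, with refreshes interleaved at public step indices. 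Each piece is produced either by deterministic code whose loop bounds are fixed by $P$, or by a call to an ideal functionality: $\OSort$, oblivious compare/choose, or $\OHT$. In the hybrid model, the view of an ideal call is its leakage, which by assumption depends only on public sizes such as the array length, $\mathrm{numBin}$, and $\mathrm{binCap}$. $\Sim$ therefore runs the same control flow on all-zero dummy data, with dummy training inputs placed on the public non-zero support. For every ideal-functionality call it invokes that primitive's simulator on the public sizes.

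\textbf{Step 2 (per-component hybrids).} We define hybrids $H_0=\mathsf{view}_{\mathsf H}^{\pi_{\system}}(P,D)$ through $H_m=\Sim(P)$. Each hybrid switches one component instance from real to simulated. The components are handled as follows.

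(i) For MP-WTA and the Requester, the indices written to $\mathsf{ReqArray}_\ell$ are $(i-1)\mathsf{lenSeq}_\ell+t$. Both $\mathsf{lenSeq}_\ell$ and $\mathsf{PADSIZE}$ are in $P$, and the signature computation (Appendix~\ref{app:mp-wta}) uses only oblivious comparisons over the $M$ sampled indices, which are fixed by the public hash functions. The trace is therefore identical in both worlds.

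(ii) In the Fetcher, Phase~1 is an ideal $\OHT$ build and Phase~3 is an ideal $\OSort$. In Phase~2, the bins touched for bucket $b$ are $\OHT.\textsc{lookup}(\mathsf{tier},b)$, evaluated for every $b$ in public order. The $\OHT$ security guarantee of Chan et al.~\cite{chan2017oblivious} applies because each key is looked up exactly once, as noted in \S\ref{sec:neuron-fetcher}. The lookups' access pattern is then simulatable from the number of lookups alone. O1, O2 and O3 leave the multiset of addresses and the fixed schedule unchanged, and the O2 scheduler depends only on $b$ and the $\OHT$ hash keys, which are part of the simulated $\OHT$ state.

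(iii) The Updater's loops are bounded by $R_\ell=\mathsf{lenSeq}_\ell\cdot\mathsf{PADSIZE}$ and the public layer dimensions, and its conditionals use $\mathsf{OblChoose}$, so its trace is a fixed function of $P$.

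(iv) Merge and refresh consist of $\OSort$ calls followed by linear scans of fixed length, and the overflow region has fixed size. Their traces are therefore also functions of $P$.

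Each deterministic step leaves the hybrids identical, and each swap of an ideal call for its simulator is justified by the assumption on that primitive. Summing over polynomially many hybrids gives $\{\Sim(P)\}\approx_c\{\mathsf{view}_{\mathsf H}^{\pi_{\system}}(P,D)\}$.

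\textbf{Main obstacle.} The difficult step is (ii), specifically composing $\OHT$ with the surrounding code. Chan et al.'s guarantee holds only for non-recurrent lookups and a fixed number of operations, while \system keys the table by $\texttt{LSHbucketID}$ with \emph{duplicate} keys and reuses the layout during backpropagation (O3). We would handle this by treating each bin, rather than each key, as the unit of lookup. We would argue that the duplicate entries are absorbed into the fixed bin capacity, with overflow probability negligible in $\kappa$ at the public parameters, and we would add that failure event as one extra statistical hybrid. For O3, we would show that the reused layout is consumed only by a single $\OSort$ and a scan with the same schedule, so no new leakage arises beyond what the feedforward $\OHT$ simulator already produced.
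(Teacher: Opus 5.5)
Your overall plan matches the paper's: a simulator that replays the public control flow on dummy data, ideal $\OSort$/compare-select/$\OHT$ calls whose leakage is only public sizes, trace identity for the Requester, Fetcher, Updater/merge and refresh, and a hybrid chain. The paper layers that chain as primitive replacement inside each component, then the four calls of a batch step, then the public event schedule; you flatten it, which is fine.

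\textbf{The gap: your resolution of the duplicate-key obstacle.} In the paper, $\mathsf{H}_1,\mathsf{H}_2$ are part of $P$, and the guarantee is quantified over every $D$ consistent with $P$. There is then no hash randomness over which an overflow probability could be negligible in $\kappa$. The same \texttt{LSHbucketID} can be requested by many of the $B$ inputs, which is exactly what LSH does for similar inputs. Nothing prevents a $D$ consistent with $P$ from concentrating requests on buckets that collide under the fixed $\mathsf{H}_1,\mathsf{H}_2$. For such $D$, overflow is not a low-probability event over any randomness your argument controls. It either changes the trace (build failure) or, if excess records are dropped, makes the returned model differ from $\F_{\mathsf{train}}$'s. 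Neither outcome can be absorbed into a statistical hybrid.

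Suppose instead you take the OHT keys to be secret and independent of $D$, as your remark that they belong to the ``simulated $\OHT$ state'' suggests; that already departs from the paper's model. Chan et al.'s load bound is for distinct keys. You would need a weighted balls-into-bins bound in which one key contributes up to $B$ records to a single bin. That forces $\mathsf{binCap}$ above the mean load by a margin of order $B$ times a $\kappa$-dependent factor, and nothing in the public parameters establishes this. The paper states explicitly that the standard capacity analysis does not cover the duplicate-request case.

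Treating bins rather than keys as the lookup unit does not help. Non-recurrence already holds, since each $b$ is looked up once in public order; the real issue is capacity, not repeated lookups. The paper closes the point by assumption: $P$ contains a public-capacity contract requiring $\mathsf{numBin},\mathsf{binCap}$ to avoid observable overflow or failure for every request multiset of size $T$. That contract is part of what ``$\pi_{\mathsf{OHT}}$ securely realizes $\F_{\OHT}$'' means in the theorem's hypothesis. You should either invoke that hypothesis directly, or change the key model and supply the weighted load analysis.

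\textbf{Two smaller corrections.}
\begin{itemize}
\item In (iii), public loop bounds plus $\mathsf{OblChoose}$ are not sufficient on their own. The accesses $u.\mathsf{weights}[v.\mathsf{ID}]$ and $u.\mathsf{t}[v.\mathsf{ID}]$ are secret-indexed unless the layer feeding the LSH layer is dense. That is why the paper's formal statement is restricted to the two-layer architecture, and you need the same restriction. You also need $|Y_i|$ to be public for the label-matching loop.
\item In (ii), O1 and O3 do change the trace: the sort no longer carries \texttt{Neuron-Buffer}s, and the backpropagation build collapses to a single sort. The correct claim is that the modified schedule is still a function of $P$, not that the address multiset is unchanged. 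For O2, you must model the multithreaded execution as a public per-worker schedule rather than an arbitrary interleaving.
\end{itemize}
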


\begin{proof}[Proof Sketch]
The proof is modular. 
We first analyze \system in a primitive hybrid model where the low-level oblivious primitives are replaced by functionalities.
In this model, each main component of \system, namely Neuron Requester, Neuron Fetcher, Neuron Updater, and the LSH initialization and refresh routines, is simulatable from the public parameters $P$ alone.
By the security of the underlying primitives, replacing these functionalities with their real protocols preserves security.
Hence, each real component securely realizes its corresponding functionality.  
Modular composition then gives a secure batch step.
Sequential composition over the public training schedule, including initialization, batch steps, and refresh events, yields the security of $\pi_{\system}$.
The optimized Fetcher schedules O1--O3 preserve the same leakage claim because they add only public object-size, parallelization, and layout-reuse metadata.
The full proof appears in Appendix~\ref{app:proof}.
\end{proof}

\section{Evaluation}
\label{sec:evaluation}

We evaluate \system to: (1)~experimentally demonstrate the benefits of our
approach and quantify the overall speedup over a baseline SLIDE+ORAM, which
runs SLIDE~\cite{chen2020slide} ``as is'' on top of a generic ORAM
compiler~\cite{stefanov2018path} (see \S\ref{sec:eval-performance});
(2)~attribute performance gains to individual optimizations (O1--O3,
see \S\ref{sec:neuron-fetcher}) via progressive ablations and runtime breakdowns
(see \S\ref{sec:opt-eval}); and (3)~measure the accuracy of \texttt{MP-WTA}
compared to standard LSH (see \S\ref{sec:acc-eval}). The neural networks we
train have a wide layer with 30K to 325K neurons.

\textbf{Testbed.\xspace}
All experiments were conducted on one of two servers. For the performance and
ablation experiments~(1)--(2), we use a Google Cloud Confidential C3 instance
equipped with 2$\times$22-core/44-thread CPUs (Intel Sapphire Rapids) and
352~GB of RAM, running Ubuntu 24.04 LTS as a confidential VM image with Intel
TDX as the TEE, and supporting AVX, AVX2, and AVX-512 instructions. For the
accuracy evaluation~(3), we use a bare-metal x86-64 server (64-core AMD EPYC
7763 Zen 3) without a TEE, as accuracy evaluation does not require secure
execution. \system is implemented in C++ (see~\S\ref{sec:implementation}) and
compiled with GCC v13.3, with OpenMP enabled for multi-threaded parallelism.
In addition, we enable AVX2/\-FMA support when available. 
Our oblivious primitives use inline assembly that is preserved at any
optimization level. We compile without optimization flags (\texttt{-O0} under
GCC) as an additional safeguard against compiler-introduced data-dependent
branches for all tests.

\begin{table}[t]
  \centering
  \caption{Overview of Evaluation Datasets.}
  \label{tab:performance-datasets}
  \begin{tabular}{lrrrr}
    \toprule
    Dataset & Feature‑dim. & \# Labels & \#Train & \#Test \\
    \midrule

    Wiki10‑31K & 101,938 & 30,938 & 14,146 & 6,616 \\
    Amz-Sub  & 135,909 & 92,814 & 42,438 & 14,146 \\
    Wiki‑325K  & 1,617,899 & 325,056 & 1,778,351 & 587,084 \\
    
\cmidrule(lr){1-5}
    Amz‑670K   & 135,909 & 670,091 & 490,449 & 153,025 \\

    \bottomrule
  \end{tabular}
\end{table}

\textbf{Datasets.\xspace}
We use datasets from the Extreme Multi-Label Classification (XC)
repository~\cite{Bhatia16}, as established in prior work~\cite{chen2020slide}.
For large-scale evaluation, we use WikiLSHTC-325K (Wiki-325K), a
Wikipedia-derived dataset with 325K label categories. We also include
Wiki10-31K, our smallest dataset, to evaluate \system under a more constrained
label space. Finally, we construct Amazon-Sub, a curated subset of Amazon-670K,
with label cardinality falling between that of Wiki10-31K and Wiki-325K,
serving as a medium-scale dataset. Together, the three datasets enable the
evaluation of \system across small, medium, and large label regimes that require a wide layer for the classification. Key
statistics are in Table~\ref{tab:performance-datasets}.

\subsection{\rpname vs. SLIDE+ORAM}
\label{sec:eval-performance}

\textbf{Baseline.\xspace}
We compare \system against \texttt{SLIDE+ORAM}, which uses Path ORAM~\cite{stefanov2018path}
to store and access all neurons and LSH buckets used by SLIDE.
We adopt Path ORAM as our baseline because it is an extremely simple tree-based ORAM
construction, and has been widely used in secure-processor and
enclave systems both in academia~\cite{stefanov2018path,ren2013design,sasy2018zerotrace,tinoco2023enigmap} and the industry~\cite{gconnell2022oram}.
In this baseline, neurons and LSH buckets are
stored in a uniform representation on the ORAM server. Whenever SLIDE needs to
read or write a neuron or an LSH bucket, it issues a corresponding ORAM access
request through the ORAM client.

As discussed in Section~\ref{sec:double-oblivious}, the ORAM client's memory
accesses to the stash may still be observable via side-channels, even when
running inside a TEE. To avoid the leakage from the stash accesses, the \texttt{SLIDE+ORAM} baseline performs a full
linear scan over the stash on every access, avoiding data-dependent access
patterns---a standard technique proposed by
Mishra~et~al.~\cite{mishra2018oblix}. Additionally, the baseline pads the
number of neuron accesses per input to a fixed upper bound (i.e.,~the LSH
bucket size) to conceal the distribution of retrieved neurons.

We emphasize that this baseline is \emph{not} doubly-oblivious (and thus has
the potential to be faster) yet it represents a state-of-the-practice
implementation that does not consider \emph{eliminating} access-pattern
leakage.  Specifically, while it provides obliviousness for ORAM-backed storage
accesses, it does not make the SLIDE computation itself oblivious: after
retrieving neurons or buckets via the ORAM client, SLIDE's internal control
flow and memory accesses still depend on the input, leaking information about
the training data.

The baseline uses the same parameters as \system (i.e.,~neural network
architecture, LSH hash function count, and batch size), except that the number
of LSH tables is 50, whereas \system uses the newly proposed \texttt{MP-WTA} with only 1 hash
table. All components---the neural network, Path ORAM, and SLIDE---run inside
TDX.

\textbf{Parameters.\xspace}
For all three datasets, we adopt the fully connected neural network with one wide hidden layer, similarly to the work of Chen~et~al.~\cite{chen2020slide, daghaghi2021tale}. 
For the chosen learning task, the wide layer is the output layer (i.e., where LSH is applied), which contains 30K, 90K, and 325K neurons for the Wiki-10, Amz-Sub, and Wiki-325 datasets, respectively.
As for the hidden layer, we chose a width of 128 for Amazon-Sub and Wiki-325K, matching the configuration used in prior work~\cite{chen2020slide, daghaghi2021tale}.
Since Wiki10-31K is an additional dataset not evaluated in that work, we use a slightly larger hidden layer (256) to obtain a reasonable model capacity on this smaller label space while keeping the overall architecture consistent.

We fix the batch size to 32 across all datasets. This is dictated by our testbed
constraints: Wiki-325K is our largest workload, and larger batch sizes exceed
the available RAM when accounting for model state and associated data
structures. Fixing the batch size also keeps comparisons consistent across
datasets.

For LSH, we use WTA for the \texttt{SLIDE+ORAM} baseline and \texttt{MP-WTA} for \system, applied
to the output layer only, which is the primary computational
bottleneck~\cite{chen2020slide}. We select $K$ via accuracy-driven tuning over
a small set of candidates, yielding $K = 3$, $4$, and $5$ for Wiki10-31K,
Amazon-Sub, and Wiki-325K, respectively, and use these values throughout all
performance experiments.

To keep the Google Cloud infrastructure costs manageable, we benchmark a fixed window of 1,280 training
inputs per dataset and configuration, reporting mean per-batch runtime over
this window. We estimate end-to-end training time by scaling this mean by the
total number of batches in each dataset's training schedule, determined by
dataset size and epoch count. We evaluate \system with 8, 22, and 44 threads,
corresponding to real Google Cloud confidential computing configurations with
varying core counts.

\textbf{Performance Results.\xspace}
Table~\ref{tab:eval-runtime-32} compares the mean per-batch runtime of \system
against the baseline (\texttt{SLIDE+ORAM}) at batch size 32. Across all
datasets, \system achieves 1--3 orders-of-magnitude speedup, with
gains increasing with additional CPU resources, reflecting effective
parallelization of the dominant components. The baseline, by contrast, relies
on a standard Path ORAM implementation with serialized accesses, so additional
vCPUs offer limited benefit; we hence report a single baseline measurement
per dataset.

\begin{table}[t]
\centering
\caption{Per-Batch Runtime (avg. over 1,280 inputs, batch size = 32).
Speedup (SU) is relative to \texttt{SLIDE+ORAM} on the same dataset. Epoch (Ep.) time
is extrapolated from mean batch time and epoch batch count. The baseline uses a
sequential Path ORAM that does not benefit from multiple vCPUs.}
\label{tab:eval-runtime-32}

\begin{tabular}{cccccc}
\toprule
Dataset & Method & \#Thd & Batch(s) & SU & Ep.(h) \\
\midrule

\multirow{4}{*}{Wiki10}
& \texttt{SLIDE+ORAM}               & 1   & 2.6\,h   & 1.0$\times$ & 1,149   \\
\cmidrule(lr){2-6}
& \multirow{3}{*}{\system} & 8  & 681.3\,s   &  13.7$\times$ & 83.6\\
&                          & 22 & 338.8\,s   &  27.6$\times$ & 41.6\\
&                          & 44 & 248.3\,s   &  37$\times$   & 30.5\\
\midrule

\multirow{4}{*}{Amz-Sub}
& \texttt{SLIDE+ORAM}               & 1   & 16\,h   & 1.0$\times$  &   21,216\\
\cmidrule(lr){2-6}
& \multirow{3}{*}{\system} & 8  & 2,085.5\,s & 27$\times$  &  768.2\\
&                          & 22 & 1,007.8\,s  & 57.1$\times$ &  371.2 \\
&                          & 44 & 695.7\,s  & 82.7$\times$  & 256.5\\
\midrule

\multirow{4}{*}{Wiki325}
& \texttt{SLIDE+ORAM}               & 1   &  208\,h & 1.0 $\times$   & 1.15x$10^7$\\
\cmidrule(lr){2-6}
& \multirow{3}{*}{\system} & 8  & 4,717.8\,s & 157.2$\times$ &  72,828\\
&                          & 22 & 2,255.4\,s & 328.8$\times$ &  34,816  \\
&                          & 44 & 1,578.3\,s & 469.9$\times$ &  24,364\\

\bottomrule
\end{tabular}
\end{table}

On Wiki10-31K, the baseline takes 2.6\,h, while \system reduces runtime to
681\,s, 338\,s, and 248\,s with 8, 22, and 44 vCPUs, corresponding to speedups
of 13.7$\times$, 27.6$\times$, and 37$\times$, respectively. On Amazon-Sub,
\system cuts the baseline runtime of 16\,h down to 2,085\,s, 1,007\,s, and
695\,s (8, 22, and 44 vCPUs), achieving speedups of 27$\times$--82.7$\times$.
Although scaling is sublinear due to residual serial work and contention on
shared, memory-intensive data structures, the absolute reduction in wall-clock
time remains substantial across all core counts. On the larger Wiki-325K
workload, the baseline requires 208\,h, whereas \system completes the same run
in 4,717\,s, 2,255\,s, and 1,578\,s with 8, 22, and 44 vCPUs, yielding speedups
of 157.2$\times$, 328.8$\times$, and 469.9$\times$.

\emph{These results demonstrate that \system makes training more feasible for
substantially larger XC problems~\cite{Bhatia16} where a Path ORAM-based approach becomes
prohibitively slow.}

\begin{figure}[tb]
  \centering
  \begin{subfigure}[b]{0.49\columnwidth}
    \centering
    \includegraphics[width=\columnwidth]{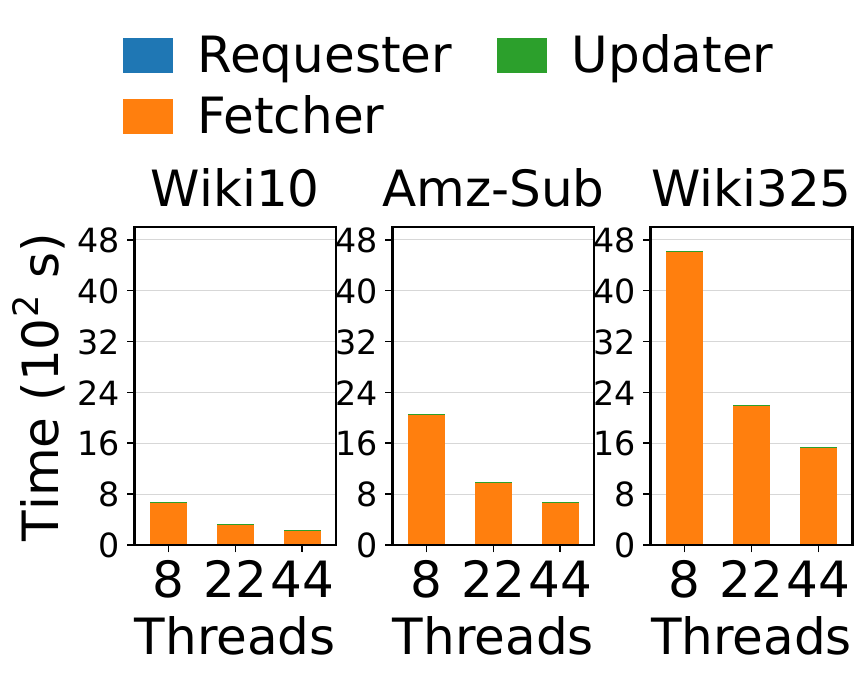}
    \caption{Total}
    \label{fig:32-sub-a}
  \end{subfigure}\hfill
  \begin{subfigure}[b]{0.49\columnwidth}
    \centering
    \includegraphics[width=\columnwidth]{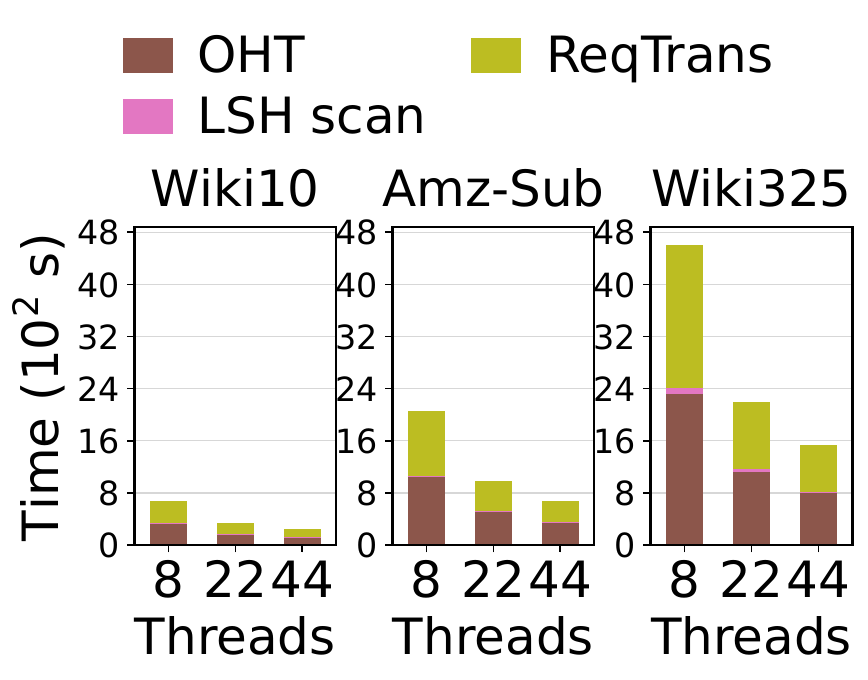}
    \caption{Fetcher}
    \label{fig:32-sub-b}
  \end{subfigure}
  \caption{Breakdown of Runtime (Total and Fetcher) across datasets and thread counts with batch size = 32.}
  \label{fig:breakdown-bs32}
\end{figure}

Following Table~\ref{tab:eval-runtime-32}, we break down the mean per-batch
runtime into major components to identify where time is spent.
Figure~\ref{fig:breakdown-bs32} reports this component-level breakdown under
the same measurement window. As shown in Figure~\ref{fig:32-sub-a}, the Neuron
Fetcher dominates runtime, while the Neuron Requester, Neuron Updater, and
remaining operations contribute negligible overhead, barely registering in the
figure. 
Motivated by this, Figure~\ref{fig:32-sub-b} further decomposes the Neuron
Fetcher into its internal stages. OHT construction and OHT-to-request
conversion each account for nearly half of the Fetcher time, as both are
dominated by computationally intensive oblivious sorting. The LSH scanning
stage, by contrast, contributes only a small fraction of the Fetcher runtime.
Overall, this breakdown pinpoints the post-optimization bottlenecks and
motivates the progressive ablation analysis in Section~\ref{sec:opt-eval}.

\begin{table}[tb]
\centering
\caption{Runtime on Wiki10-31K and Amazon-Sub (batch size = 64). 
Wiki-325K is omitted due to RAM constraints.}
\label{tab:eval-runtime-64}
\begin{tabular}{cccccc}
\toprule
Dataset & Method & \#Thd & Batch(s) & SU & Ep.(h) \\
\midrule

\multirow{4}{*}{Wiki10}
& \texttt{SLIDE+ORAM}               & 1   & 5.2\,h   & 1.0$\times$   & 1,149 \\
\cmidrule(lr){2-6}
& \multirow{3}{*}{\system} & 8  & 1,655.8\,s  &  11.3$\times$ & 101.6 \\
&                          & 22 & 795.9\,s   &  23.5$\times$ & 48.8 \\
&                          & 44 & 565.4\,s   &  33.1$\times$ & 34.7  \\
\midrule

\multirow{4}{*}{Amz-Sub}
& \texttt{SLIDE+ORAM}               & 1   & 32\,h   & 1.0$\times$  &  21,216 \\
\cmidrule(lr){2-6}
& \multirow{3}{*}{\system} & 8  & 11,496.9\,s  & 10$\times$  &  2,117.3\\
&                          & 22 & 5,185.5\,s  & 22.2$\times$  &  954.9\\
&                          & 44 & 3,514.4\,s  & 32.7$\times$ &  647.2\\

\bottomrule
\end{tabular}
\end{table}

\textbf{Batch-Size Sensitivity.\xspace}
We additionally evaluate a larger batch size of 64 on Wiki10-31K and Amazon-Sub
(Table~\ref{tab:eval-runtime-64}); Wiki-325K is omitted due to memory
constraints on our evaluation testbed. While \system continues to deliver
substantial speedups (10$\times$--33.1$\times$), the gains are smaller than at
batch size 32. As shown in Figure~\ref{fig:breakdown-bs64}, this trend is
consistent with the breakdown in Figure~\ref{fig:breakdown-bs32}: the Neuron
Fetcher is dominated by sort-heavy stages (OHT construction and OHT-to-request
conversion) whose cost grows superlinearly with the number of items processed
and becomes increasingly memory-bound at larger batch sizes. Hence, increasing
the batch size amplifies the overhead of these stages and reduces parallel
efficiency.

\begin{figure}[tb]
  \centering
  \begin{subfigure}[b]{0.49\columnwidth}
    \centering
    \includegraphics[width=\columnwidth]{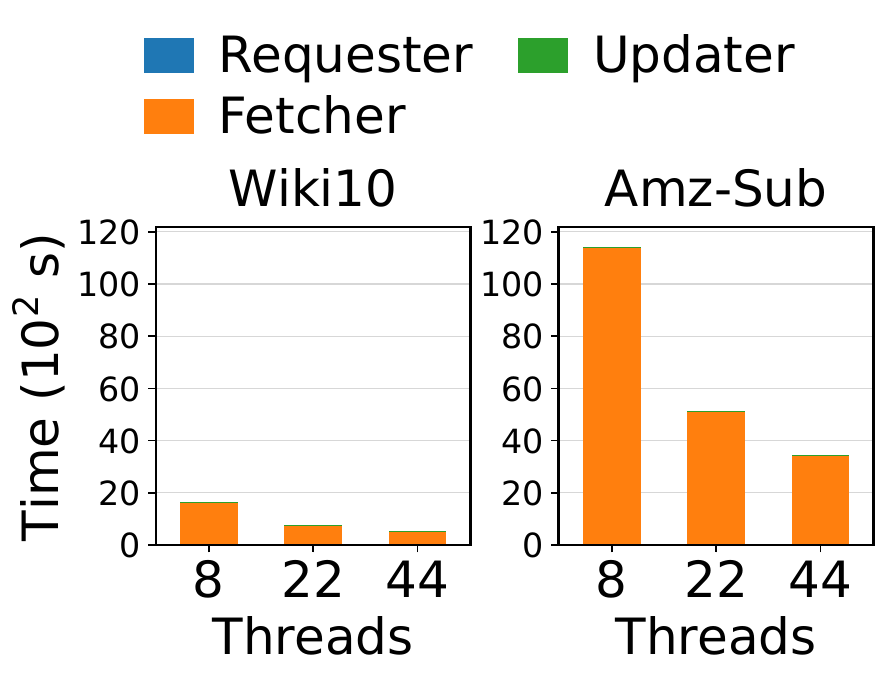}
    \caption{Total}
    \label{fig:64-sub-a}
  \end{subfigure}\hfill
  \begin{subfigure}[b]{0.49\columnwidth}
    \centering
    \includegraphics[width=\columnwidth]{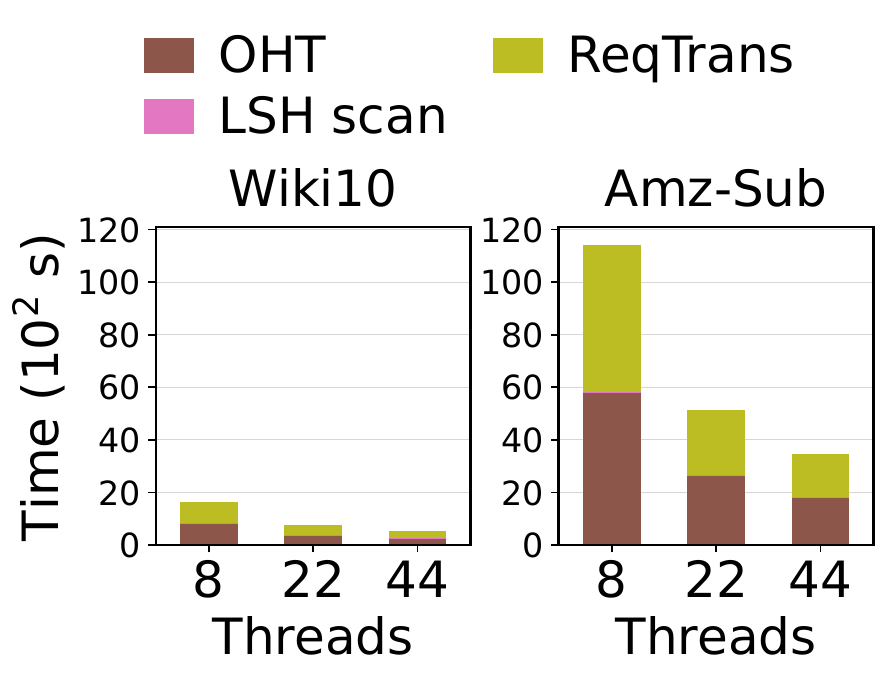}
    \caption{Fetcher}
    \label{fig:64-sub-b}
  \end{subfigure}
  \caption{Breakdown of Runtime (Total and Fetcher) across datasets and thread counts with batch size = 64.}
  \label{fig:breakdown-bs64}
\end{figure}

Training with a large effective batch size ($B_{\text{eff}}$) is nonetheless
achievable using small micro-batches via gradient
accumulation~\cite{goyal2017accurate, huang2019gpipe, narayanan2021memory}: $B_{\text{eff}} =
B_{\text{micro}} \times \textit{accum\_steps}$. Based on our evaluation,
\system can fix $B_{\text{micro}} = 32$ to maximize hardware efficiency and
avoid the superlinear memory and computational overheads associated with larger
micro-batches. Although we do not
explicitly evaluate gradient accumulation in this work, it provides a
straightforward mechanism to scale to the effective batch sizes required for
convergence without increasing per-step system cost.

\noindent\textbf{Memory Scaling.}
The OHT is the dominant batch-dependent component in peak memory, and it grows \emph{quadratically} with the batch size $B$. 
Specifically, the OHT number of bins has a linear dependence on the total number of requests, i.e., $\mathcal{O}(B\cdot\texttt{lenSeq})$, while the OHT per-bin capacity grows approximately linearly with $\mathcal{O}(B\cdot\texttt{PADSIZE})$. 
Given that the OHT memory footprint is bins $\times$ per-bin capacity, for a fixed model configuration, $M_{\mathsf{peak}}\approx M_{\mathsf{fixed}}+\mathcal{O}(B^2\cdot\texttt{lenSeq}\cdot\texttt{PADSIZE})$.
Appendix~\ref{app:memory} reports the measured peak memory and provides a detailed analysis.

\subsection{Optimization Ablations}
\label{sec:opt-eval}

We evaluate the performance impact of each optimization described
in Section~\ref{opt:neuronf-fetcher} using an ablation study. Unless stated
otherwise, \emph{All-Opt} enables all optimizations, and each ablated variant
disables exactly one optimization while keeping the rest enabled.

\begin{figure}[tb]
  \centering
  \begin{subfigure}[b]{0.49\columnwidth}
    \centering
    \includegraphics[width=\columnwidth]{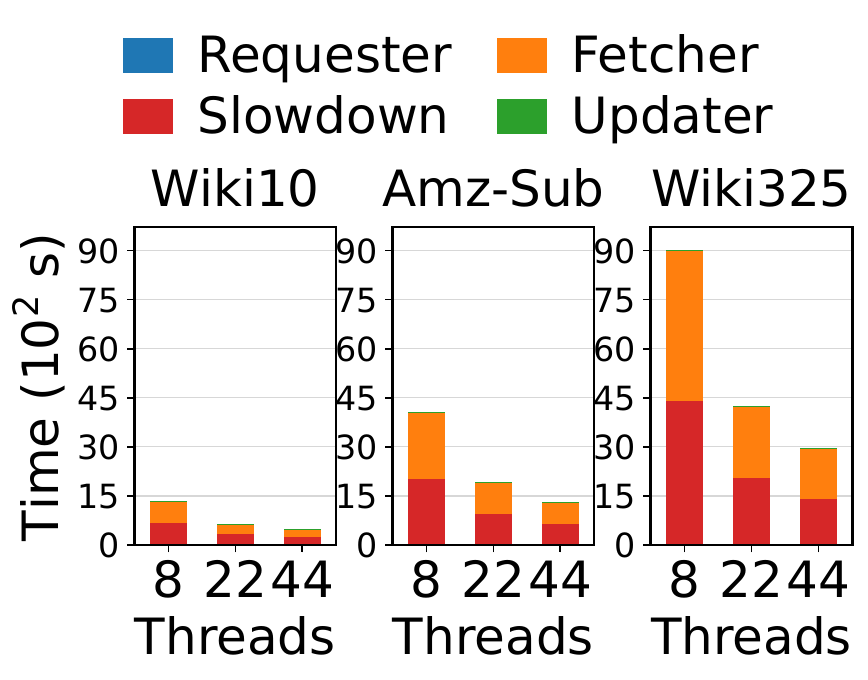}
    \caption{Total}
    \label{fig:opt-pad-32-sub-a}
  \end{subfigure}\hfill
  \begin{subfigure}[b]{0.49\columnwidth}
    \centering
    \includegraphics[width=\columnwidth]{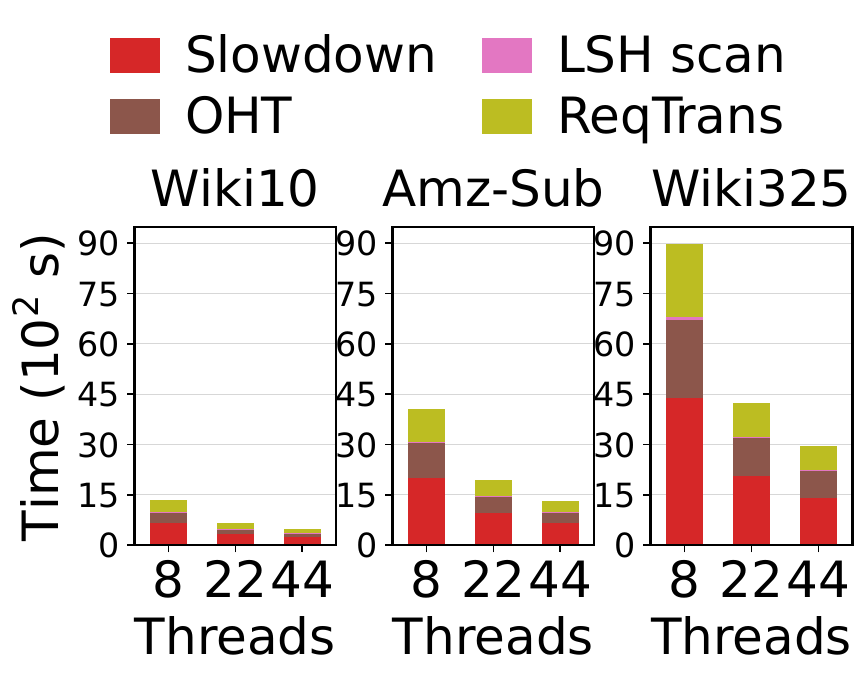}
    \caption{Fetcher}
    \label{fig:opt-pad-32-sub-b}
  \end{subfigure}
  \caption{Breakdown of Runtime (Total and Fetcher) without O1 (Deferred Dummy Payload) across datasets and thread counts with batch size = 32. (\emph{NO-Opt-Payload-Deferral})}
  \label{fig:opt-pad-breakdown-bs32}
\end{figure}

\textbf{Impact of O1 (Deferred Dummy Payload).\xspace}
\emph{NO-Opt-Payload\-Deferral} disables optimization O1 (i.e.,~deferred dummy
payload). Figure~\ref{fig:opt-pad-breakdown-bs32} shows the runtime breakdown,
with the red segment representing the slowdown relative to \emph{All-Opt}.
Materializing dummy neuron payloads eagerly during \texttt{OHT} construction
doubles the runtime, with the overhead concentrated in the \texttt{OHT}
building stage.

\begin{figure}[tb]
  \centering
  \begin{subfigure}[b]{0.49\columnwidth}
    \centering
    \includegraphics[width=\columnwidth]{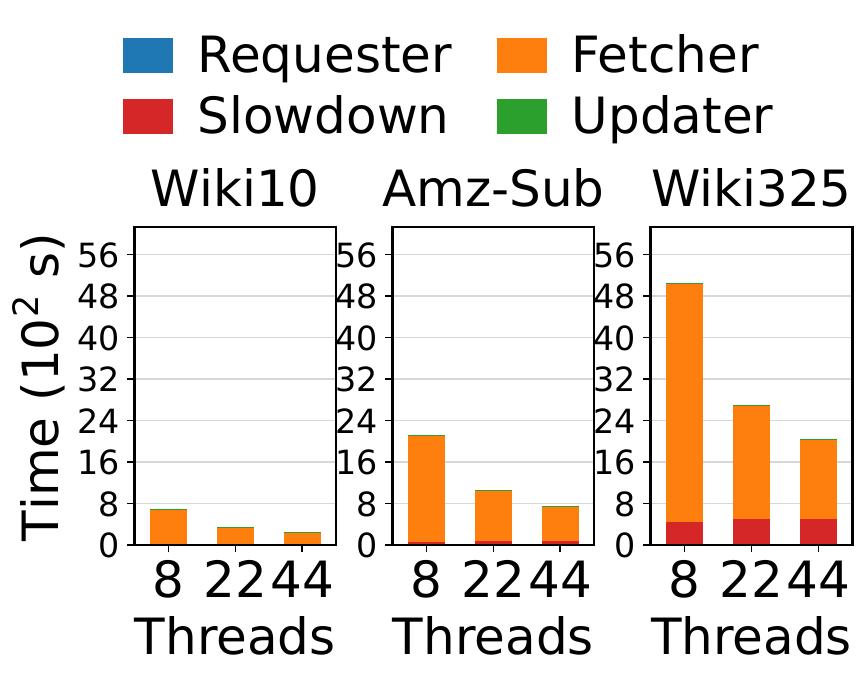}
    \caption{Total}
    \label{fig:opt-parallel-32-sub-a}
  \end{subfigure}\hfill
  \begin{subfigure}[b]{0.49\columnwidth}
    \centering
    \includegraphics[width=\columnwidth]{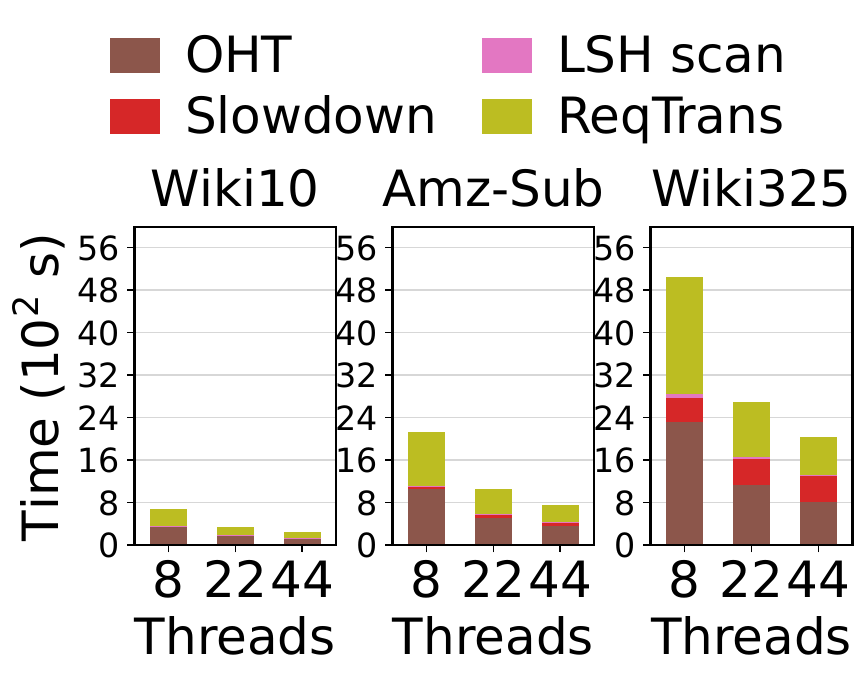}
    \caption{Fetcher}
    \label{fig:opt-parallel-32-sub-b}
  \end{subfigure}
  \caption{Breakdown of Runtime (Total and Fetcher) without O2 (Oblivious Parallelization) across datasets and thread counts with batch size = 32. (\emph{NO-Opt-ParallelScan})}
  \label{fig:opt-parallel-breakdown-bs32}
\end{figure}

\textbf{Impact of O2 (Oblivious Parallelization).\xspace}
\emph{NO-Opt-ParallelScan} disables optimization O2 (i.e.,~oblivious
parallelization), reverting the Neuron Fetcher to a sequential LSH linear scan.
Figure~\ref{fig:opt-parallel-breakdown-bs32} shows the runtime breakdown, with
the red segment indicating the overhead introduced by disabling this
optimization. As the dataset size increases, both the LSH tables and the
intermediate \texttt{OHT} grow, amplifying the cost of sequential scanning.
Furthermore, as more threads are provisioned, the sequential scan becomes an
increasingly dominant fraction of end-to-end runtime, since the remaining
components parallelize well while the scan itself does not.

\begin{figure}[tb]
  \centering
  \begin{subfigure}[b]{0.49\columnwidth}
    \centering
    \includegraphics[width=\columnwidth]{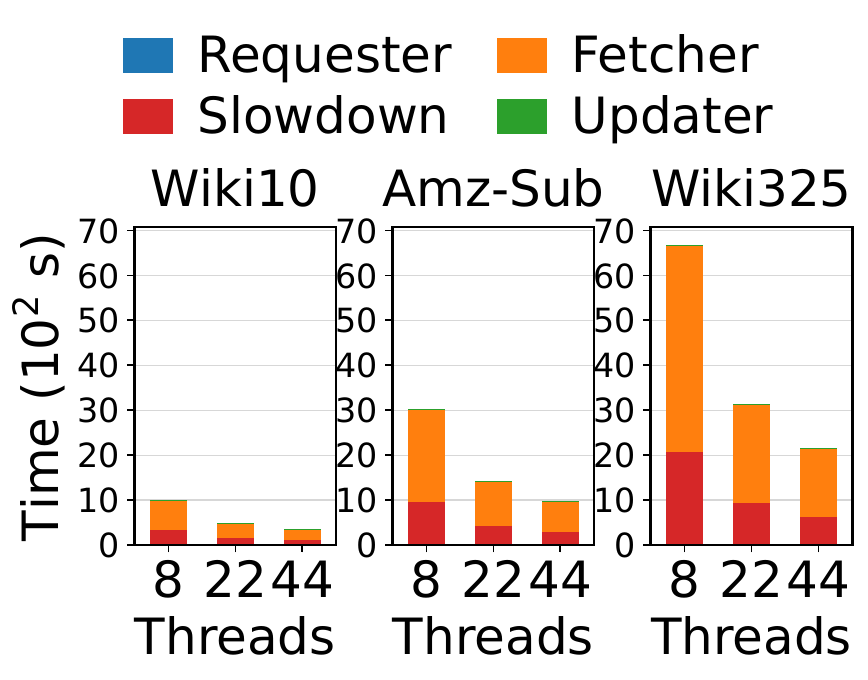}
    \caption{Total}
    \label{fig:opt-reuse-32-sub-a}
  \end{subfigure}\hfill
  \begin{subfigure}[b]{0.49\columnwidth}
    \centering
    \includegraphics[width=\columnwidth]{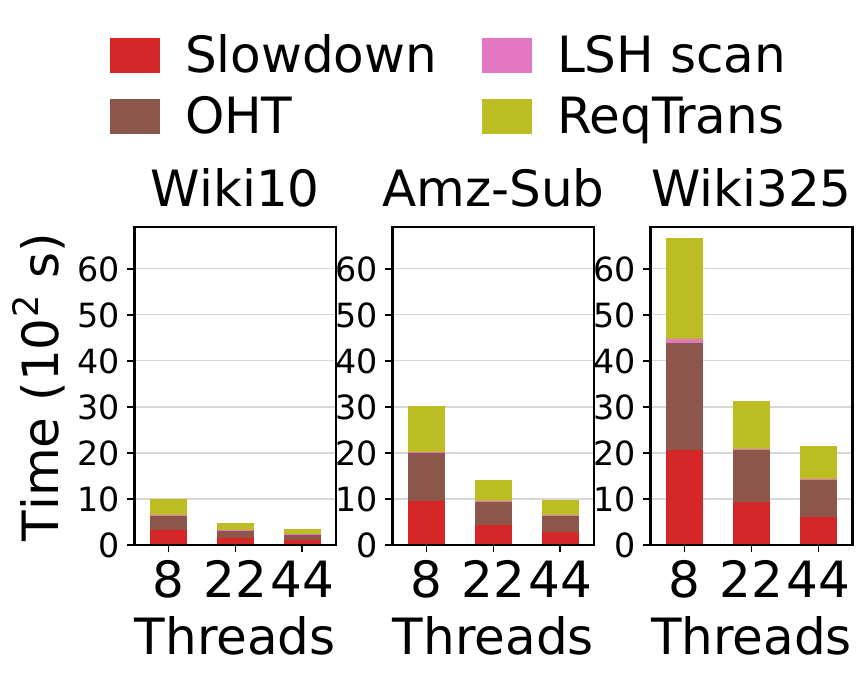}
    \caption{Fetcher}
    \label{fig:opt-reuse-32-sub-b}
  \end{subfigure}
  \caption{Breakdown of Runtime (Total and Fetcher) without O3 (OHT Layout Reuse) across datasets and thread counts with batch size = 32. (\emph{NO-Opt-OHTReuse})}
  \label{fig:opt-reuse-breakdown-bs32}
\end{figure}

\textbf{Impact of O3 (OHT Layout Reuse).\xspace}
\emph{NO-Opt-OHTReuse} disables optimization O3 (i.e.,~OHT layout reuse),
rebuilding the backpropagation \texttt{OHT} from scratch using the standard
construction procedure. Figure~\ref{fig:opt-reuse-breakdown-bs32} shows the
runtime breakdown, with the red segment indicating the overhead relative to
\emph{All-Opt}. Disabling layout reuse incurs additional oblivious sorting and
placement overhead, nearly doubling the \texttt{OHT} build time,

\subsection{Accuracy of MP-WTA}
\label{sec:acc-eval}

We evaluate predictive accuracy at the algorithm level, comparing
\texttt{MP-WTA} against standard LSH under identical training hyperparameters.
Accordingly, we report accuracy using the baseline \texttt{SLIDE} implementation and its
\texttt{MP-WTA} variant without \system's secure-execution mechanisms
(TEE/\-oblivious primitives), as our system transformations are
semantics-preserving and affect only memory-access patterns. Thus, any accuracy
differences observed here are attributable solely to the hashing scheme.
Models range from 10M to over 100M parameters,
with computation dominated by the final
classification layer, isolating the effect of replacing standard LSH with
\texttt{MP-WTA} on predictive performance across XC tasks.

\textbf{Setup.\xspace}
We isolate the effect of hashing by training two otherwise-identical models 
per dataset, sharing the same architecture, optimizer, and training schedule, 
and differing only in the final-layer hash function (WTA~vs.~MP-WTA). Unless 
stated otherwise, we use the same model configuration and hashing settings as 
in the performance evaluation (\S\ref{sec:eval-performance}), with batch size 
as the only exception. Hashing is applied \emph{only} to the final 
classification layer.

\textbf{Hashing.\xspace}
We adopt the dataset-specific $K$ from prior SLIDE work for Amazon-670K and 
Wiki-325K~\cite{daghaghi2021tale}, and use the same values in our performance evaluation. 
For Wiki10-31K and Amazon-Sub, we select $K$ via a lightweight sweep and fix it
for both WTA and MP-WTA. Concretely, $K = 3, 4, 5, 6$ for Wiki10-31K, Amazon-Sub, Wiki-325K, and Amazon-670K, respectively. We cap WTA at $L = 50$
tables and set MP-WTA to a single table ($L = 1$), reflecting our
compute/\-memory budget and testing whether MP-WTA retains accuracy with far
fewer tables. Unless stated otherwise, MP-WTA uses multi-probe parameters $r =
3$, $N = 3$.

\textbf{Hyperparameters.\xspace}
Each dataset uses a fully connected network with a single hidden layer, 
trained with Adam (learning rate $10^{-4}$)~\cite{kingma2014adam}.
Batch size is 128 for Amazon-670K, Wiki-325K, and Amazon-Sub, and 32 for
Wiki10-31K.

\begin{figure}[t]
  \centering

  \begin{subfigure}[b]{0.49\linewidth}
    \centering
    \includegraphics[width=\linewidth]{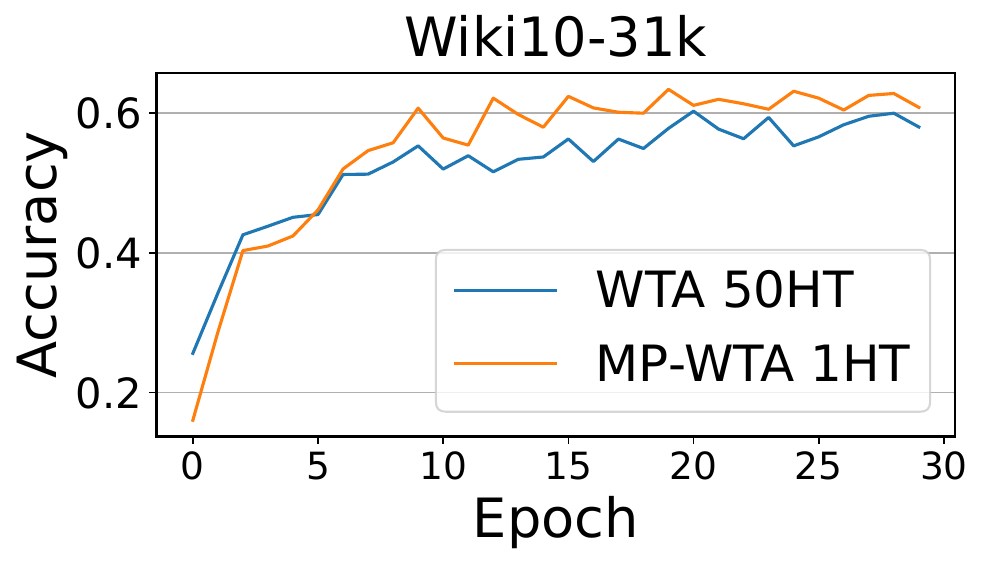}
    \caption{Wiki10‑31K}
    \label{fig:wiki10}
  \end{subfigure}
  \hfill
  \begin{subfigure}[b]{0.49\linewidth}
    \centering
    \includegraphics[width=\linewidth]{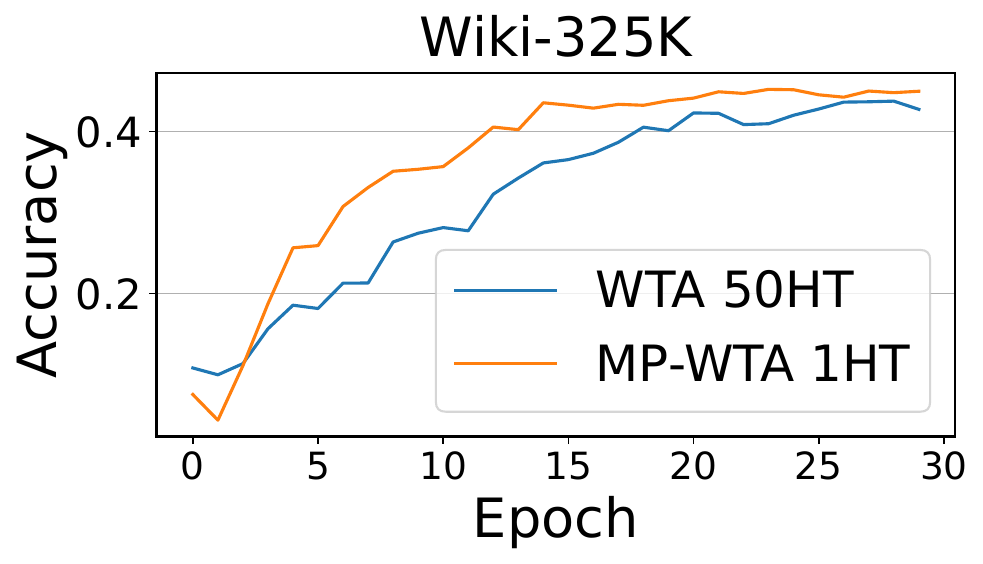}
    \caption{Wiki-325K}
    \label{fig:wiki325}
  \end{subfigure}

  \begin{subfigure}[b]{0.49\linewidth}
    \centering
    \includegraphics[width=\linewidth]{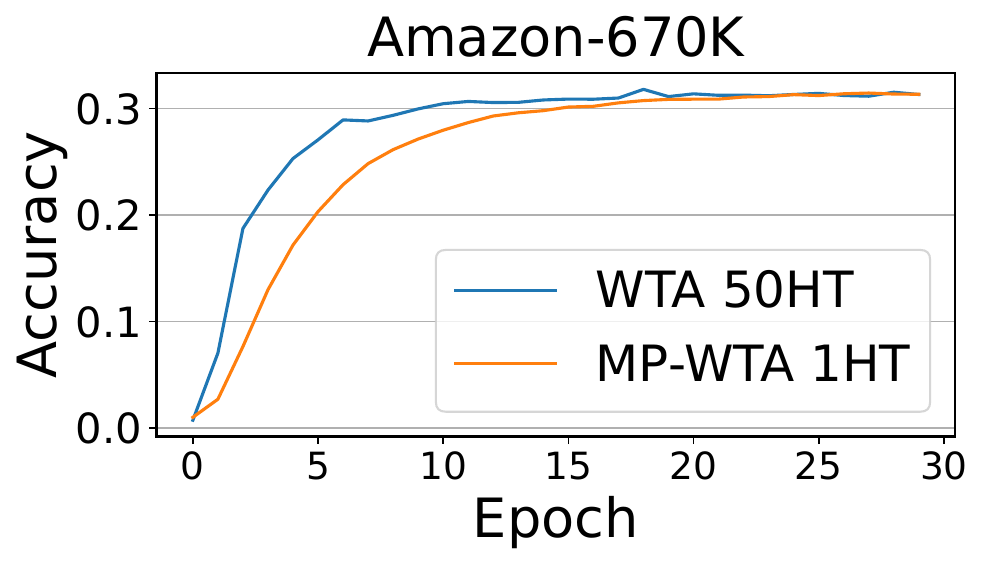}
    \caption{Amazon‑670K}
    \label{fig:amazon670}
  \end{subfigure}
  \hfill
  \begin{subfigure}[b]{0.49\linewidth}
    \centering
    \includegraphics[width=\linewidth]{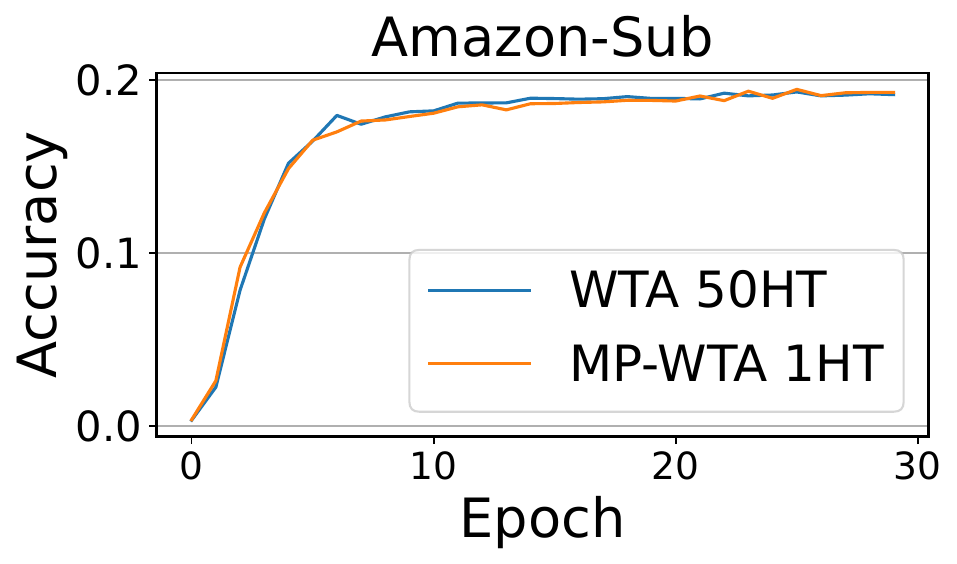}
    \caption{Amazon-Sub}
    \label{fig:amazonprime}
  \end{subfigure}

  \caption{Epoch-wise classification accuracy comparing WTA ($L{=}50$) and MP-WTA ($r{=}3, N{=}3$, $L{=}1$) across all four datasets. All methods use window size $M{=}8$.}
  \label{fig:mpwta}
\end{figure}

\textbf{Results.\xspace}
We report Precision at 1 (P@1)---i.e.,~the fraction of test samples whose top-predicted label matches a ground-truth label, following prior work~\cite{chen2020slide, daghaghi2021tale}. 
The results are averaged over 5 independent runs.
As shown in Figure \ref{fig:mpwta}, across all four datasets, the MP–WTA ($r=3,
N=3$) (1HT) curve closely tracks the WTA (50HT) baseline over training epochs.
The gap narrows quickly in early epochs and remains small thereafter; in
several cases the two curves become nearly indistinguishable by around ten
epochs.  With the window size held fixed, multi-probe substitutes for table
multiplicity, yielding comparable accuracy while requiring roughly $50\times$
fewer hash tables and proportionally less hash-table memory.

\begin{figure}[t]
  \centering

  \begin{subfigure}[b]{0.49\linewidth}
    \centering
    \includegraphics[width=\linewidth]{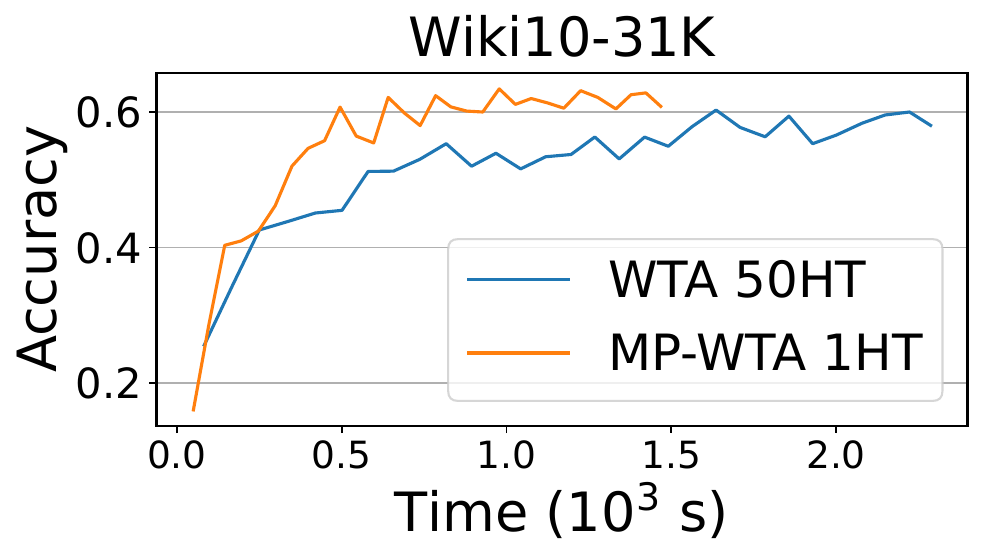}
    \caption{Wiki10‑31K}
    \label{fig:wiki10-t}
  \end{subfigure}
  \hfill
  \begin{subfigure}[b]{0.49\linewidth}
    \centering
    \includegraphics[width=\linewidth]{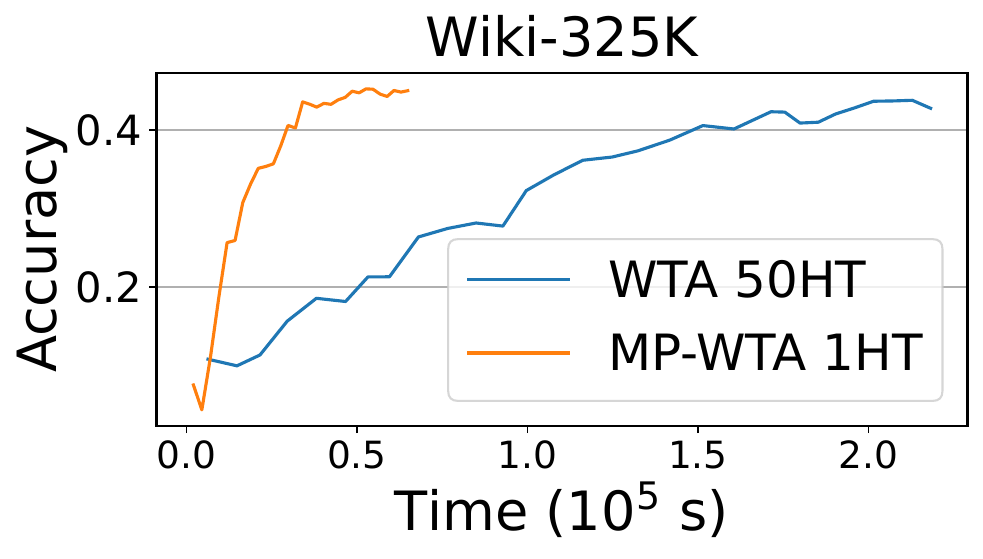}
    \caption{Wiki-325K}
    \label{fig:wiki325-t}
  \end{subfigure}

  \begin{subfigure}[b]{0.49\linewidth}
    \centering
    \includegraphics[width=\linewidth]{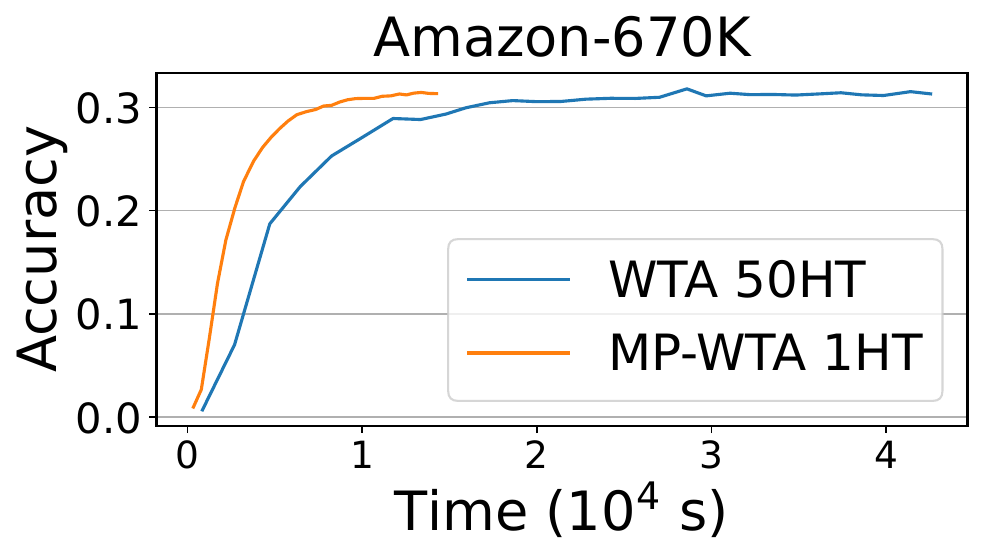}
    \caption{Amazon‑670K}
    \label{fig:amazon670-t}
  \end{subfigure}
  \hfill
  \begin{subfigure}[b]{0.49\linewidth}
    \centering
    \includegraphics[width=\linewidth]{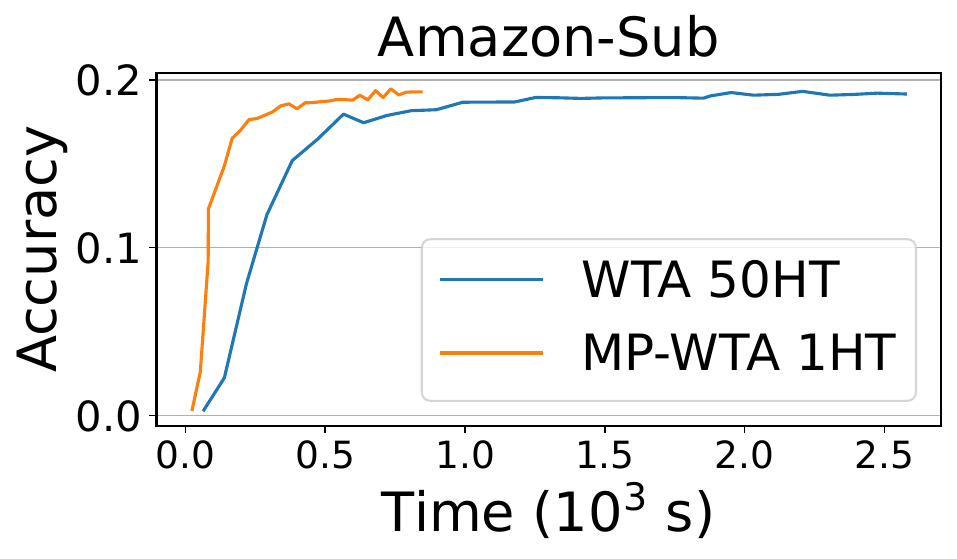}
    \caption{Amazon-Sub}
    \label{fig:amazonprime-t}
  \end{subfigure}

  \caption{Training time vs. classification accuracy comparing WTA ($L{=}50$) and MP-WTA ($r{=}3, N{=}3$, $L{=}1$) across all four datasets. All methods use window size $M{=}8$.}
  \label{fig:mpwta-time}
\end{figure}

We also show the time vs. accuracy results.  Figure \ref{fig:mpwta-time}
uses the same runs as Figure \ref{fig:mpwta}, with epochs converted to
wall-clock time. \emph{Compared with the WTA, MP-WTA not only reduces storage
requirements but also accelerates training time by using a single LSH
table.}

\section{Related Work}
\label{sec:related-work}

\noindent\textbf{TEEs for ML.\xspace}
A substantial body of work leverages TEEs to protect models and inputs during
ML training and deployment, typically via enclave-based ML services, model
offloading, obfuscation, or partitioning across TEE/\-GPU
configurations~\cite{hunt2018chiron,tramer2018slalom,ng2021goten,niu20213legrace,bai2025phantom}.
TEESlice~\cite{li2025teeslice} further demonstrates that when adversaries have
access to public pre-trained models, such inference-time protections may fail.
However, these methods primarily target model/input protection or
\emph{inference-time threats}, and do not address end-to-end
\emph{training-time leakage} through data-dependent access patterns.

\noindent\textbf{Obliviousness for Learning.\xspace}
Prior studies have shown that enclave isolation alone is insufficient when
attackers can observe memory access patterns. In response, researchers
introduced algorithm-specific oblivious components for select learning
tasks~\cite{ohrimenko2016oblivious}, including federated aggregation with
sparsification~\cite{kato2023olive}. Sparse and embedding-based models are
especially vulnerable because their access patterns can reveal private
features~\cite{hashemi2022data}. \system treats \emph{training-time
access-pattern leakage} as a core threat and co-designs a dynamic
sparse-training pipeline with integrated oblivious primitives, rather than
safeguarding a single algorithm or aggregation step.

\noindent\textbf{Oblivious Primitives and Operators.\xspace}
ORAM, O$_2$\-RAM, and oblivious data structures provide foundational
access-hiding
primitives~\cite{stefanov2018path,mishra2018oblix,tinoco2023enigmap,asharov2023futorama,zheng2025h2o2ram}.
LAORAM~\cite{rajat2023laoram}, in contrast, optimizes access locality in
embedding-table training by grouping anticipated future accesses. 
Recent oblivious sorting, shuffling, and TEE-based parallel relational
operators further improve the scalability of sorting- and operator-heavy
routines~\cite{gu2025flexway,ngai2024distributed,asharov2020bucket,mavrogiannakis2025obliviator}.
These works complement \system at the primitive/operator level, while \system presents a \emph{system-level solution} for dynamic, sparse training.

\noindent\textbf{Cryptographic PPML (MPC/HE).\xspace}
Approaches based on secure multiparty computation (MPC) and homomorphic encryption (HE) provide strong privacy guarantees without relying on trusted hardware, but tend to incur high computational costs, especially at scale~\cite{mohassel2017secureml,mohassel2018aby3,wagh2019securenn,wagh2021falcon,juvekar2018gazelle,rathee2020cryptflow2}. 
\system instead leverages TEEs for isolation and incorporates data-oblivious primitives, occupying a more practical point in the design space that balances privacy and performance for real-world training tasks.

\section{Conclusion}
\label{sec:conc}

We presented \system, a CPU-friendly training system that enables the training
of wide (and potentially sparse) neural networks within TEEs, while eliminating
value-dependent NN
access-pattern leakage across the whole memory hierarchy, including both
enclave and external memory,
beyond the publicly declared input support. 
\system achieves this by co-designing the training
pipeline with double-oblivious building blocks. It combines ORAM-backed storage
with oblivious algorithms and data structures that make the memory behavior of
each training iteration independent of the underlying data and model state. A
key lesson from \system is that practicality hinges on system-level
co-optimization. Naively wrapping training with a baseline ORAM incurs
prohibitive overhead, whereas exploiting the structure of training can
substantially reduce redundant work and bring overhead to a more manageable
level. Our evaluation demonstrates that \system significantly outperforms
standard Path ORAM-based baselines. \system provides stable performance across datasets and thread configurations, while preserving the
intended leakage profile.
Looking forward, \system is well-positioned to benefit from advances in
TEE-friendly oblivious sorting and next-generation DORAM primitives, which
can further accelerate the sort-heavy components that dominate today’s costs.

\bibliographystyle{ACM-Reference-Format}
\bibliography{main}

@inproceedings{he2016deep,
  title     = {Deep Residual Learning for Image Recognition},
  author    = {He, Kaiming and Zhang, Xiangyu and Ren, Shaoqing and Sun, Jian},
  booktitle = {Proc. of the IEEE Conference on Computer Vision and Pattern Recognition},
  pages     = {770--778},
  year      = {2016}
}

@inproceedings{vaswani2017attention,
  title     = {Attention Is All You Need},
  author    = {Vaswani, Ashish and Shazeer, Noam and Parmar, Niki and Uszkoreit, Jakob and Jones, Llion and Gomez, Aidan N. and Kaiser, {\L}ukasz and Polosukhin, Illia},
  booktitle = {Advances in Neural Information Processing Systems},
  pages     = {5998--6008},
  year      = {2017}
}

@misc{naumov2019deep,
  title         = {Deep Learning Recommendation Model for Personalization and Recommendation Systems},
  author        = {Naumov, Maxim and Mudigere, Dheevatsa and Shi, Hao-Jun Michael and Huang, Jianyu and Sundaraman, Narayanan and Park, Jongsoo and Wang, Xiaodong and Gupta, Udit and Wu, Carole-Jean and Azzolini, Alisson G. and others},
  year          = {2019},
  eprint        = {1906.00091},
  archivePrefix = {arXiv},
}

@inproceedings{covington2016deep,
  title     = {Deep Neural Networks for {YouTube} Recommendations},
  author    = {Covington, Paul and Adams, Jay and Sargin, Emre},
  booktitle = {Proc. of the ACM Conference on Recommender Systems},
  pages     = {191--198},
  year      = {2016}
}

@inproceedings{babbar2017dismec,
  title     = {{DiSMEC}: Distributed Sparse Machines for Extreme Multi-label Classification},
  author    = {Babbar, Rohit and Sch{\"o}lkopf, Bernhard},
  booktitle = {Proc. of the ACM International Conference on Web Search and Data Mining},
  pages     = {721--729},
  year      = {2017}
}

@article{yu2022pecos,
  title   = {{PECOS}: Prediction for Enormous and Correlated Output Spaces},
  author  = {Yu, Hsiang-Fu and Zhong, Kai and Zhang, Jiong and Chang, Wei-Cheng and Dhillon, Inderjit S.},
  journal = {Journal of Machine Learning Research},
  volume  = {23},
  number  = {98},
  pages   = {1--32},
  year    = {2022}
}

@inproceedings{medini2019extreme,
  title     = {Extreme Classification in Log Memory Using Count-Min Sketch: A Case Study of {Amazon} Search with 50{M} Products},
  author    = {Medini, Tharun Kumar Reddy and Huang, Qixuan and Wang, Yiqiu and Mohan, Vijai and Shrivastava, Anshumali},
  booktitle = {Advances in Neural Information Processing Systems},
  pages     = {13244--13254},
  year      = {2019}
}

@inproceedings{schultheis2023towards,
  title     = {Towards Memory-Efficient Training for Extremely Large Output Spaces: Learning with 670k Labels on a Single Commodity {GPU}},
  author    = {Schultheis, Erik and Babbar, Rohit},
  booktitle = {Proc. of the Joint European Conference on Machine Learning and Knowledge Discovery in Databases},
  pages     = {689--704},
  year      = {2023}
}

@article{hoefler2021sparsity,
  title   = {Sparsity in Deep Learning: Pruning and Growth for Efficient Inference and Training in Neural Networks},
  author  = {Hoefler, Torsten and Alistarh, Dan and Ben-Nun, Tal and Dryden, Nikoli and Peste, Alexandra},
  journal = {Journal of Machine Learning Research},
  volume  = {22},
  number  = {241},
  pages   = {1--124},
  year    = {2021}
}

@inproceedings{han2015learning,
  title     = {Learning Both Weights and Connections for Efficient Neural Networks},
  author    = {Han, Song and Pool, Jeff and Tran, John and Dally, William J.},
  booktitle = {Advances in Neural Information Processing Systems},
  pages     = {1135--1143},
  year      = {2015}
}

@inproceedings{frankle2018lottery,
  title     = {The Lottery Ticket Hypothesis: Finding Sparse, Trainable Neural Networks},
  author    = {Frankle, Jonathan and Carbin, Michael},
  booktitle = {Proc. of the International Conference on Learning Representations},
  year      = {2019}
}

@misc{bengio2015conditional,
  title         = {Conditional Computation in Neural Networks for Faster Models},
  author        = {Bengio, Emmanuel and Bacon, Pierre-Luc and Pineau, Joelle and Precup, Doina},
  year          = {2015},
  eprint        = {1511.06297},
  archivePrefix = {arXiv},
}

@inproceedings{shazeer2017outrageously,
  title     = {Outrageously Large Neural Networks: The Sparsely-Gated Mixture-of-Experts Layer},
  author    = {Shazeer, Noam and Mirhoseini, Azalia and Maziarz, Krzysztof and Davis, Andy and Le, Quoc and Hinton, Geoffrey and Dean, Jeff},
  booktitle = {Proc. of the International Conference on Learning Representations},
  year      = {2017}
}

@inproceedings{spring2017scalable,
  title     = {Scalable and Sustainable Deep Learning via Randomized Hashing},
  author    = {Spring, Ryan and Shrivastava, Anshumali},
  booktitle = {Proc. of the ACM SIGKDD Conference on Knowledge Discovery and Data Mining},
  pages     = {445--454},
  year      = {2017}
}

@article{fedus2022switch,
  title   = {Switch Transformers: Scaling to Trillion Parameter Models with Simple and Efficient Sparsity},
  author  = {Fedus, William and Zoph, Barret and Shazeer, Noam},
  journal = {Journal of Machine Learning Research},
  volume  = {23},
  number  = {120},
  pages   = {1--39},
  year    = {2022}
}

@inproceedings{chen2020mongoose,
  title     = {{MONGOOSE}: A Learnable {LSH} Framework for Efficient Neural Network Training},
  author    = {Chen, Beidi and Liu, Zichang and Peng, Binghui and Xu, Zhaozhuo and Li, Jonathan Lingjie and Dao, Tri and Song, Zhao and Shrivastava, Anshumali and R{\'e}, Christopher},
  booktitle = {Proc. of the International Conference on Learning Representations},
  year      = {2021}
}

@inproceedings{state-of-the-uniform,
  title     = {The State of the Uniform: Attacks on Encrypted Databases Beyond the Uniform Query Distribution},
  author    = {Kornaropoulos, Evgenios M. and Papamanthou, Charalampos and Tamassia, Roberto},
  booktitle = {Proc. of the IEEE Symposium on Security and Privacy},
  year      = {2020}
}

@inproceedings{DBLP:conf/ccs/CashGPR15,
  title     = {Leakage-Abuse Attacks Against Searchable Encryption},
  author    = {Cash, David and Grubbs, Paul and Perry, Jason and Ristenpart, Thomas},
  booktitle = {Proc. of the ACM SIGSAC Conference on Computer and Communications Security},
  pages     = {668--679},
  year      = {2015}
}

@inproceedings{10.1109/SP.2019.00030,
  title     = {Learning to Reconstruct: Statistical Learning Theory and Encrypted Database Attacks},
  author    = {Grubbs, Paul and Lacharit{\'e}, Marie-Sarah and Minaud, Brice and Paterson, Kenneth G.},
  booktitle = {Proc. of the IEEE Symposium on Security and Privacy},
  pages     = {496--512},
  year      = {2019}
}

@inproceedings{response-hiding,
  title     = {Response-Hiding Encrypted Ranges: Revisiting Security via Parametrized Leakage-Abuse Attacks},
  author    = {Kornaropoulos, Evgenios M. and Papamanthou, Charalampos and Tamassia, Roberto},
  booktitle = {Proc. of the IEEE Symposium on Security and Privacy},
  year      = {2021}
}

@inproceedings{k-nn-attack,
  title     = {Data Recovery on Encrypted Databases with $k$-Nearest Neighbor Query Leakage},
  author    = {Kornaropoulos, Evgenios M. and Papamanthou, Charalampos and Tamassia, Roberto},
  booktitle = {Proc. of the IEEE Symposium on Security and Privacy},
  year      = {2019}
}

@inproceedings{wilke2024tdxdown,
  title     = {{TDXdown}: Single-Stepping and Instruction Counting Attacks against {Intel TDX}},
  author    = {Wilke, Luca and Sieck, Florian and Eisenbarth, Thomas},
  booktitle = {Proc. of the ACM SIGSAC Conference on Computer and Communications Security},
  pages     = {79--93},
  year      = {2024}
}

@inproceedings{wang2022hertzbleed,
  title     = {{Hertzbleed}: Turning Power Side-Channel Attacks into Remote Timing Attacks on x86},
  author    = {Wang, Yingchen and Paccagnella, Riccardo and He, Elizabeth Tang and Shacham, Hovav and Fletcher, Christopher W. and Kohlbrenner, David},
  booktitle = {Proc. of the USENIX Security Symposium},
  pages     = {679--697},
  year      = {2022}
}

@inproceedings{zhang2024cachewarp,
  title     = {{CacheWarp}: Software-based Fault Injection using Selective State Reset},
  author    = {Zhang, Ruiyi and Gerlach, Lukas and Weber, Daniel and Hetterich, Lorenz and L{\"u}, Youheng and Kogler, Andreas and Schwarz, Michael},
  booktitle = {Proc. of the USENIX Security Symposium},
  pages     = {1135--1151},
  year      = {2024}
}

@inproceedings{kogler2023collide+,
  title     = {{Collide+Power}: Leaking Inaccessible Data with Software-based Power Side Channels},
  author    = {Kogler, Andreas and Juffinger, Jonas and Giner, Lukas and Gerlach, Lukas and Schwarzl, Martin and Schwarz, Michael and Gruss, Daniel and Mangard, Stefan},
  booktitle = {Proc. of the USENIX Security Symposium},
  pages     = {7285--7302},
  year      = {2023}
}

@inproceedings{huang2019gpipe,
  title     = {{GPipe}: Easy Scaling with Micro-Batch Pipeline Parallelism},
  author    = {Huang, Yanping and Cheng, Youlong and Bapna, Ankur and Firat, Orhan and Chen, Mia Xu and Chen, Dehao and Lee, HyoukJoong and Ngiam, Jiquan and Le, Quoc V. and Wu, Yonghui and Chen, Zhifeng},
  booktitle = {Advances in Neural Information Processing Systems},
  pages     = {103--112},
  year      = {2019}
}

@inproceedings{narayanan2021memory,
  title     = {Memory-Efficient Pipeline-Parallel {DNN} Training},
  author    = {Narayanan, Deepak and Phanishayee, Amar and Shi, Kaiyu and Chen, Xie and Zaharia, Matei},
  booktitle = {Proc. of the International Conference on Machine Learning},
  pages     = {7937--7947},
  year      = {2021}
}

@misc{goyal2017accurate,
  title         = {Accurate, Large Minibatch {SGD}: Training {ImageNet} in 1 Hour},
  author        = {Goyal, Priya and Doll{\'a}r, Piotr and Girshick, Ross and Noordhuis, Pieter and Wesolowski, Lukasz and Kyrola, Aapo and Tulloch, Andrew and Jia, Yangqing and He, Kaiming},
  year          = {2017},
  eprint        = {1706.02677},
  archivePrefix = {arXiv},
}

@inproceedings{zheng2025h2o2ram,
  title     = {{H2O2RAM}: A High-Performance Hierarchical Doubly Oblivious {RAM}},
  author    = {Zheng, Leqian and Zhang, Zheng and Dong, Wentao and Zhang, Yao and Wu, Ye and Wang, Cong},
  booktitle = {Proc. of the USENIX Security Symposium},
  pages     = {8501--8520},
  year      = {2025}
}

@inproceedings{rajat2023laoram,
  title     = {{LAORAM}: A Look Ahead {ORAM} Architecture for Training Large Embedding Tables},
  author    = {Rajat, Rachit and Wang, Yongqin and Annavaram, Murali},
  booktitle = {Proc. of the Annual International Symposium on Computer Architecture},
  pages     = {76:1--76:15},
  year      = {2023}
}

@article{kato2023olive,
  title   = {Olive: Oblivious Federated Learning on Trusted Execution Environment Against the Risk of Sparsification},
  author  = {Kato, Fumiyuki and Cao, Yang and Yoshikawa, Masatoshi},
  journal = {Proc. of the VLDB Endowment},
  volume  = {16},
  number  = {10},
  pages   = {2404--2417},
  year    = {2023}
}

@inproceedings{gu2025flexway,
  title     = {Flexway {O}-Sort: Enclave-Friendly and Optimal Oblivious Sorting},
  author    = {Gu, Tianyao and Wang, Yilei and Tinoco, Afonso and Chen, Bingnan and Yi, Ke and Shi, Elaine},
  booktitle = {Proc. of the USENIX Security Symposium},
  pages     = {7563--7582},
  year      = {2025}
}

@inproceedings{mavrogiannakis2025obliviator,
  title     = {{OBLIVIATOR}: Oblivious Parallel Joins and Other Operators in Shared Memory Environments},
  author    = {Mavrogiannakis, Apostolos and Wang, Xian and Demertzis, Ioannis and Papadopoulos, Dimitrios and Garofalakis, Minos},
  booktitle = {Proc. of the USENIX Security Symposium},
  pages     = {8521--8540},
  year      = {2025}
}

@inproceedings{tramer2018slalom,
  title     = {Slalom: Fast, Verifiable and Private Execution of Neural Networks in Trusted Hardware},
  author    = {Tram{\`e}r, Florian and Boneh, Dan},
  booktitle = {Proc. of the International Conference on Learning Representations},
  year      = {2019}
}

@misc{hunt2018chiron,
  title         = {Chiron: Privacy-Preserving Machine Learning as a Service},
  author        = {Hunt, Tyler and Song, Congzheng and Shokri, Reza and Shmatikov, Vitaly and Witchel, Emmett},
  year          = {2018},
  eprint        = {1803.05961},
  archivePrefix = {arXiv},
}

@inproceedings{ng2021goten,
  title     = {Goten: {GPU}-Outsourcing Trusted Execution of Neural Network Training},
  author    = {Ng, Lucien K. L. and Chow, Sherman S. M. and Woo, Anna P. Y. and Wong, Donald P. H. and Zhao, Yongjun},
  booktitle = {Proc. of the AAAI Conference on Artificial Intelligence},
  pages     = {14876--14883},
  year      = {2021}
}

@misc{niu20213legrace,
  title         = {{3LegRace}: Privacy-Preserving {DNN} Training over {TEE}s and {GPU}s},
  author        = {Niu, Yue and Ali, Ramy E. and Avestimehr, Salman},
  year          = {2021},
  eprint        = {2110.01229},
  archivePrefix = {arXiv},
}

@inproceedings{bai2025phantom,
  title     = {Phantom: Privacy-Preserving Deep Neural Network Model Obfuscation in Heterogeneous {TEE} and {GPU} System},
  author    = {Bai, Juyang and Chowdhuryy, Md Hafizul Islam and Li, Jingtao and Yao, Fan and Chakrabarti, Chaitali and Fan, Deliang},
  booktitle = {Proc. of the USENIX Security Symposium},
  pages     = {5565--5582},
  year      = {2025}
}

@article{li2025teeslice,
  title   = {{TEESlice}: Protecting Sensitive Neural Network Models in Trusted Execution Environments When Attackers Have Pre-Trained Models},
  author  = {Li, Ding and Zhang, Ziqi and Yao, Mengyu and Cai, Yifeng and Guo, Yao and Chen, Xiangqun},
  journal = {ACM Transactions on Software Engineering and Methodology},
  volume  = {34},
  number  = {6},
  pages   = {1--49},
  year    = {2025}
}

@misc{hashemi2022data,
  title         = {Data Leakage via Access Patterns of Sparse Features in Deep Learning-Based Recommendation Systems},
  author        = {Hashemi, Hanieh and Xiong, Wenjie and Ke, Liu and Maeng, Kiwan and Annavaram, Murali and Suh, G. Edward and Lee, Hsien-Hsin S.},
  year          = {2022},
  eprint        = {2212.06264},
  archivePrefix = {arXiv},
}

@misc{costan2016intel,
  title        = {{Intel SGX} Explained},
  author       = {Costan, Victor and Devadas, Srinivas},
  year         = {2016},
  howpublished = {Cryptology ePrint Archive, Paper 2016/086},
  url          = {https://eprint.iacr.org/2016/086}
}

@article{pinto2019demystifying,
  title   = {Demystifying {Arm TrustZone}: A Comprehensive Survey},
  author  = {Pinto, Sandro and Santos, Nuno},
  journal = {ACM Computing Surveys},
  volume  = {51},
  number  = {6},
  pages   = {1--36},
  year    = {2019}
}

@misc{sev2020strengthening,
  title        = {{AMD SEV-SNP}: Strengthening {VM} Isolation with Integrity Protection and More},
  author       = {{AMD}},
  year         = {2020},
  howpublished = {White paper},
  url          = {https://docs.amd.com/v/u/en-US/SEV-SNP-strengthening-vm-isolation-with-integrity-protection-and-more}
}

@article{cheng2024intel,
  title   = {{Intel TDX} Demystified: A Top-Down Approach},
  author  = {Cheng, Pau-Chen and Ozga, Wojciech and Valdez, Enriquillo and Ahmed, Salman and Gu, Zhongshu and Jamjoom, Hani and Franke, Hubertus and Bottomley, James},
  journal = {ACM Computing Surveys},
  volume  = {56},
  number  = {9},
  pages   = {1--33},
  year    = {2024}
}

@inproceedings{brasser2017software,
  title     = {Software Grand Exposure: {SGX} Cache Attacks Are Practical},
  author    = {Brasser, Ferdinand and M{\"u}ller, Urs and Dmitrienko, Alexandra and Kostiainen, Kari and Capkun, Srdjan and Sadeghi, Ahmad-Reza},
  booktitle = {Proc. of the USENIX Workshop on Offensive Technologies},
  year      = {2017}
}

@inproceedings{lee2020off,
  title     = {An Off-Chip Attack on Hardware Enclaves via the Memory Bus},
  author    = {Lee, Dayeol and Jung, Dongha and Fang, Ian T. and Tsai, Chia-Che and Popa, Raluca Ada},
  booktitle = {Proc. of the USENIX Security Symposium},
  year      = {2020}
}

@inproceedings{lee2017inferring,
  title     = {Inferring Fine-Grained Control Flow Inside {SGX} Enclaves with Branch Shadowing},
  author    = {Lee, Sangho and Shih, Ming-Wei and Gera, Prasun and Kim, Taesoo and Kim, Hyesoon and Peinado, Marcus},
  booktitle = {Proc. of the USENIX Security Symposium},
  pages     = {557--574},
  year      = {2017}
}

@inproceedings{moghimi2017cachezoom,
  title     = {{CacheZoom}: How {SGX} Amplifies the Power of Cache Attacks},
  author    = {Moghimi, Ahmad and Irazoqui, Gorka and Eisenbarth, Thomas},
  booktitle = {Proc. of the International Conference on Cryptographic Hardware and Embedded Systems},
  pages     = {69--90},
  year      = {2017}
}

@inproceedings{moghimi2020copycat,
  title     = {{CopyCat}: Controlled Instruction-Level Attacks on Enclaves},
  author    = {Moghimi, Daniel and Van Bulck, Jo and Heninger, Nadia and Piessens, Frank and Sunar, Berk},
  booktitle = {Proc. of the USENIX Security Symposium},
  pages     = {469--486},
  year      = {2020}
}

@inproceedings{shinde2016preventing,
  title     = {Preventing Page Faults from Telling Your Secrets},
  author    = {Shinde, Shweta and Chua, Zheng Leong and Narayanan, Viswesh and Saxena, Prateek},
  booktitle = {Proc. of the ACM Asia Conference on Computer and Communications Security},
  pages     = {317--328},
  year      = {2016}
}

@inproceedings{van2017telling,
  title     = {Telling Your Secrets without Page Faults: Stealthy Page Table-Based Attacks on Enclaved Execution},
  author    = {Van Bulck, Jo and Weichbrodt, Nico and Kapitza, R{\"u}diger and Piessens, Frank and Strackx, Raoul},
  booktitle = {Proc. of the USENIX Security Symposium},
  pages     = {1041--1056},
  year      = {2017}
}

@inproceedings{xu2015controlled,
  title     = {Controlled-Channel Attacks: Deterministic Side Channels for Untrusted Operating Systems},
  author    = {Xu, Yuanzhong and Cui, Weidong and Peinado, Marcus},
  booktitle = {Proc. of the IEEE Symposium on Security and Privacy},
  pages     = {640--656},
  year      = {2015}
}

@inproceedings{costan2016sanctum,
  title     = {Sanctum: Minimal Hardware Extensions for Strong Software Isolation},
  author    = {Costan, Victor and Lebedev, Ilia and Devadas, Srinivas},
  booktitle = {Proc. of the USENIX Security Symposium},
  pages     = {857--874},
  year      = {2016}
}

@inproceedings{bourgeat2019mi6,
  title     = {{MI6}: Secure Enclaves in a Speculative Out-of-Order Processor},
  author    = {Bourgeat, Thomas and Lebedev, Ilia and Wright, Andrew and Zhang, Sizhuo and Arvind and Devadas, Srinivas},
  booktitle = {Proc. of the Annual IEEE/ACM International Symposium on Microarchitecture},
  pages     = {42--56},
  year      = {2019}
}

@inproceedings{shih2017t,
  title     = {{T-SGX}: Eradicating Controlled-Channel Attacks Against Enclave Programs},
  author    = {Shih, Ming-Wei and Lee, Sangho and Kim, Taesoo and Peinado, Marcus},
  booktitle = {Proc. of the Network and Distributed System Security Symposium},
  year      = {2017}
}

@inproceedings{chen2017detecting,
  title     = {Detecting Privileged Side-Channel Attacks in Shielded Execution with {D{\'e}j{\'a} Vu}},
  author    = {Chen, Sanchuan and Zhang, Xiaokuan and Reiter, Michael K. and Zhang, Yinqian},
  booktitle = {Proc. of the ACM Asia Conference on Computer and Communications Security},
  pages     = {7--18},
  year      = {2017}
}

@article{goldreich1996software,
  title   = {Software Protection and Simulation on Oblivious {RAM}s},
  author  = {Goldreich, Oded and Ostrovsky, Rafail},
  journal = {Journal of the ACM},
  volume  = {43},
  number  = {3},
  pages   = {431--473},
  year    = {1996}
}

@misc{gconnell2022oram,
  title        = {Technology Deep Dive: Building a Faster {ORAM} Layer for Enclaves},
  author       = {Connell, Graeme},
  year         = {2022},
  howpublished = {\url{https://signal.org/blog/building-faster-oram/}},
  note         = {Signal blog. Accessed: 2026-04-17}
}

@article{stefanov2018path,
  title   = {Path {ORAM}: An Extremely Simple Oblivious {RAM} Protocol},
  author  = {Stefanov, Emil and van Dijk, Marten and Shi, Elaine and Chan, T-H. Hubert and Fletcher, Christopher and Ren, Ling and Yu, Xiangyao and Devadas, Srinivas},
  journal = {Journal of the ACM},
  volume  = {65},
  number  = {4},
  pages   = {1--26},
  year    = {2018}
}

@inproceedings{wang2015circuit,
  title     = {Circuit {ORAM}: On Tightness of the {Goldreich-Ostrovsky} Lower Bound},
  author    = {Wang, Xiao and Chan, T-H. Hubert and Shi, Elaine},
  booktitle = {Proc. of the ACM SIGSAC Conference on Computer and Communications Security},
  pages     = {850--861},
  year      = {2015}
}

@inproceedings{patel2018panorama,
  title     = {{PanORAMa}: Oblivious {RAM} with Logarithmic Overhead},
  author    = {Patel, Sarvar and Persiano, Giuseppe and Raykova, Mariana and Yeo, Kevin},
  booktitle = {Proc. of the IEEE Annual Symposium on Foundations of Computer Science},
  pages     = {871--882},
  year      = {2018}
}

@inproceedings{asharov2020optorama,
  title     = {{OptORAMa}: Optimal Oblivious {RAM}},
  author    = {Asharov, Gilad and Komargodski, Ilan and Lin, Wei-Kai and Nayak, Kartik and Peserico, Enoch and Shi, Elaine},
  booktitle = {Advances in Cryptology -- EUROCRYPT},
  pages     = {403--432},
  year      = {2020}
}

@inproceedings{asharov2023futorama,
  title     = {{Futorama}: A Concretely Efficient Hierarchical Oblivious {RAM}},
  author    = {Asharov, Gilad and Komargodski, Ilan and Michelson, Yehuda},
  booktitle = {Proc. of the ACM SIGSAC Conference on Computer and Communications Security},
  pages     = {3313--3327},
  year      = {2023}
}

@inproceedings{mishra2018oblix,
  title     = {Oblix: An Efficient Oblivious Search Index},
  author    = {Mishra, Pratyush and Poddar, Rishabh and Chen, Jerry and Chiesa, Alessandro and Popa, Raluca Ada},
  booktitle = {Proc. of the IEEE Symposium on Security and Privacy},
  pages     = {279--296},
  year      = {2018}
}

@misc{eskandarian2017oblidb,
  title         = {{ObliDB}: Oblivious Query Processing for Secure Databases},
  author        = {Eskandarian, Saba and Zaharia, Matei},
  year          = {2017},
  eprint        = {1710.00458},
  archivePrefix = {arXiv},
}

@inproceedings{wang2014oblivious,
  title     = {Oblivious Data Structures},
  author    = {Wang, Xiao Shaun and Nayak, Kartik and Liu, Chang and Chan, T-H. Hubert and Shi, Elaine and Stefanov, Emil and Huang, Yan},
  booktitle = {Proc. of the ACM SIGSAC Conference on Computer and Communications Security},
  pages     = {215--226},
  year      = {2014}
}

@inproceedings{tinoco2023enigmap,
  title     = {{EnigMap}: External-Memory Oblivious Map for Secure Enclaves},
  author    = {Tinoco, Afonso and Gao, Sixiang and Shi, Elaine},
  booktitle = {Proc. of the USENIX Security Symposium},
  pages     = {4033--4050},
  year      = {2023}
}

@inproceedings{crooks2018obladi,
  title     = {Obladi: Oblivious Serializable Transactions in the Cloud},
  author    = {Crooks, Natacha and Burke, Matthew and Cecchetti, Ethan and Harel, Sitar and Agarwal, Rachit and Alvisi, Lorenzo},
  booktitle = {Proc. of the USENIX Symposium on Operating Systems Design and Implementation},
  pages     = {727--743},
  year      = {2018}
}

@inproceedings{batcher1968sorting,
  title     = {Sorting Networks and Their Applications},
  author    = {Batcher, Kenneth E.},
  booktitle = {Proc. of the AFIPS Spring Joint Computer Conference},
  pages     = {307--314},
  year      = {1968}
}

@inproceedings{asharov2020bucket,
  title     = {Bucket Oblivious Sort: An Extremely Simple Oblivious Sort},
  author    = {Asharov, Gilad and Chan, T-H. Hubert and Nayak, Kartik and Pass, Rafael and Ren, Ling and Shi, Elaine},
  booktitle = {Proc. of the Symposium on Simplicity in Algorithms},
  pages     = {8--14},
  year      = {2020}
}

@inproceedings{sasy2022fast,
  title     = {Fast Fully Oblivious Compaction and Shuffling},
  author    = {Sasy, Sajin and Johnson, Aaron and Goldberg, Ian},
  booktitle = {Proc. of the ACM SIGSAC Conference on Computer and Communications Security},
  pages     = {2565--2579},
  year      = {2022}
}

@inproceedings{ngai2024distributed,
  title     = {Distributed \& Scalable Oblivious Sorting and Shuffling},
  author    = {Ngai, Nicholas and Demertzis, Ioannis and Chamani, Javad Ghareh and Papadopoulos, Dimitrios},
  booktitle = {Proc. of the IEEE Symposium on Security and Privacy},
  pages     = {4277--4295},
  year      = {2024}
}

@inproceedings{chan2017oblivious,
  title     = {Oblivious Hashing Revisited, and Applications to Asymptotically Efficient {ORAM} and {OPRAM}},
  author    = {Chan, T-H. Hubert and Guo, Yue and Lin, Wei-Kai and Shi, Elaine},
  booktitle = {Advances in Cryptology -- ASIACRYPT},
  pages     = {660--690},
  year      = {2017}
}

@inproceedings{sasy2018zerotrace,
  title     = {{ZeroTrace}: Oblivious Memory Primitives from {Intel SGX}},
  author    = {Sasy, Sajin and Gorbunov, Sergey and Fletcher, Christopher W.},
  booktitle = {Proc. of the Network and Distributed System Security Symposium},
  year      = {2018}
}

@inproceedings{dauterman2021snoopy,
  title     = {Snoopy: Surpassing the Scalability Bottleneck of Oblivious Storage},
  author    = {Dauterman, Emma and Fang, Vivian and Demertzis, Ioannis and Crooks, Natacha and Popa, Raluca Ada},
  booktitle = {Proc. of the ACM Symposium on Operating Systems Principles},
  pages     = {655--671},
  year      = {2021}
}

@inproceedings{ren2013design,
  title     = {Design Space Exploration and Optimization of Path Oblivious {RAM} in Secure Processors},
  author    = {Ren, Ling and Yu, Xiangyao and Fletcher, Christopher W. and van Dijk, Marten and Devadas, Srinivas},
  booktitle = {Proc. of the Annual International Symposium on Computer Architecture},
  pages     = {571--582},
  year      = {2013}
}

@article{mittal2021survey,
  title   = {A Survey of Deep Learning on {CPU}s: Opportunities and Co-optimizations},
  author  = {Mittal, Sparsh and Rajput, Poonam and Subramoney, Sreenivas},
  journal = {IEEE Transactions on Neural Networks and Learning Systems},
  volume  = {33},
  number  = {10},
  pages   = {5095--5115},
  year    = {2021}
}

@inproceedings{chen2020slide,
  title     = {{SLIDE}: In Defense of Smart Algorithms over Hardware Acceleration for Large-Scale Deep Learning Systems},
  author    = {Chen, Beidi and Medini, Tharun and Farwell, James and Tai, Charlie and Shrivastava, Anshumali and others},
  booktitle = {Proc. of Machine Learning and Systems},
  pages     = {291--306},
  year      = {2020}
}

@inproceedings{indyk1998approximate,
  title     = {Approximate Nearest Neighbors: Towards Removing the Curse of Dimensionality},
  author    = {Indyk, Piotr and Motwani, Rajeev},
  booktitle = {Proc. of the Annual ACM Symposium on Theory of Computing},
  pages     = {604--613},
  year      = {1998}
}

@inproceedings{broder1998min,
  title     = {Min-Wise Independent Permutations},
  author    = {Broder, Andrei Z. and Charikar, Moses and Frieze, Alan M. and Mitzenmacher, Michael},
  booktitle = {Proc. of the Annual ACM Symposium on Theory of Computing},
  pages     = {327--336},
  year      = {1998}
}

@inproceedings{charikar2002simhash,
  title     = {Similarity Estimation Techniques from Rounding Algorithms},
  author    = {Charikar, Moses S.},
  booktitle = {Proc. of the Annual ACM Symposium on Theory of Computing},
  pages     = {380--388},
  year      = {2002}
}

@inproceedings{yagnik2011power,
  title     = {The Power of Comparative Reasoning},
  author    = {Yagnik, Jay and Strelow, Dennis and Ross, David A. and Lin, Ruei-sung},
  booktitle = {Proc. of the IEEE International Conference on Computer Vision},
  pages     = {2431--2438},
  year      = {2011}
}

@inproceedings{andoni2015practical,
  title     = {Practical and Optimal {LSH} for Angular Distance},
  author    = {Andoni, Alexandr and Indyk, Piotr and Laarhoven, Thijs and Razenshteyn, Ilya and Schmidt, Ludwig},
  booktitle = {Advances in Neural Information Processing Systems},
  pages     = {1225--1233},
  year      = {2015}
}

@inproceedings{shrivastava2014asymmetric,
  title     = {Asymmetric {LSH} ({ALSH}) for Sublinear Time Maximum Inner Product Search ({MIPS})},
  author    = {Shrivastava, Anshumali and Li, Ping},
  booktitle = {Advances in Neural Information Processing Systems},
  pages     = {2321--2329},
  year      = {2014}
}

@inproceedings{kitaev2020reformer,
  title     = {Reformer: The Efficient Transformer},
  author    = {Kitaev, Nikita and Kaiser, {\L}ukasz and Levskaya, Anselm},
  booktitle = {Proc. of the International Conference on Learning Representations},
  year      = {2020}
}

@article{chamani2023graphos,
  title   = {{GraphOS}: Towards Oblivious Graph Processing},
  author  = {Chamani, Javad Ghareh and Demertzis, Ioannis and Papadopoulos, Dimitrios and Papamanthou, Charalampos and Jalili, Rasool},
  journal = {Proc. of the VLDB Endowment},
  volume  = {16},
  number  = {13},
  pages   = {4324--4338},
  year    = {2023}
}

@inproceedings{lipp2021platypus,
  title     = {{PLATYPUS}: Software-Based Power Side-Channel Attacks on x86},
  author    = {Lipp, Moritz and Kogler, Andreas and Oswald, David and Schwarz, Michael and Easdon, Catherine and Canella, Claudio and Gruss, Daniel},
  booktitle = {Proc. of the IEEE Symposium on Security and Privacy},
  pages     = {355--371},
  year      = {2021}
}

@inproceedings{matetic2017rote,
  title     = {{ROTE}: Rollback Protection for Trusted Execution},
  author    = {Matetic, Sinisa and Ahmed, Mansoor and Kostiainen, Kari and Dhar, Aritra and Sommer, David and Gervais, Arthur and Juels, Ari and Capkun, Srdjan},
  booktitle = {Proc. of the USENIX Security Symposium},
  pages     = {1289--1306},
  year      = {2017}
}

@inproceedings{lv2007multi,
  title     = {Multi-Probe {LSH}: Efficient Indexing for High-Dimensional Similarity Search},
  author    = {Lv, Qin and Josephson, William and Wang, Zhe and Charikar, Moses and Li, Kai},
  booktitle = {Proc. of the International Conference on Very Large Data Bases},
  pages     = {950--961},
  year      = {2007}
}

@misc{Bhatia16,
  title        = {The Extreme Classification Repository: Multi-label Datasets and Code},
  author       = {Bhatia, K. and Dahiya, K. and Jain, H. and Kar, P. and Mittal, A. and Prabhu, Y. and Varma, M.},
  year         = {2016},
  howpublished = {\url{http://manikvarma.org/downloads/XC/XMLRepository.html}}
}

@inproceedings{daghaghi2021tale,
  title     = {A Tale of Two Efficient and Informative Negative Sampling Distributions},
  author    = {Daghaghi, Shabnam and Medini, Tharun and Meisburger, Nicholas and Chen, Beidi and Zhao, Mengnan and Shrivastava, Anshumali},
  booktitle = {Proc. of the International Conference on Machine Learning},
  pages     = {2319--2329},
  year      = {2021}
}

@inproceedings{juvekar2018gazelle,
  title     = {{GAZELLE}: A Low Latency Framework for Secure Neural Network Inference},
  author    = {Juvekar, Chiraag and Vaikuntanathan, Vinod and Chandrakasan, Anantha},
  booktitle = {Proc. of the USENIX Security Symposium},
  pages     = {1651--1669},
  year      = {2018}
}

@inproceedings{mohassel2018aby3,
  title     = {{ABY3}: A Mixed Protocol Framework for Machine Learning},
  author    = {Mohassel, Payman and Rindal, Peter},
  booktitle = {Proc. of the ACM SIGSAC Conference on Computer and Communications Security},
  pages     = {35--52},
  year      = {2018}
}

@inproceedings{mohassel2017secureml,
  title     = {{SecureML}: A System for Scalable Privacy-Preserving Machine Learning},
  author    = {Mohassel, Payman and Zhang, Yupeng},
  booktitle = {Proc. of the IEEE Symposium on Security and Privacy},
  pages     = {19--38},
  year      = {2017}
}

@inproceedings{rathee2020cryptflow2,
  title     = {{CrypTFlow2}: Practical 2-Party Secure Inference},
  author    = {Rathee, Deevashwer and Rathee, Mayank and Kumar, Nishant and Chandran, Nishanth and Gupta, Divya and Rastogi, Aseem and Sharma, Rahul},
  booktitle = {Proc. of the ACM SIGSAC Conference on Computer and Communications Security},
  pages     = {325--342},
  year      = {2020}
}

@article{wagh2019securenn,
  title   = {{SecureNN}: 3-Party Secure Computation for Neural Network Training},
  author  = {Wagh, Sameer and Gupta, Divya and Chandran, Nishanth},
  journal = {Proc. on Privacy Enhancing Technologies},
  volume  = {2019},
  number  = {3},
  pages   = {26--49},
  year    = {2019}
}

@article{wagh2021falcon,
  title   = {Falcon: Honest-Majority Maliciously Secure Framework for Private Deep Learning},
  author  = {Wagh, Sameer and Tople, Shruti and Benhamouda, Fabrice and Kushilevitz, Eyal and Mittal, Prateek and Rabin, Tal},
  journal = {Proc. on Privacy Enhancing Technologies},
  volume  = {2021},
  number  = {1},
  pages   = {188--208},
  year    = {2021}
}

@article{canetti2000security,
  title   = {Security and Composition of Multiparty Cryptographic Protocols},
  author  = {Canetti, Ran},
  journal = {Journal of Cryptology},
  volume  = {13},
  number  = {1},
  pages   = {143--202},
  year    = {2000}
}

@incollection{lindell2017simulate,
  title     = {How to Simulate It: A Tutorial on the Simulation Proof Technique},
  author    = {Lindell, Yehuda},
  booktitle = {Tutorials on the Foundations of Cryptography},
  pages     = {277--346},
  year      = {2017}
}

@inproceedings{ohrimenko2016oblivious,
  title     = {Oblivious Multi-Party Machine Learning on Trusted Processors},
  author    = {Ohrimenko, Olga and Schuster, Felix and Fournet, C{\'e}dric and Mehta, Aastha and Nowozin, Sebastian and Vaswani, Kapil and Costa, Manuel},
  booktitle = {Proc. of the USENIX Security Symposium},
  pages     = {619--636},
  year      = {2016}
}

@inproceedings{kingma2014adam,
  title     = {Adam: A Method for Stochastic Optimization},
  author    = {Kingma, Diederik P. and Ba, Jimmy},
  booktitle = {Proc. of the International Conference on Learning Representations},
  year      = {2015}
}

@inproceedings{schoppmann2019make,
  title     = {Make Some Room for the Zeros: Data Sparsity in Secure Distributed Machine Learning},
  author    = {Schoppmann, Phillipp and Gasc{\'o}n, Adri{\`a} and Raykova, Mariana and Pinkas, Benny},
  booktitle = {Proc. of the ACM SIGSAC Conference on Computer and Communications Security},
  pages     = {1335--1350},
  year      = {2019}
}

@misc{damie2025secure,
  title         = {Secure Sparse Matrix Multiplications and Their Applications to Privacy-Preserving Machine Learning},
  author        = {Damie, Marc and Hahn, Florian and Peter, Andreas and Ramon, Jan},
  year          = {2025},
  eprint        = {2510.14894},
  archivePrefix = {arXiv},
}

@inproceedings{umar2025efficient,
  title     = {Efficient Memory Side-Channel Protection for Embedding Generation in Machine Learning},
  author    = {Umar, Muhammad and Marathe, Akhilesh Parag and Gupta, Monami Dutta and Ghosh, Shubham Jogprakash and Suh, G. Edward and Xiong, Wenjie},
  booktitle = {Proc. of the IEEE International Symposium on High-Performance Computer Architecture},
  pages     = {423--441},
  year      = {2025}
}

\appendix

\section{Open Science}

This appendix lists the artifacts necessary to evaluate the paper's contributions and how readers can access them.
Note that, due to hardware requirements (Intel TDX), full end-to-end reproduction may not be feasible on all platforms; we detail the specific requirements below.

\textbf{Datasets.\xspace}
All datasets used in this paper are publicly available in standard XML format (e.g.,~Wiki10-31K, Wiki-325K, and Amazon-670K).
They can be obtained directly from the XML Repository: \url{http://manikvarma.org/downloads/XC/XMLRepository.html}.
For any derived subset (e.g.,~Amazon-sub), our artifact provides the exact raw data used and the scripts to generate it.

\textbf{Source code.\xspace}
We provide an artifact containing:
(i)~our full implementation,
(ii)~the baseline integration,
(iii)~all the configuration files used in our paper,
and (iv)~a Python script to generate all plots from pre-collected experiment outputs.
Our artifact is available at the following repository: \url{https://anonymous.4open.science/r/TENNOR-artifacts-77E5}

\textbf{Build \& Run.\xspace}
The artifact includes a \texttt{README} with:
(1)~the required toolchain versions,
(2)~a single-command build (CMake),
and (3)~commands to run each experiment.
\emph{Hardware requirements:}
The full pipeline (including TEE-based experiments) requires an Intel
TDX-enabled platform with sufficient memory. The accuracy component
(\texttt{Slide-MP}) can run on commodity hardware, though large configurations
(e.g.,~50 hash tables) benefit from ample memory.

\textbf{Figure/table generation.\xspace}
We provide a Python script that generates all plots and tables from
pre-collected experiment outputs included in the artifact.  To reproduce these
outputs from scratch, readers need first to run the corresponding experiments
on a compatible platform and collect the raw results; the \texttt{README}
documents the required steps.

\section{Ethical Considerations}

\noindent\textbf{Summary and scope.\xspace}
\system is a defensive system that enables training of wide-layer neural networks inside Trusted Execution Environments (TEEs) while mitigating input-dependent memory-access patterns from sparse training.
It reduces confidentiality risks in untrusted cloud settings.
Our evaluation uses public extreme multi-label classification benchmarks (Wiki10-31K, Wiki-325K, Amazon-670K, and a derived Amazon-670K subset).
We do not collect new user data, conduct human-subject experiments, use deception, or interact with third-party services.
All experiments run on authorized Google Cloud Intel TDX instances and an authorized institutional CPU cluster.
No IRB/ERB review was applicable.
 
\noindent\textbf{Stakeholders.\xspace}
The main stakeholders are:
(1) data owners whose training data may be sensitive;
(2) model owners whose parameters and training dynamics may be proprietary;
(3) practitioners deploying TEE-based ML systems;
(4) cloud operators and hardware vendors providing TEE infrastructure;
(5) researchers building on oblivious training, ORAM, LSH, and sparse-learning techniques;
(6) end users affected by models trained with such systems;
and (7) dual-use actors who may reason about the costs and limits of privacy-preserving training from the published design and performance analysis.

\noindent\textbf{Benefits and harms.\xspace}
\emph{Benefits}---\system makes privacy-preserving training of wide neural networks more practical, benefiting deployments where training data or model parameters are confidential.
\emph{Harms.}---The main foreseeable risk is misinterpretation of security guarantees: 
\system protects input-dependent memory-access patterns under its stated threat model, but does not provide end-to-end protection against all side channels (microarchitectural timing, power, physical), rollback attacks, software bugs, output-level leakage, or incorrect deployment.
A secondary dual-use risk is that the performance analysis may help adversaries estimate the cost of privacy-preserving training.
We consider this limited since the paper doesn't provide any attack tools or evasion techniques.
Finally, oblivious protection increases computation and memory traffic relative to non-oblivious training, which may raise energy and infrastructure costs.
 
\noindent\textbf{Risk mitigation.\xspace}
We state a narrow threat model with explicit non-goals (Section~3), so users do not over-rely on the guarantees.
All experiments are conducted in controlled, authorized environments with no probing of third-party systems.
The released artifact is designed for reproducibility and does not include tools for attacking deployed systems or extracting private data.
 
\noindent\textbf{Respect for persons.\xspace}
We use benchmark datasets according to XC repository's published research-use terms and do not attempt to re-identify individuals.
Dataset artifacts and generation scripts are released only to the extent permitted by those terms.
 
\noindent\textbf{Justice.\xspace}
The primary beneficiaries are organizations that need (training) data confidentiality.
The burdens of additional computational cost and TEE-capable infrastructure may disadvantage resource-constrained institutions.
To mitigate this, we evaluate on publicly available cloud infrastructure and transparently report overheads so practitioners can assess tradeoffs without re-running expensive experiments.
 
\noindent\textbf{Respect for law and public interest.\xspace}
We experiment only on authorized systems and do not discover, exploit, or disclose vulnerabilities in TEE implementations or cloud platforms.
Should vulnerabilities be discovered in future work, we would follow coordinated disclosure practices.
 
\noindent\textbf{Conclusion.\xspace}
We believe the benefits of reducing access-pattern leakage for privacy-preserving training outweigh the foreseeable risks, provided \system is deployed within its stated threat model.

\section{Generative AI Usage}
\label{app:ai-usage}

We used Claude (Anthropic) and ChatGPT (OpenAI) during the preparation of this paper. 
These tools assisted with rephrasing and grammar refinement in the manuscript, and with drafting portions of the implementation and plotting code. 
All AI-generated text and code were reviewed and verified by the authors before inclusion. 
The experimental results reported in this paper were produced by the authors on their own infrastructure. 
The authors take full responsibility for the correctness of all claims, results, and references in this paper.
\section{Dataset and Preprocessing Details}
\label{app:data}

We use public multi-label classification benchmarks from the Extreme Classification Repository~\cite{Bhatia16}.
All datasets are provided as sparse bag-of-words features with integer label IDs, using the standard
\texttt{train.txt}/\-\texttt{test.txt} split.
Except for the Amazon-Sub dataset (synthetic), all others are available on the XC repository.

\textbf{Wiki10-31K (Wiki10).}
Wiki10-31K is a tag-prediction dataset constructed from Wikipedia pages, where each instance is represented as a sparse bag-of-words vector and labeled with user-provided tags.
We use the standard split with 14,146 training instances and 6,616 test instances, 30,938 labels, and 101,938 features.
We use the repository-provided features and labels directly.

\textbf{Amazon-670K (Amz670).}
Amazon-670K is a product recommendation dataset with 670K labels.
Each input corresponds to a product represented as a sparse feature vector, and labels denote other products that a user may be interested in purchasing.
We use the standard split (490,449 train/\-153,025 test) with 670,091 labels and 135,909 features.

\textbf{WikiLSHTC-325K (Wiki325).}
WikiLSHTC-325K is extracted from Wikipedia, where each instance is a document represented using sparse bag-of-words features and the label indicates its category.
The benchmark contains 1,778,351 training instances and 587,084 test instances, with 325,056 labels and 1,617,899 features.
We use the standard split and the repository-provided sparse representation.

\textbf{Amazon-Sub (Amz-Sub).}
To obtain a mid-scale benchmark that falls between Wiki10-31K and WikiLSHTC-325K while preserving the original feature space, we construct Amazon-sub from Amazon-670K using a strict label-budgeted subsetting procedure.
We set the target training size and label budget as:
$N_{\text{target}}=\textit{multiplier}\cdot 14{,}146$ and $L_{\text{target}}=\textit{multiplier}\cdot 30{,}938$,
where \textit{multiplier}$\in\{2,3\}$ matches the scale of Wiki10-31K.

For training, we enforce an \emph{ALL} policy: selected training instances keep all of their labels (no partial deletion),
and the union of labels is bounded by $L_{\text{target}}$.
We use a two-phase cover-and-fill heuristic: 
Phase A greedily selects instances that introduce many new labels,
biased toward frequent labels under the label budget; 
Phase B fills the remaining quota using only instances whose label sets are subsets of the kept label set (introducing no new labels).

For testing, we enforce an \emph{ANY} policy by keeping only instances with at least one kept label and dropping labels outside the kept label set.
We then remap labels to a contiguous range.
Optionally, we downsample the resulting test set to approximately one-third of the selected training set size, matching the train/\-test split used in Wiki10-31K and Amazon-670K.

Our scripts output \texttt{train.small.txt}, \texttt{test.small.txt}, \texttt{label\_\-map.json}, \texttt{kept\_train\_idx.txt}, and \texttt{stats.json}, where \texttt{stats.\-json} records the realized instance/label counts after subsetting.

\section{Memory Usage}
\label{app:memory}

\begin{table}[t]
\centering
\caption{Peak resident memory (GB) of \system on the Google Cloud Confidential C3 instance (352 GB RAM; see~\S\ref{sec:eval-performance}). The last column is the ratio between the two batch sizes; ``--'' marks a configuration that exceeds available memory.}
\label{tab:memory}
\begin{tabular}{@{}lrrrr@{}}
\toprule
Dataset & \#Labels & $B{=}32$ & $B{=}64$ & Ratio \\
\midrule
Wiki10-31K & 30{,}938  & 34.6 GB  & 69.5 GB  & 2.0$\times$ \\
Amz-Sub    & 92{,}814  & 77.7 GB  & 326.3 GB & 4.2$\times$ \\
Wiki-325K  & 325{,}056 & 222.8 GB & --    & --          \\
\bottomrule
\end{tabular}
\end{table}

Table~\ref{tab:memory} reports \system's peak resident memory for each evaluated dataset at batch sizes 32 and 64, measured on the 352~GB Google Cloud Confidential C3 instance described in Section~\ref{sec:eval-performance}.
The measurements exhibit two distinct scaling regimes.
For Amazon-Sub, doubling the batch size from 32 to 64 increases peak memory by approximately $4.2\times$, whereas for Wiki10-31K the same change increases peak memory by approximately $2.0\times$.
As explained below, these phenomena correspond to an ``uncapped regime'' in which the dominant OHT memory grows approximately quadratically with the batch size, and a ``capped regime'' in which it grows approximately linearly.

We next explain how this behavior follows from the choice of public parameters used for the OHT.
Let $B$ denote the batch size, $\mathsf{lenSeq}$ the fixed number of MP-WTA probes (i.e., requests) issued per input, and $T = B \cdot \mathsf{lenSeq}$ the total number of requests generated for one batch.
The OHT contains $\mathsf{numBin}$ bins in each of its two tiers, and each bin has a fixed capacity of $\mathsf{binCap}$ request entries.
Thus, the OHT memory is determined by the product $\mathsf{numBin} \times \mathsf{binCap}$.
The associated \texttt{Request-Array} also contains $T$ entries, but the OHT dominates the measured batch-dependent memory.
The LSH table is a smaller, secondary cost: it stores $\mathsf{numLshBucs} \times \mathsf{PADSIZE}$ neuron slots (i.e., the number of LSH buckets times the LSH bucket capacity) and is independent of $B$, although it grows as the output layer labels.

The two public parameters of OHT, $\mathsf{numBin}$ and $\mathsf{binCap}$, scale differently with the batch size.
$\mathsf{numBin}$, the number of OHT bins,  grows linearly with the number of requests $T = B \cdot \mathsf{lenSeq}$, and therefore linearly with $B$ for a fixed $\mathsf{lenSeq}$.
However, the number of OHT bins $\mathsf{numBin}$ is capped at the public value $\mathsf{numLshBucs}$, the number of LSH buckets.
Because every request is associated with one of
the $\mathsf{numLshBucs}$ possible \texttt{LSHbucketID}s.
Once $\mathsf{numBin}$ reaches this value, increasing it further does not increase the number of distinct LSH buckets that may be associated with the requests; it only occurs additional OHT capacity (i.e., different requests are associated with the same LSH bucket, which has the same \texttt{LSHbucketID}).
We therefore use the public parameter $\mathsf{numLshBucs}$ as an upper bound on $\mathsf{numBin}$.
Because both $\mathsf{numLshBucs}$ and this bounding rule are public, applying the cap introduces no additional leakage.

The per-bin capacity $\mathsf{binCap}$ grows linearly with $B$.
As the batch size increases, requests generated by different inputs may be associated with the same \texttt{LSHbucketID} and therefore placed in the same OHT bin.
The number of requests that a bin must accommodate can thus increase proportionally with the number of inputs in the batch, so $\mathsf{binCap}$ grows approximately linearly with $B$.

These relationships produce two memory-scaling regimes.
In the uncapped regime, $\mathsf{numBin}$ remains below $\mathsf{numLshBucs}$.
Since $T$ grows linearly with $B$, $\mathsf{numBin}$ also grows linearly with $B$.
At the same time, $\mathsf{binCap}$ grows linearly with $B$.
Because the dominant OHT allocation is the product $\mathsf{numBin} \times \mathsf{binCap}$, its memory grows approximately quadratically with the batch size $B$ in this regime.

In the capped regime, $\mathsf{numBin}$ has already reached $\mathsf{numLshBucs}$ and no longer increases with $B$.
Only $\mathsf{binCap}$ continues to grow linearly with the batch size.
The dominant OHT memory therefore grows approximately linearly with $B$ in this regime.
These scaling relationships describe the dominant OHT allocation rather than every component of process memory.
The measured peak resident memory also includes the model, optimizer state, LSH table, dataset buffers, \texttt{Request-Array}, so the end-to-end measurements are expected to follow these trends only approximately.

Table~\ref{tab:memory} illustrate these two regimes.
For Amazon-Sub, $\mathsf{numBin}$ remains below the $\mathsf{numLshBucs}$ cap at both evaluated batch sizes.
Doubling $B$ therefore increases both $\mathsf{numBin}$ and $\mathsf{binCap}$.
Peak memory rises from 77.7~GB at $B=32$ to 326.3~GB at $B=64$, a $4.2\times$ increase that is consistent with the quadratic regime.

For Wiki10-31K, the smaller LSH key space causes $\mathsf{numBin}$ to reach the $\mathsf{numLshBucs}$ cap already at $B=32$.
Doubling the batch size therefore increases $\mathsf{binCap}$ but not $\mathsf{numBin}$.
Peak memory rises from 34.6~GB to 69.5~GB, a $2.0\times$ increase that is consistent with the linear regime.

For the largest evaluated workload, Wiki-325K, peak memory is 222.8~GB at batch size 32.
Batch size 64 exceeds the 352~GB capacity of our evaluation instance and is therefore not reported.
This limitation is consistent with the uncapped regime: for large workloads, increasing the batch size simultaneously increases the number of provisioned OHT bins and the capacity of each bin.

At a fixed batch size of 32, peak memory also increases across the evaluated workloads as the output layer widens, from 34.6~GB for Wiki10-31K to 77.7~GB for Amazon-Sub and 222.8~GB for Wiki-325K.
This trend is driven by both the larger persistent structures, including the model and LSH table, and the larger OHT configuration used by the wider workloads.
Because the datasets also differ in their model and LSH parameters, this cross-dataset comparison characterizes the memory usage of the evaluated workloads rather than isolating label count as the only changing variable.

The 352~GB value is the capacity of the cloud instance used for our evaluation, rather than a fixed memory requirement imposed by \system.
The actual requirement depends on the workload, output-layer width, batch size, and public parameters.
Our largest successful configuration, Wiki-325K at batch size 32, uses 222.8~GB, approximately 63\% of the available memory.

For workloads in the uncapped regime, increasing the micro-batch size is particularly expensive because the dominant OHT allocation grows approximately quadratically with $B$.
As discussed in Section~\ref{sec:eval-performance}, a larger effective optimization batch can instead be obtained by fixing $B_{\mathsf{micro}}=32$ and accumulating gradients across multiple micro-batches.
This increases the effective batch size without increasing the per-micro-batch OHT working set.

\section{Additional Component Details}
\label{app:component-details}

\subsection{MP-WTA}
\label{app:mp-wta}

\begin{algorithm}[ht]
  \caption{\textsc{$h^2$ WTA} Hash}
      \footnotesize
  \label{alg:abstract_max2wta}
  \begin{algorithmic}[1]
    \Require vector $\mathbf{x}\!\in\!\mathbb{R}^{d}$, window size $M$, \# hash functions $K$
    \Ensure WTA signature $\mathbf{h}\!\in\!\mathbb{N}^K$, second-choice signature $\mathbf{h}^{(2)}\!\in\!\mathbb{N}^K$
    \For{$i\gets1$ \textbf{to} $K$}
        \State $S_i \gets$ the fixed sampled subset of size $M$ associated with the $i$-th hash function
        \State $h_i \gets$ index of the maximum among the elements of $S_i$ 
        \State $h^2_i \gets$ index of the second-maximum among the elements of $S_i$
    \EndFor
    \State \Return $(\mathbf{h},\mathbf{h}^{(2)})$
  \end{algorithmic}
\end{algorithm}

Algorithm \ref{alg:abstract_max2wta} illustrates a high-level method for computing $h^2(\cdot)$ WTA hash values.
If multiple sampled features have the same value, ties are broken deterministically according to the fixed sampled order of that hash function.
The computation of $h^3(\cdot)$ is analogous, using the third-largest sampled feature.

The WTA hash can be computed obliviously using comparison and assignment primitives (see Appendix~\ref{app:obl-primitive}).

\subsection{Neuron Fetcher}
\label{app:neuron-fetcher}

\begin{algorithm}[htb]
    \caption{\textsc{Neuron Fetcher for layer $\ell$}}
    \label{alg:neuron-fetcher}
    \footnotesize
    \begin{algorithmic}[1]
        \Require
            Layer index $\ell$;
            
            $\mathsf{ReqArray}_\ell$ of $T = B \cdot \mathsf{lenSeq}_\ell$ entries, where:
            (i) in \textsc{Read} mode, it is the initialized array produced by Neuron Requester (Algorithm~\ref{alg:neuron-requester}); and
            (ii) in \textsc{Write} mode, it is the merged-and-updated array produced after \textsc{OblMergeRequests} (Algorithm~\ref{alg:obl-merge}) and the oblivious optimizer (Appendix~\ref{app:obl-optimizer}), with at most one non-dummy entry per \texttt{LSHbucketID};
            
            LSH table $\mathsf{LSH}_\ell$ consisting of $\mathsf{numLshBucs}$ buckets, each padded to fixed capacity $\mathsf{PADSIZE}$ with dummy neurons;
            
            OHT tier-1 and tier-2 hash functions $\mathsf{H}_1, \mathsf{H}_2: \{0,\ldots,\mathsf{numLshBucs}-1\} \to \{0,\ldots,\mathsf{numBin}-1\}$~\cite{chan2017oblivious},
            with fixed bin capacity $\mathsf{binCap}$;
            
            $\mathsf{mode} \in \{\textsc{Read}, \textsc{Write}\}$, where \textsc{Read} transfers neurons from $\mathsf{LSH}_\ell$ into \texttt{Neuron-Buffer} (feedforward) and \textsc{Write} transfers in the reverse direction (backpropagation);
            
            oblivious primitives~\cite{dauterman2021snoopy}
            (\S\ref{app:obl-primitive}): \texttt{OblCompare}$(a,b)$ (returns $1$ iff $a=b$, data-obliviously) and
            \texttt{OblChoose}$(\mathsf{flag}, v_1, v_0)$ (obliviously returns $v_1$ if $\mathsf{flag}=1$, else $v_0$, data-obliviously)

        \Ensure $\mathsf{ReqArray}_\ell$ with each entry's \texttt{Neuron-Buffer} populated (\textsc{Read}) or $\mathsf{LSH}_\ell$ updated with modified neurons (\textsc{Write})

        \Statex
        \State \textit{// Phase 1: Build OHT from Request-Array (key = \texttt{LSHbucketID})}
        \State $\mathsf{OHT} \gets \texttt{OHT.Build}(\mathsf{ReqArray}_\ell)$
            \Comment{oblivious construction via oblivious sorting~\cite{chan2017oblivious}}

        \Statex
        \State \textit{// Phase 2: Linear scan of LSH table; In \textsc{Write} mode, this step assumes that $\mathsf{ReqArray}_\ell$ has already been merged so that at most one non-dummy entry matches any given \texttt{LSHbucketID}; otherwise multiple matching writes would overwrite one another.}
        \For{$b = 0$ \textbf{to} $\mathsf{numLshBucs} - 1$}
        \Statex \textit{// performs \texttt{OHT.lookup}~\cite{chan2017oblivious}}
            \State $\mathsf{bin}_1 \gets \mathsf{H}_1(b)$;~~$\mathsf{bin}_2 \gets \mathsf{H}_2(b)$
            \For{$\mathsf{tier} \in \{1, 2\}$}
                \For{each entry $e$ in $\mathsf{OHT}.\mathsf{bins}[\mathsf{bin}_\mathsf{tier}]$}
                    \Comment{iterate all $\mathsf{binCap}$ entries}
                    
                    \If{$\mathsf{mode} = \textsc{Read}$}
                        \State $\mathsf{match} \gets \texttt{OblCompare}(e.\texttt{LSHbucketID},\; b)$
                        \State $e.\texttt{Neuron-Buffer} \gets$
                        \Statex \hspace{\algorithmicindent}$\texttt{OblChoose}(\mathsf{match},\; \mathsf{LSH}_\ell[b],\; e.\texttt{Neuron-Buffer})$
                    \EndIf
                    \If{$\mathsf{mode} = \textsc{Write}$}
                        \State $\mathsf{matchW} \gets \texttt{OblCompare}(e.\texttt{LSHbucketID}, b)$
                        \Statex \hspace{\algorithmicindent}$\land\,(\neg e.\mathsf{isDummy})$
                        \State $\mathsf{LSH}_\ell[b] \gets$
                        \Statex \hspace{\algorithmicindent}$\texttt{OblChoose}(\mathsf{matchW},\; e.\texttt{Neuron-Buffer},\; \mathsf{LSH}_\ell[b])$
                    \EndIf
                \EndFor
            \EndFor
        \EndFor

        \Statex
        \State \textit{// Phase 3: Extract populated entries from OHT back into Request-Array}
        \State Collect all entries from $\mathsf{OHT}$ bins into a flat array $A$, each entry carries an \texttt{isDummy} flag.
        \State $\texttt{OblSort}(A)$ by key $(\texttt{isDummy},\;\texttt{Batch-Rank})$
            \Comment{non-dummy entries first; dummy padding after them}
        \State $\mathsf{ReqArray}_\ell \gets A[0..T-1]$
            \Comment{restore fixed-length request array}

        \State \Return $\mathsf{ReqArray}_\ell, \mathsf{LSH}_\ell$
    \end{algorithmic}
\end{algorithm}

Algorithm~\ref{alg:neuron-fetcher} details the Neuron Fetcher procedure outlined in Section~\ref{sec:neuron-fetcher}.
It takes the initialized $\mathsf{ReqArray}_\ell$ produced by Neuron Requester (Algorithm~\ref{alg:neuron-requester}) and populates each entry's \texttt{Neuron-Buffer} with the neurons stored in the corresponding LSH bucket.
The algorithm is parameterized by a $\mathsf{mode}$ flag: in \textsc{Read} mode (feedforward), neurons are transferred from the LSH table into the \texttt{Neuron-Buffer}s; in \textsc{Write} mode (backpropagation), the direction is reversed, writing updated neurons back to the LSH table.
The procedure consists of three phases.
 
\emph{Phase~1: OHT Construction (line~2).}
The $T = B \cdot \mathsf{lenSeq}_\ell$ entries of $\mathsf{ReqArray}_\ell$ are transformed into an Oblivious Two-Tier Hash Table (\texttt{OHT})~\cite{chan2017oblivious}, keyed by \texttt{LSHbucketID}, via oblivious sorting (e.g.,~bitonic sort~\cite{batcher1968sorting}).
Since the \texttt{OHT} stores request records and resolves hash collisions through fixed-capacity bins, a single bin may contain entries for multiple distinct \texttt{LSHbucketID}s and may also contain multiple records with the same \texttt{LSHbucketID}.
Additionally, each bin is padded with dummy entries (marked by an \texttt{isDummy$=1$} flag) to its public capacity $\mathsf{binCap}$, ensuring that the bin occupancy does not leak information about the request distribution.
 
\emph{Phase~2: Linear Scan (lines~3--13).}
Neuron Fetcher linearly scans all $\mathsf{numLshBucs}$ buckets of $\mathsf{LSH}_\ell$ (line~4).
For each bucket~$b$, it computes the two candidate \texttt{OHT} bins via the tier-1 and tier-2 hash functions $\mathsf{H}_1(b)$ and $\mathsf{H}_2(b)$ (line~5), then iterates over every entry in both bins (lines~6--7).

In \textsc{Read} mode (lines~8--10):
for each candidate entry~$e$, the oblivious equality test \texttt{OblCompare}~\cite{dauterman2021snoopy} checks whether $e.\texttt{LSHbucketID}$ matches~$b$ (line~9).
The result drives \texttt{OblChoose}~\cite{dauterman2021snoopy}, an oblivious conditional select that returns one of two candidate values while always executing the same fixed instruction structure.
Then, a match updates $e.\texttt{Neuron-Buffer}$ with $\mathsf{LSH}_\ell[b]$ (line~10).

In \textsc{Write} mode (lines~11--13), the scan tests the same bucket match and the non-dummy guard (line~12).
A matching real entry updates $\mathsf{LSH}_\ell[b]$ with its \texttt{Neuron-Buffer} (line~13).

Because the outer loop visits every LSH bucket exactly once, the single-query requirement of the \texttt{OHT}~\cite{chan2017oblivious} is satisfied.
 
\emph{Phase~3: Request-Array Restoration (lines~14--18).}
After the scan, all \texttt{OHT} bin entries (real and dummy) are collected into a flat array and sorted by a single \texttt{OblSort} pass keyed on $(\texttt{isDummy},\;\texttt{Batch-Rank})$.
This places non-dummy entries first, grouped by their originating input (\texttt{Batch-Rank}), while dummies are pushed to the tail.
The first $T$ positions constitute the restored fixed-length $\mathsf{ReqArray}_\ell$, with dummy padding when Write mode has merged duplicate requests.
In \textsc{Write} mode, the returned $\mathsf{ReqArray}_\ell$ is unused, but we return it uniformly to simplify the algorithm presentation.
The ordering of requests within the same \texttt{Batch-Rank} may differ from the original $\mathsf{ReqArray}_\ell$ produced by Algorithm~\ref{alg:neuron-requester}; this does not affect correctness, as the Neuron Updater (Section~\ref{sec:neuron-updater}) processes all requests for a given input independently of their internal order.

\subsection{Neuron Updater}
\label{app:neuron-updater}

The Neuron Updater operates on the populated \texttt{Request-Array} for the LSH-selected layer and on a public-order dense array for layer~0.
For each input $i\in[1..B]$ and layer $\ell$, it maintains two fixed-length arrays of public size $R_\ell$:
$\mathsf{ActiveNodes}[i][\ell][1..R_\ell]$, which stores neuron IDs;
and $\mathsf{ActiveVals}[i]$ $[\ell][1..R_\ell]$, which stores the corresponding activation values.
For LSH-selected layers, $R_\ell = \mathsf{lenSeq}_\ell \cdot \mathsf{PADSIZE}$; for dense layer~0, $R_0=N_0$ and slots enumerate all dense neurons in public order.
For LSH-selected layers, each slot $r$ maps to a unique, publicly determined location in $\mathsf{ReqArray}_\ell$ via the helper \textsc{SlotNode} (Algorithm~\ref{alg:slotnode}), so both feedforward and backpropagation traverse the same fixed slot layout.
Dense layer~0 feedforward and optimizer updates scan the $N_0$ dense neurons in public order.

\textbf{Obliviousness Invariant.}
All updater routines consist of fixed-length scans over $[1..R_\ell]$ (or nested scans over $[1..R_{\ell+1}]\times[1..R_\ell]$), with loop bounds determined solely by public parameters such as $B$, $\mathsf{lenSeq}_\ell$, $\mathsf{PADSIZE}$, and $N_0$.
Dummy slots carry neutral values (zero weights, zero bias, zero activations) and are processed identically to real slots.
Every data-dependent conditional is realized through branch-free oblivious primitives such as \texttt{OblChoose} and \texttt{OblGt}~\cite{dauterman2021snoopy}, so the updater reveals no private information beyond the public parameters in~$P$.

\textbf{Algorithm~\ref{alg:slotnode} (\textsc{SlotNode}).}
This helper defines the canonical addressing rule for \texttt{Request-Array}.
Recall that each input $i$ is allocated $\mathsf{lenSeq}_\ell$ consecutive entries in $\mathsf{ReqArray}_\ell$, and each entry contains a \texttt{Neuron-Buffer} of $\mathsf{PADSIZE}$ neuron records.
Given a linear slot index $r\in[1..R_\ell]$, \textsc{SlotNode} deterministically computes a probe index $t=\lfloor (r{-}1)/\mathsf{PADSIZE}\rfloor$ and an in-buffer offset $s=(r{-}1)\bmod \mathsf{PADSIZE}$ (lines~1--2), then returns the neuron record at $\mathsf{ReqArray}_\ell[(i{-}1)\cdot\mathsf{lenSeq}_\ell+t].\texttt{Neuron-Buffer}[s]$ (lines~3--4).
Because the entire mapping depends only on $i$, $r$, and public parameters, every subsequent access via \textsc{SlotNode} follows a fixed, input-independent pattern.

\begin{algorithm}[ht]
  \caption{\textsc{SlotNode}$(\mathsf{ReqArray}_\ell, i, r)$}
  \label{alg:slotnode}
  \footnotesize
  \begin{algorithmic}[1]
    \Require $\mathsf{ReqArray}_\ell[0..T-1]$, input index $i\in[1..B]$, slot $r\in[1..R_\ell]$
    \Ensure The neuron record stored at the public slot $(i,r)$
    \State $t \gets \lfloor (r-1) / \mathsf{PADSIZE} \rfloor$ \Comment{probe index in $[0..\mathsf{lenSeq}_\ell-1]$}
    \State $s \gets (r-1) \bmod \mathsf{PADSIZE}$ \Comment{slot index in buffer $[0..\mathsf{PADSIZE}-1]$}
    \State $\texttt{ind} \gets (i-1)\cdot \mathsf{lenSeq}_\ell + t$ \Comment{public, deterministic index}
    \State \Return $\mathsf{ReqArray}_\ell[\texttt{ind}].\texttt{Neuron-Buffer}[s]$
  \end{algorithmic}
\end{algorithm}

\textbf{Algorithm~\ref{alg:updater-ff} (Feed-Forward).}
This algorithm computes activations for all $B$ inputs at one layer: LSH-selected layers use \texttt{Request-Array}, while dense layer~0 uses a public-order scan over all $N_0$ neurons.
For the output layer, it also performs softmax normalization.
It proceeds in two passes per input.

\begin{algorithm}[htb]
  \caption{\textsc{Neuron Updater} -- Feed-Forward at layer $\ell$}
  \label{alg:updater-ff}
  \footnotesize
  \begin{algorithmic}[1]
    \Require
        Layer index $\ell$;
        
        activation type $\mathsf{type}_\ell \in \{\text{ReLU}, \text{Softmax}\}$ (public);
        
        populated $\mathsf{ReqArray}_\ell$ for LSH-selected layers; dense layer~0 is accessed as $\text{Layer}_0.\text{getNode}(r)$ in public order;
        
        batch size $B$;
        
        $R_\ell = \mathsf{lenSeq}_\ell \cdot \mathsf{PADSIZE}$ for LSH-selected layers, and $R_0=N_0$ for dense layer~0;
        
        previous-layer active-node IDs and values
        $\mathsf{ActiveNodes}[i][\ell{-}1][1..R_{\ell-1}]$,
        $\mathsf{ActiveVals}[i][\ell{-}1][1..R_{\ell-1}]$
        for each input $i$ (for $\ell = 0$, these are the raw input feature indices and values with $R_{\ell-1}$ set to input length);
        
        \textsc{SlotNode} (Algorithm~\ref{alg:slotnode});

        oblivious primitives~\cite{dauterman2021snoopy}
        (\S\ref{app:obl-primitive}): $\texttt{OblGt}(a,b)$ (returns $1$ iff $a > b$, data-obliviously) and \texttt{OblChoose}$(\mathsf{flag},v_1,v_0)$ (returns $v_1$ if $\mathsf{flag}=1$, else $v_0$, data-obliviously)

    \Ensure
        $\mathsf{ActiveNodes}[i][\ell][1..R_\ell]$ and $\mathsf{ActiveVals}[i][\ell][1..R_\ell]$ for all $i$;
        
        for $\mathsf{type}_\ell = \text{Softmax}$: normalization constants $\mathsf{normConst}[i]$ and each neuron's $u.\mathsf{lastActivation}[i]$ written back

    \Statex
    \For{$i = 1$ to $B$}
      \State $m_i \gets -\infty$ \Comment{for oblivious compare; used only if Softmax}

      \Statex
      \State \textit{// Pass 1: compute activations}
      \For{$r = 1$ to $R_\ell$}
        \State $u \gets \text{Layer}_0.\text{getNode}(r)$ if $\ell=0$; otherwise $u \gets \Call{SlotNode}{\mathsf{ReqArray}_\ell, i, r}$
        \State $a \gets u.\mathsf{bias}$
        \For{$r' = 1$ to $R_{\ell-1}$}
          \State $a \gets a + u.\mathsf{weights}\bigl[\mathsf{ActiveNodes}[i][\ell{-}1][r']\bigr]
                 \;\cdot\; \mathsf{ActiveVals}[i][\ell{-}1][r']$
        \EndFor
        \If{$\mathsf{type}_\ell = \text{ReLU}$}
          \State $a \gets \texttt{OblChoose}\bigl(\texttt{OblGt}(a, 0),\; a,\; 0\bigr)$
              \Comment{ReLU}
        \EndIf
        \State $\mathsf{ActiveNodes}[i][\ell][r] \gets u.\mathsf{ID}$
        \State $\mathsf{ActiveVals}[i][\ell][r] \gets a$
        \State $u.\mathsf{lastActivation}[i] \gets a$
        \If{$\mathsf{type}_\ell = \text{Softmax}$}
          \State $m_i \gets \texttt{OblChoose}\bigl(\texttt{OblGt}(a, m_i),\; a,\; m_i\bigr)$
        \EndIf
      \EndFor

      \Statex
      \State \textit{// Pass 2: softmax normalization (output layer only)}
      \If{$\mathsf{type}_\ell = \text{Softmax}$}
        \State $\mathsf{normConst}[i] \gets 0$
        \For{$r = 1$ to $R_\ell$}
          \State $u \gets \Call{SlotNode}{\mathsf{ReqArray}_\ell, i, r}$
          \State $z \gets \exp\bigl(\mathsf{ActiveVals}[i][\ell][r] - m_i\bigr)$
          \State $z \gets \texttt{OblChoose}(u.\mathsf{isDummy},\; 0,\; z)$
              \Comment{dummy slots contribute zero}
          \State $\mathsf{ActiveVals}[i][\ell][r] \gets z$
          \State $u.\mathsf{lastActivation}[i] \gets z$
          \State $\mathsf{normConst}[i] \gets \mathsf{normConst}[i] + z$
        \EndFor
        \State $\mathsf{normConst}[i] \gets$
        \Statex \hspace{\algorithmicindent}$\texttt{OblChoose}\bigl(
        \texttt{OblGt}(\mathsf{normConst}[i],\; 0),\;
        \mathsf{normConst}[i],\; 1\bigr)$
        \Comment{guard against all-dummy batch}
      \EndIf
    \EndFor

    \State \Return $\mathsf{ActiveNodes}$, $\mathsf{ActiveVals}$, $\mathsf{normConst}$
  \end{algorithmic}
\end{algorithm}

\emph{Scope: two-layer architecture.}
In \system's two-layer architecture (Figure~\ref{fig:overview}), only the output layer ($\ell = L{-}1$) is wide and uses LSH-based neuron selection. 
Layer~$0$ is a dense hidden layer whose $N_0$ neurons are all accessed for every input.
Therefore, lines~7--8 scan all $N_0$ neurons in layer~0.
The corresponding weight access is implemented as a linear scan rather than a data-dependent lookup.
Extending to multiple sparse layers would require additional oblivious indirection, such as a linear scan over $u.\mathsf{weights}$, and is left to future work.

In \emph{Pass~1} (lines~4--15), the algorithm scans all $R_\ell$ slots for input $i$.
At each slot $r$, it retrieves the neuron record $u$ via \textsc{SlotNode} for LSH-selected layers, or via public-order layer~0 access for the dense layer (line~5), and computes the pre-activation as a weighted sum over the previous layer's active state:
$a = u.\mathsf{bias} + \sum_{r'=1}^{R_{\ell-1}} u.\mathsf{weights}\bigl[\mathsf{ActiveNodes}[i]$ $[\ell-1][r']\bigr] \cdot \mathsf{ActiveVals}[i][\ell-1][r']$ (lines~6--8).
For $\ell = 0$ (the first hidden layer), the previous-layer arrays are replaced by the raw input feature indices and values.
If the layer uses ReLU activation, the pre-activation is clamped to zero via an oblivious conditional (lines~9--10).
The resulting neuron ID and activation are recorded into $\mathsf{ActiveNodes}[i][\ell][r]$ and $\mathsf{ActiveVals}[i][\ell][r]$ (lines~11--13).
For the Softmax output layer, a running oblivious maximum $m_i$ is maintained via $\texttt{OblGt}$ and $\texttt{OblChoose}$ over all slot activations (lines~14--15).

\emph{Pass~2} (lines~16--26) executes only for the output layer ($\mathsf{type}_\ell = \text{Softmax}$).
Using the oblivious maximum $m_i$ from Pass~1, the algorithm re-scans all $R_\ell$ slots and computes numerically stable exponentials $z = \exp(\mathsf{ActiveVals}[i][\ell][r] - m_i)$ (line~21).
Dummy slots are then explicitly zeroed via $z \gets \texttt{OblChoose}(u.\mathsf{isDummy},\, 0,\, z)$ (line~22), ensuring that padding neurons contribute exactly zero to the normalization constant rather than a data-dependent residual.
Each slot's activation is replaced in place by $z$ (lines~23--24), and the normalization constant $\mathsf{normConst}[i]$ accumulates the sum of these exponentials (line~25).
The actual softmax probability $p = z / \mathsf{normConst}[i]$ is deferred to the backpropagation stage (Algorithm~\ref{alg:updater-bp}), where it is computed immediately before the output-layer delta calculation.
A guard ensures $\mathsf{normConst}[i] \geq 1$ via \texttt{OblChoose}, so that an all-dummy batch (where all exponentials are zeroed) yields $p = 0$ rather than a division by zero (line~26).

Both passes consist of fixed-length scans whose bounds are determined by public parameters.
In Pass~1, dummy slots contribute zero to the weighted sum due to their zero weights and activations.
In Pass~2, dummy slots are explicitly zeroed (line~22), so they contribute exactly zero to $\mathsf{normConst}[i]$ and propagate zero delta throughout backpropagation.
All data-dependent choices (ReLU clamping, oblivious maximum, dummy zeroing) use branch-free primitives.
Hence, the feedforward pass reveals no private information beyond the public parameters (including the public sparsity pattern).

\textbf{Algorithm~\ref{alg:updater-bp} (Backpropagation).}
This algorithm computes gradient accumulators for the LSH-selected layer's \texttt{Request-Array} and for dense layer~0.
It proceeds in three phases, mirroring standard backpropagation over the fixed LSH-selected slot layout and the public-order dense layer~0 array.

\begin{algorithm*}[htb]
  \caption{\textsc{Neuron Updater} -- Backpropagation}
  \label{alg:updater-bp}
  \footnotesize
  \begin{algorithmic}[1]
    \Require
        $\mathsf{ReqArray}_\ell$ for each LSH-selected layer $\ell$; dense layer~0 is stored as a public-order array of $N_0$ neurons;
        
        $\mathsf{ActiveVals}[i][\ell][1..R_\ell]$ and $\mathsf{normConst}[i]$ from feedforward (Algorithm~\ref{alg:updater-ff});
        
        number of layers $L$, batch size $B$;
        
        $R_\ell = \mathsf{lenSeq}_\ell \cdot \mathsf{PADSIZE}$ for each LSH-selected layer $\ell$, and $R_0=N_0$ for dense layer~0;

        labels $Y_i \subseteq [0..C-1]$ with sizes $|Y_i|$ for each input $i$ in a batch;

        dummy neuron IDs are public sentinel values outside $[0..C-1]$;
        
        $N_0$: number of neurons in layer~$0$ (size of each layer, publicly known);

        sparse training inputs $\mathbf{x}_i$, each containing $\mathsf{nnz}$ non-zero features out of $D$ total dimensions, represented as (index, value) pairs $(d_j, x_{d_j})$ for $j \in [1..\mathsf{nnz}]$; both $\mathsf{nnz}$ and the non-zero indices $\{d_1,\ldots,d_{\mathsf{nnz}}\}$ are public, while the feature values $x_{d_j}$ are private;
        
        \textsc{SlotNode} (Algorithm~\ref{alg:slotnode});
        
        oblivious primitives~\cite{dauterman2021snoopy}
        (\S\ref{app:obl-primitive}):
        $\texttt{OblCompare}(a,b)$ (returns $1$ iff $a = b$),
        $\texttt{OblGt}(a,b)$ (returns $1$ iff $a > b$),
        $\texttt{OblChoose}(\mathsf{flag},v_1,v_0)$ (returns $v_1$ if $\mathsf{flag}=1$, else $v_0$) 

    \Ensure Per-neuron gradient accumulators updated for all neurons in each LSH-selected $\mathsf{ReqArray}_\ell$ and in dense layer~0

    \Statex
    \State \textit{// Phase 1: Output-layer delta computation}
    \State $\ell \gets L-1$
    \For{$i = 1$ to $B$}
      \For{$r = 1$ to $R_\ell$}
        \State $u \gets \Call{SlotNode}{\mathsf{ReqArray}_\ell, i, r}$
        \State $p \gets u.\mathsf{lastActivation}[i] \;/\; \mathsf{normConst}[i]$
            \Comment{softmax probability}
        \State $u.\mathsf{lastActivation}[i] \gets p$
        \State $\mathsf{isLabel} \gets 0$
        \For{each $y \in Y_i$}
          \State $\mathsf{isLabel} \gets \texttt{OblChoose}\bigl(\texttt{OblCompare}(y,\; u.\mathsf{ID}),\; 1,\; \mathsf{isLabel}\bigr)$
        \EndFor
        \State $u.\mathsf{delta}[i] \gets \texttt{OblChoose}\bigl(\mathsf{isLabel},\; (1/|Y_i| - p)/B,\; {-p}/{B}\bigr)$
      \EndFor
    \EndFor

    \Statex
    \State Reset dense-layer delta scratch state by a fixed public scan over $[1..B]\times[1..N_0]$

    \State \textit{// Phase 2: Layer-by-layer propagation and gradient accumulation}
    \For{$\ell = L{-}1$ down to $1$}
      \For{$i = 1$ to $B$}
        \For{$r_u = 1$ to $R_\ell$}
          \State $u \gets \Call{SlotNode}{\mathsf{ReqArray}_\ell, i, r_u}$
          \State $\delta_u \gets u.\mathsf{delta}[i]$

          \Statex
          \State \textit{// Inner loop: propagate delta to layer $\ell{-}1$ and accumulate weight gradients}
          \If{$\ell > 1$}
              \For{$r_v = 1$ to $R_{\ell-1}$} \Comment{iterate active neurons in sparse layer}
                \State $v \gets \Call{SlotNode}{\mathsf{ReqArray}_{\ell-1}, i, r_v}$
                \State $\mathsf{inc} \gets \delta_u \cdot u.\mathsf{weights}[v.\mathsf{ID}]$
                \State $\mathsf{isAct} \gets \texttt{OblGt}\bigl(v.\mathsf{lastActivation}[i],\; 0\bigr)$
                \State $v.\mathsf{delta}[i] \gets \texttt{OblChoose}\bigl(\mathsf{isAct},\; v.\mathsf{delta}[i] + \mathsf{inc},\; v.\mathsf{delta}[i]\bigr)$
                \State $u.\mathsf{t}[v.\mathsf{ID}] \gets u.\mathsf{t}[v.\mathsf{ID}] + \delta_u \cdot v.\mathsf{lastActivation}[i]$
              \EndFor
        \Else
              \For{$n = 1$ to $N_0$} \Comment{iterate dense layer}
                \State $v \gets \text{Layer}_0.\text{getNode}(n)$
                \State $\mathsf{inc} \gets \delta_u \cdot u.\mathsf{weights}[v.\mathsf{ID}]$
                \State $\mathsf{isAct} \gets \texttt{OblGt}\bigl(v.\mathsf{lastActivation}[i],\; 0\bigr)$
                \State $v.\mathsf{delta}[i] \gets \texttt{OblChoose}\bigl(\mathsf{isAct},\; v.\mathsf{delta}[i] + \mathsf{inc},\; v.\mathsf{delta}[i]\bigr)$
                \State $u.\mathsf{t}[v.\mathsf{ID}] \gets u.\mathsf{t}[v.\mathsf{ID}] + \delta_u \cdot v.\mathsf{lastActivation}[i]$
              \EndFor
        \EndIf

          \Statex
          \State \textit{// Bias gradient: once per $(u, i)$, outside the inner loop}
          \State $u.\mathsf{tbias} \gets u.\mathsf{tbias} + \delta_u$
        \EndFor
      \EndFor
    \EndFor

    \Statex
    \State \textit{// Phase 3: First-layer gradient from input features}
    \For{$i = 1$ to $B$}
      \For{$n = 1$ to $N_0$}
        \State $v \gets \text{Layer}_0.\text{getNode}(n)$
        \State $\delta_v \gets v.\mathsf{delta}[i]$
        \For{$j = 1$ to $\mathsf{nnz}$}
           \Statex \Comment{$(d_j, x_{d_j})$: index and value of the $j$-th non-zero feature of $\mathbf{x}_i$}
          \State $v.\mathsf{t}[d_j] \gets v.\mathsf{t}[d_j] + \delta_v \cdot x_{d_j}$
        \EndFor
        \State $v.\mathsf{tbias} \gets v.\mathsf{tbias} + \delta_v$
      \EndFor
    \EndFor
  \end{algorithmic}
\end{algorithm*}

\emph{Scope: two-layer architecture.}
In \system's two-layer architecture (Figure~\ref{fig:overview}), only the output layer ($\ell = L{-}1$) is wide and uses LSH-based neuron selection. 
Layer~$0$ is a dense hidden layer whose $N_0$ neurons are all accessed for every input.
This design is critical for obliviousness of the Neuron Updater: the backpropagation loop always follows lines~26--32 rather than the sparse-layer branch in lines~19--25.
In the inner loop of Phase~2 (lines~26--32), the weight access $u.\mathsf{weights}[v.\mathsf{ID}]$ iterates over all $N_0$ neurons of the dense layer~$0$, so the access pattern is independent of the input data.
If the network has more than two layers and multiple layers use LSH, then the sparse-layer branch in lines~19--25 can make $v.\mathsf{ID}$ depend on which neurons were selected for each input, making the weight access pattern data-dependent.
Extending \system to support additional sparse layers would require an oblivious indirection mechanism (e.g., an additional linear scan per weight access) and is left to future work.
We emphasize, however, that the output layer, which is responsible for an extremely large number of labels, dominates the computational cost.
Therefore, the cost of additional oblivious operations for multiple-layer extension is manageable.

\emph{Phase~1: Output-layer deltas} (lines~1--11).
For each input $i$ and slot $r$ at the output layer $\ell = L-1$, the algorithm retrieves the neuron record $u$ and computes its softmax probability $p$ (line~6).
Here $p = u.\mathsf{lastActivation}[i] / \mathsf{normConst}[i]$, where $u.\mathsf{lastActivation}[i]$ holds the exponential computed during feedforward Pass~2.
The algorithm then determines whether $u$ corresponds to a ground-truth label in $Y_i$ via an oblivious scan over $Y_i$ using \texttt{OblCompare} (lines~9--10).
The output delta is set to $(1/|Y_i| - p)/B$ for positive labels and $-p/B$ for negative labels, selected obliviously via \texttt{OblChoose} (line~11).
Note that $|Y_i|$ is a public parameter in the multi-label classification setting.
For dummy slots, since their exponential was explicitly zeroed during feedforward (Algorithm~\ref{alg:updater-ff}, line~22), we have $p = 0$.
This yields a delta of $0$ because the dummy ID matches no label, so dummy slots contribute zero in all subsequent phases.

\emph{Phase~2: Layer-by-layer propagation} (lines~12--35).
For each layer $\ell$ from $L{-}1$ down to $1$, and for each input $i$, the algorithm scans all $R_\ell$ slots of layer $\ell$.
For each neuron $u$ in layer $\ell$ with accumulated delta $\delta_u = u.\mathsf{delta}[i]$ (line~18), a nested loop over all $R_{\ell-1}$ slots of the previous layer performs two operations per pair $(u, v)$:

\emph{(1) Delta propagation} (lines~22--25): the error signal is added to $v.\mathsf{delta}[i]$, gated by the ReLU derivative of $v$.
The signal is $\delta_u$ times the weight from $v$ to $u$.
The increment is applied only if $v.\mathsf{lastActivation}[i] > 0$, i.e., the neuron was not killed by ReLU during feedforward.
This condition is checked via \texttt{OblGt} and written via \texttt{OblChoose}.
For dummy neurons, whose activation is zero, \texttt{OblGt}$(0, 0)$ returns $0$, so no delta is propagated into dummy slots.

\emph{(2) Weight gradient accumulation} (line~26): the gradient for the edge from $v$ to $u$ is $\delta_u \cdot v.\mathsf{lastActivation}[i]$ and is accumulated into $u.\mathsf{t}[v.\mathsf{ID}]$.
Since killed or dummy neurons have $\mathsf{lastActivation} = 0$, their contribution is automatically zero.

After the inner loop completes, the bias gradient $\delta_u$ is accumulated into $u.\mathsf{tbias}$ \emph{exactly once} per $(u, i)$ pair (line~35).
This placement is critical: accumulating the bias gradient inside the inner loop would multiply it by $R_{\ell-1}$, the number of slots in the previous layer.

Recall that in our two-layer architecture, Phase 2 executes only for $\ell = 1$, so the inner loop always iterates over the $N_0$ neurons of the dense layer 0. The sparse-layer branch (lines 20--26) is included for generality.

When $\ell = 1$, the previous layer (layer~$0$) is a dense layer that does not use \texttt{Request-Array}.
In this case, the inner loop iterates over all $N_0$ neurons of layer~$0$ directly rather than through \textsc{SlotNode}.

\emph{Phase~3: First-layer gradients} (lines~36--43).
For each input $i$, the algorithm iterates over all $N_0$ neurons of layer~$0$.
Each neuron $v$ accumulates weight gradients from the input features of $\mathbf{x}_i$: for each feature $(d, x_d)$, the gradient $\delta_v \cdot x_d$ is added to $v.\mathsf{t}[d]$ (line~42).
The bias gradient $\delta_v$ is accumulated once per $(v, i)$ pair (line~43).

\textbf{Remark on First-layer Sparsity.}
In Phase~3, the gradient update $v.\mathsf{t}[d_j] \mathrel{+}= \delta_v \cdot x_{d_j}$ accesses weight dimension $d_j$, which depends on the sparsity pattern of the input $\mathbf{x}_i$.
Consequently, the adversary can learn the set of non-zero feature indices of each training input, but not the corresponding feature values, weights, gradients, or any other intermediate quantities.
Fully hiding the sparsity pattern is possible (e.g.,~via an oblivious scatter based on oblivious sorting) but incurs $O(N_0 \cdot D \log D)$ cost per batch, where $D$ is the input dimensionality.
For high-dimensional sparse datasets this overhead dominates the training cost; we therefore treat the sparsity pattern as public information (see~\S\ref{app:proof}) and leave a more efficient oblivious treatment to future work.

\textbf{Obliviousness.}
In Phase~1, the scan over $R_\ell$ slots and the label-matching loop over $|Y_i|$ both have publicly determined bounds; $|Y_i|$ is a public parameter in the multi-label setting.
In Phase~2, the nested loop bounds $R_\ell \times R_{\ell-1}$ depend only on public parameters, and every slot pair executes the same sequence of \texttt{OblGt}, \texttt{OblChoose}, and arithmetic operations regardless of match outcome.
Phase~3 operates on the dense layer~$0$; the outer loop over $N_0$ neurons and the inner loop over $\mathsf{nnz}$ non-zero features both have publicly known bounds.
The gradient accumulators reside within the corresponding \texttt{Request-Array} records for the LSH-selected layer and dense layer~0 records for first-layer gradients, and are written at locations determined by public loop indices and the public sparsity pattern.
Hence, the backpropagation pass reveals no private information beyond the public sparsity pattern included in~$P$.

\textbf{Oblivious Optimizer}
After backpropagation completes, the gradient accumulators 
$u.\mathsf{t}[\cdot]$ and $u.\mathsf{tbias}$ are distributed across multiple request entries in $\mathsf{ReqArray}_\ell$ that may reference the same LSH bucket.
Each such entry stores a copy of that bucket in its \texttt{Neuron-Buffer}, so entries with the same \texttt{LSHbucketID} contain slot-aligned copies of the same padded neuron set.
These copies are first consolidated by \textsc{OblMergeRequests} (Appendix~\ref{app:obl-merge}), which obliviously groups entries by \texttt{LSHbucketID} and folds their gradient state slot-by-slot into a single surviving representative.
The merged accumulators are then consumed by the oblivious optimizer (Appendix~\ref{app:obl-optimizer}), which applies the Adam optimizer to each neuron before write-back to the LSH table.

\subsection{Oblivious Neuron Merge}
\label{app:obl-merge}

After backpropagation (Algorithm~\ref{alg:updater-bp}), multiple entries in $\mathsf{ReqArray}_\ell$ may share the same \texttt{LSHbucketID}.
Such entries correspond to different requests (inputs) for the same LSH bucket and therefore contain slot-aligned copies of the same padded neuron set in their \texttt{Neuron-Buffer}s.
Each copy carries gradient accumulators ($u.\mathsf{t}[\cdot]$ and $u.\mathsf{tbias}$) contributed by its respective input.
Before the parameter update (Appendix~\ref{app:obl-optimizer}) and write-back to the LSH table, these bucket copies must be consolidated so that, for each slot position in the \texttt{Neuron-Buffer}, the surviving copy holds the sum of all per-input gradients associated with the neuron stored in that slot.

\begin{algorithm}[t]
  \caption{\textsc{OblMergeRequests}$(\mathsf{ReqArray}_\ell)$}
  \label{alg:obl-merge}
  \footnotesize
  \begin{algorithmic}[1]
    \Require
        $\mathsf{ReqArray}_\ell[0..T{-}1]$ with $T = B \cdot \mathsf{lenSeq}_\ell$, where each entry stores \texttt{LSHbucketID}, \texttt{isDummy} flag, and a \texttt{Neuron-Buffer}$[0..\mathsf{PADSIZE}{-}1]$; 

        $D_\ell$: weight dimension at layer $\ell$;
        
        each neuron record carries gradient accumulators $u.\mathsf{t}[0..D_\ell-1]$ and $u.\mathsf{tbias}$ populated by backpropagation (Algorithm~\ref{alg:updater-bp});

        oblivious primitives~\cite{dauterman2021snoopy}:
        $\texttt{OblCompare}(a,b)$, $\texttt{OblChoose}(\mathsf{flag},v_1,v_0)$;
        $\texttt{OblSort}$ (oblivious sorting, e.g.,~bitonic sort~\cite{batcher1968sorting})

    \Ensure Merged $\mathsf{ReqArray}_\ell$ with at most one non-dummy entry per \texttt{LSHbucketID}; for each surviving entry, the gradient accumulators in every \texttt{Neuron-Buffer} slot equal the slot-wise sum over all original copies of that LSH bucket

    \Statex \textit{// group request entries that reference the same LSH bucket}
    \State $\Call{OblSort}{\mathsf{ReqArray}_\ell,\ \texttt{key}=\texttt{LSHbucketID}}$

    \For{$j = T-1$ down to $1$}
      \State $\mathsf{same} \gets $ 
      \Statex \quad $ \Call{OblCompare}{\mathsf{ReqArray}_\ell[j].\texttt{LSHbucketID},\ \mathsf{ReqArray}_\ell[j{-}1].\texttt{LSHbucketID}}$
      \Statex \textit{// slot-wise fold of gradient state from copy $j$ into aligned copy $j-1$}
      \State \Call{FoldIntoPrev}{$\mathsf{same},\ \mathsf{ReqArray}_\ell[j],\ \mathsf{ReqArray}_\ell[j{-}1]$}
      \State \Call{MarkDummy}{$\mathsf{same},\ \mathsf{ReqArray}_\ell[j]$}
          \Comment{set $\texttt{isDummy} \gets 1$ if $\mathsf{same}$}
    \EndFor

    \State $\Call{OblSort}{\mathsf{ReqArray}_\ell,\ \texttt{key}=\texttt{isDummy}}$
    \Comment{push dummies to end; array length unchanged}

    \State \Return $\mathsf{ReqArray}_\ell$
  \end{algorithmic}
\end{algorithm}

Algorithm~\ref{alg:obl-merge} performs this consolidation in three steps.
First, an oblivious sort (e.g.,~bitonic sort~\cite{batcher1968sorting}) groups request entries by \texttt{LSHbucketID} (line~1), making all request entries for the same bucket contiguous.
This step does not sort by neuron ID.

Second, a backward linear scan from $j = T-1$ down to $1$ (lines~2--5) processes each pair of adjacent entries.
For each $j$, \texttt{OblCompare}~\cite{dauterman2021snoopy} tests whether entries $j$ and $j-1$ share the same \texttt{LSHbucketID} (line~3).
If so, \textsc{FoldIntoPrev} obliviously accumulates the gradient state of entry $j$ into entry $j-1$ slot-by-slot: for each neuron slot $s \in [0..\mathsf{PADSIZE}{-}1]$ and each weight dimension $d \in [0..D_\ell{-}1]$, it adds $\mathsf{ReqArray}_\ell[j].\texttt{Neuron-Buffer}[s].\mathsf{t}[d]$ into the corresponding field of entry $j{-}1$, and likewise for $\mathsf{tbias}$ (line~4).
This slot-wise fold is correct because entries with the same \texttt{LSHbucketID} contain aligned copies of the same bucket, so the same neuron appears in the same slot across those copies.
The donor entry $j$ is then marked as dummy via \textsc{MarkDummy} (line~5).
Both \textsc{FoldIntoPrev} and \textsc{MarkDummy} use \texttt{OblChoose} to ensure branch-free execution regardless of the match outcome.
Because the scan proceeds backward, gradients from a group of $k$ entries with the same \texttt{LSHbucketID} are successively folded into the first entry of the group, which retains the complete aggregated gradient state for that bucket copy.

Third, a final oblivious sort by \texttt{isDummy} (line~6) pushes all marked-dummy entries to the end while keeping the array length $T$ unchanged, producing a compact prefix of unique bucket entries followed by dummy padding.

\textbf{Obliviousness.}
All three steps operate on fixed-length arrays with publicly determined sizes.
The two oblivious sorts and the linear scan each produce access patterns that depend only on $T$.
All conditional operations within the scan (\texttt{OblCompare}, \texttt{OblChoose} inside \textsc{FoldIntoPrev} and \textsc{MarkDummy}) are branch-free.
Hence, the merge reveals nothing beyond public parameters.

\subsection{Oblivious Optimizer}
\label{app:obl-optimizer}
\textbf{Algorithm~\ref{alg:adam-update} (Adam Update).}
After gradient accumulation (Algorithm~\ref{alg:updater-bp}) and oblivious merge (Algorithm~\ref{alg:obl-merge}), each neuron record in $\mathsf{ReqArray}_\ell$ holds a single consolidated gradient in $u.\mathsf{t}[\cdot]$ and $u.\mathsf{tbias}$, aggregated across all inputs in the batch that requested that neuron.
Algorithm~\ref{alg:adam-update} applies the Adam optimizer~\cite{kingma2014adam} to update every neuron's weights and bias before write-back to the LSH table.
For dense layer~0, the same Adam rule is applied by a separate fixed public-order scan over all $N_0$ dense neurons.
This consumes the layer-0 gradients accumulated in Phase~3 of Algorithm~\ref{alg:updater-bp}.

The algorithm first computes a bias-corrected learning rate $\tilde{\alpha} = \alpha \cdot \sqrt{1 - \beta_2^{\,\mathsf{step}}} \;/\; (1 - \beta_1^{\,\mathsf{step}})$ (line~1), absorbing the standard Adam bias-correction terms into the learning rate rather than correcting the moment estimates directly.
This is mathematically equivalent to the standard formulation~\cite{kingma2014adam} and avoids two additional divisions per weight per neuron.

The main body (lines~2--15) performs a flat linear scan over all $T \cdot \mathsf{PADSIZE}$ neuron slots in $\mathsf{ReqArray}_\ell$.
For each neuron $u$, it iterates over all $D_\ell$ weight dimensions: the first-moment estimate $u.\mathsf{m}[d]$ is updated with the gradient $u.\mathsf{t}[d]$ (line~6), the second-moment estimate $u.\mathsf{v}[d]$ is updated with the squared gradient (line~7), and the weight is adjusted accordingly (line~8).
The gradient accumulator $u.\mathsf{t}[d]$ is then reset to zero (line~9) in preparation for the next batch.
The same update is applied to the bias terms $u.\mathsf{mBias}$, $u.\mathsf{vBias}$, and $u.\mathsf{bias}$ (lines~11--14).

For dummy neurons, the gradient accumulators $u.\mathsf{t}[\cdot]$ and $u.\mathsf{tbias}$ are zero throughout training (as established in the backpropagation analysis).
Consequently, the moment estimates remain at their initialized values of zero, and the weight update $\tilde{\alpha} \cdot 0 / (0 + \epsilon) = 0$ leaves dummy weights unchanged.
Dummy neurons are thus processed identically to real neurons without branching.

\textbf{Obliviousness.}
The loop bounds $T$, $\mathsf{PADSIZE}$, and $D_\ell$ are all public parameters.
Every neuron slot executes the identical sequence of arithmetic operations regardless of whether it is real or dummy.
The bias-corrected learning rate $\tilde{\alpha}$ depends only on the public training step $\mathsf{step}$ and public hyperparameters.
Hence, the Adam update follows an input-independent access pattern.

\begin{algorithm}[ht]
  \caption{\textsc{Adam Update for layer $\ell$}}
  \label{alg:adam-update}
  \footnotesize
  \begin{algorithmic}[1]
    \Require
        Merged $\mathsf{ReqArray}_\ell[0..T-1]$ from \textsc{OblMergeRequests} (Algorithm~\ref{alg:obl-merge}), where $T = B \cdot \mathsf{lenSeq}_\ell$;
        
        each entry contains a \texttt{Neuron-Buffer} of $\mathsf{PADSIZE}$ neuron records;

        $D_\ell$: weight dimension at layer $\ell$;
        
        each neuron record $u$ stores: weights $u.\mathsf{weights}[0..D_\ell{-}1]$, bias $u.\mathsf{bias}$,
        first-moment estimates $u.\mathsf{m}[0..D_\ell-1]$ and $u.\mathsf{mBias}$,
        second-moment estimates $u.\mathsf{v}[0..D_\ell-1]$ and $u.\mathsf{vBias}$,
        and gradient accumulators $u.\mathsf{t}[0..D_\ell-1]$ and $u.\mathsf{tbias}$ populated by backpropagation (Algorithm~\ref{alg:updater-bp}) and merge;

        hyperparameters: base learning rate $\alpha$, decay rates $\beta_1, \beta_2$, stability constant $\epsilon$;
        global training step $\mathsf{step}$ (public)

    \Ensure All neuron weights and biases updated; moment estimates updated; gradient accumulators reset to zero

    \Statex
    \State $\tilde{\alpha} \gets \alpha \cdot \sqrt{1 - \beta_2^{\,\mathsf{step}}} \;/\; (1 - \beta_1^{\,\mathsf{step}})$
        \Comment{bias-corrected learning rate}

    \Statex
    \For{$k = 0$ to $T - 1$}
      \For{$s = 0$ to $\mathsf{PADSIZE} - 1$}
        \State $u \gets \mathsf{ReqArray}_\ell[k].\texttt{Neuron-Buffer}[s]$

        \Statex
        \State \textit{// Update weights}
        \For{$d = 0$ to $D_\ell - 1$}
          \State $u.\mathsf{m}[d] \gets \beta_1 \cdot u.\mathsf{m}[d] + (1 - \beta_1) \cdot u.\mathsf{t}[d]$
          \State $u.\mathsf{v}[d] \gets \beta_2 \cdot u.\mathsf{v}[d] + (1 - \beta_2) \cdot u.\mathsf{t}[d]^{\,2}$
          \State $u.\mathsf{weights}[d] \gets u.\mathsf{weights}[d] + \tilde{\alpha} \cdot u.\mathsf{m}[d] \;/\; (\sqrt{u.\mathsf{v}[d]} + \epsilon)$
          \State $u.\mathsf{t}[d] \gets 0$
              \Comment{reset gradient accumulator}
        \EndFor

        \Statex
        \State \textit{// Update bias}
        \State $u.\mathsf{mBias} \gets \beta_1 \cdot u.\mathsf{mBias} + (1 - \beta_1) \cdot u.\mathsf{tbias}$
        \State $u.\mathsf{vBias} \gets \beta_2 \cdot u.\mathsf{vBias} + (1 - \beta_2) \cdot u.\mathsf{tbias}^{\,2}$
        \State $u.\mathsf{bias} \gets u.\mathsf{bias} + \tilde{\alpha} \cdot u.\mathsf{mBias} \;/\; (\sqrt{u.\mathsf{vBias}} + \epsilon)$
        \State $u.\mathsf{tbias} \gets 0$
            \Comment{reset gradient accumulator}
      \EndFor
    \EndFor
  \end{algorithmic}
\end{algorithm}

\subsection{Oblivious LSH Refresh}
\label{app:lsh-refresh}

The LSH table maps each neuron to an LSH bucket determined by its WTA hash signature.
As neuron weights change during training, these signatures may shift, so the table must be periodically refreshed to maintain hash consistency.
Algorithm~\ref{alg:lsh-refresh-obl} describes the oblivious refresh procedure.

The refresh operates on a flat array~$A$ that contains all entries currently stored in~$\mathsf{LSHTable}_\ell$.
During initialization ($\mathsf{isInit}=1$), the algorithm first flattens the initial real-neuron records into $A$ and then appends $\PADSIZE$ dummies per bucket, each with a public target bucket label (lines~1--3).
The subsequent sort relocates both dummy and real entries to their target buckets.
In subsequent refreshes, $A$ already contains the full padded table (including the overflow area described below), so no new dummies are added.
 
The procedure has four steps.
First, every entry in~$A$ receives a fresh bucket assignment $\texttt{newID}$ by recomputing its WTA hash (lines~4--6).
Second, an oblivious sort groups entries by bucket and dummy flag, placing real entries before dummies within each bucket (line~7).
Third, \textsc{EnforceCapacity} (Algorithm~\ref{alg:enforce-capacity}) marks every entry beyond position~$\PADSIZE$ within its bucket as overflow (line~8).
Fourth, a second oblivious sort packs all non-overflow entries before overflow entries while preserving bucket order (line~9).
The main prefix of~$A$ is written back to $\mathsf{LSHTable}_\ell$; the remaining positions form the overflow area (line~10--11).
 
\textbf{Overflow Area.}
After rehashing, buckets may receive fewer than, exactly, or more than $\PADSIZE$ real neurons.
The dummy reservoir fills unused main-region slots, and any excess entries are stored in a dedicated overflow area appended to~$\mathsf{LSHTable}_\ell$.
 
This design ensures that the size of~$A$ and the number of entries written back are both fixed and publicly determined, so the refresh procedure's memory-access pattern reveals no information about the distribution of neurons across buckets.
$\PADSIZE$ is chosen conservatively based on the dataset, and any dataset-dependent selection rule is treated as public.

\textbf{Remark on Overflow and Security.} 
We emphasize that an attacker \emph{cannot} distinguish whether the content of an overflow slot is a real or a dummy neuron. 
The refresh algorithm consists of fixed-length oblivious sorting, a fixed linear scan (Algorithm~\ref{alg:enforce-capacity}), and write-backs of size based on public parameters, none of which depend on the nature (real or dummy) of the content in the overflow region. 

Moreover, the main region (LSH table) and overflow area have fixed sizes, and the refresh procedure always writes the same number of entries to each region regardless of whether real-neuron overflow occurs. 
Consequently, real neurons in the overflow region do not change the observable access trace; 
the security guarantee of Theorem~\ref{thm:system-final} therefore holds whether the overflow region contains only dummy entries or also real neurons.

What overflow affects is the speed of the training, not leakage: an overflowing neuron is excluded from the current training iteration, i.e., not updated, and re-enters the candidate pool at the next refresh.

\textbf{Choice of \texttt{PADSIZE} and Overflow Handling.}
We model the bucket capacity \texttt{PADSIZE} as a public parameter that we choose empirically. 
Notice that in order to derive a theoretical analysis on the maximum bucket load, one has to assume knowledge of the distribution of weights of neurons during the training process. 
Recall here that in traditional hashing, the item hashes to a location randomly; therefore, the analysis can be done independently of the hashed item's value, but this is not the case with LSH since the hashed location depends on the value.
To the best of our knowledge, there is no universal and commonly agreed behavior of the values of neurons during training; therefore, theoretical analysis is not meaningful.
This is why we followed an empirical approach, matching the bucket capacity used by SLIDE\cite{chen2020slide}, and set \texttt{PADSIZE}$ = 128$ for all datasets.

Exceeding \texttt{PADSIZE} does not permanently drop neurons. 
When a bucket receives more than \texttt{PADSIZE} real neurons after a rehash, \texttt{OblRefreshLSH} (see Algorithm \ref{alg:lsh-refresh-obl}) marks the excess entries as \texttt{isOverflow} and stores them in a dedicated overflow region appended to the LSH table. 
An overflowing neuron is excluded from training only until the next refresh. 
At the next refresh point,  the neuron is written back into the candidate array, rehashed together with every other neuron, and competes for a main-region slot (i.e., inside the LSH table) on equal terms. 
No neuron is permanently evicted from training.

The residual effect on accuracy is therefore governed by how often, and for how long, a neuron is evicted. 
It is worth noting that training spans many epochs and triggers many refreshes at fixed intervals, so a neuron evicted during one refresh is reconsidered shortly afterward. 
Additionally, adaptive sparsification is approximate by construction: the active set produced by LSH is already only an estimate of the truly top-activated neurons (see~\S\ref{sec:SLIDE}), and training is robust to precisely this class of perturbation.

\textbf{Obliviousness.}
The WTA hash computation (line~5) processes every entry identically.
Both oblivious sorts (lines~7 and~9) depend only on the public array length~$|A|$.
\textsc{EnforceCapacity} (Algorithm~\ref{alg:enforce-capacity}) performs a single linear scan over~$A$ using branch-free primitives (\texttt{OblCompare}, \texttt{OblChoose}, \texttt{OblGt}), with access pattern determined solely by~$|A|$.
The final write-back copies the main prefix and overflow suffix to fixed regions.
Hence, the entire refresh procedure follows an input-independent access pattern.

\begin{algorithm}[ht]
  \caption{\textsc{OblRefreshLSH}$(\mathsf{LSHTable}_\ell)$}
  \label{alg:lsh-refresh-obl}
  \footnotesize
  \begin{algorithmic}[1]
    \Require
        LSH table state $\mathsf{LSHTable}_\ell$ with a main region of $\mathsf{numLshBucs}$ buckets, each of fixed capacity $\mathsf{PADSIZE}$,
        and a public-size overflow region;

        an initialization flag $\mathsf{isInit}\in\{0,1\}$, where $\mathsf{isInit}=1$ iff this is the initial construction of $\mathsf{LSHTable}_\ell$;
        
        WTA hash functions $h_1, \ldots, h_K$ for layer $\ell$;
        
        oblivious primitives~\cite{dauterman2021snoopy}:
        $\texttt{OblCompare}(a,b)$, $\texttt{OblChoose}(\mathsf{flag},v_1,v_0)$;
        $\texttt{OblSort}$ (oblivious sorting, e.g.,~bitonic sort~\cite{batcher1968sorting})

    \Ensure Refreshed $\mathsf{LSHTable}_\ell$ with updated main and overflow regions

    \Statex {//$\mathsf{A}$ contains all entries in the main region plus the overflow region}
    \State $\mathsf{A} \gets$ flatten all entries from the main and overflow regions of $\mathsf{LSHTable}_\ell$

    \If{$\mathsf{isInit}=1$} \Comment{add dummies only during initialization}
        \State Append $\mathsf{numLshBucs} \times \mathsf{PADSIZE}$ fresh dummies to $\mathsf{A}$, with $\mathsf{PADSIZE}$ dummies per public target bucket
    \EndIf

    \For{$j = 0$ to $|\mathsf{A}| - 1$}
      \State $\mathsf{candID} \gets \Call{WTA}{\mathsf{A}[j]}$
      \State $\mathsf{A}[j].\texttt{newID} \gets \Call{OblChoose}{\mathsf{A}[j].\texttt{isDummy},\; \mathsf{A}[j].\texttt{targetID},\; \mathsf{candID}}$
          \Comment{real entries use WTA; dummies keep public targets}
    \EndFor

    \State $\Call{OblSort}{\mathsf{A},\ \texttt{key} = (\texttt{newID},\, \texttt{isDummy})}$
        \Comment{group by bucket; reals before dummies}

    \State \Call{EnforceCapacity}{$\mathsf{A},\, \mathsf{PADSIZE}$}
        \Comment{linear scan; mark entries beyond $\mathsf{PADSIZE}$ per bucket as \texttt{isOverflow}}

    \State $\Call{OblSort}{\mathsf{A},\ \texttt{key} = (\texttt{isOverflow},\, \texttt{newID})}$
        \Comment{pack valid entries before the overflow area}

    \State Write $\mathsf{A}[0..\mathsf{numLshBucs} \times \mathsf{PADSIZE} - 1]$ back to $\mathsf{LSHTable}_\ell$ sequentially
    \State Write the remaining entries of $\mathsf{A}$ back to the overflow region sequentially
  \end{algorithmic}
\end{algorithm}

\textbf{EnforceCapacity (Algorithm~\ref{alg:enforce-capacity}).}
After the first oblivious sort groups entries by \texttt{newID} with real entries preceding dummies within each bucket, \textsc{EnforceCapacity} performs a single linear scan to detect overflow.
It maintains a bucket-local position counter over all entries seen in the current bucket; when a new bucket is encountered, the counter resets to zero via \texttt{OblChoose}.
Any entry whose bucket-local position exceeds $\mathsf{PADSIZE}$ is flagged \texttt{isOverflow = 1}; all other entries retain \texttt{isOverflow = 0}.
Because real entries precede dummies, the theorem's no-real-overflow condition implies that all overflow entries in the covered executions are dummy entries.
The scan touches every position in $A$ exactly once with branch-free primitives (\texttt{OblCompare}, \texttt{OblChoose}, \texttt{OblGt}), so the access pattern depends only on $|A|$, which is a public parameter.
Overflow entries are stored in a dedicated region at the tail of $\mathsf{LSHTable}_\ell$ and are excluded from the current training iteration.
They are reassigned to valid bucket positions during the next invocation of \textsc{OblRefreshLSH}.
$\mathsf{PADSIZE}$ is chosen conservatively based on the dataset.

\begin{algorithm}[htb]
  \caption{\textsc{EnforceCapacity}$(A,\, \mathsf{PADSIZE})$}
  \label{alg:enforce-capacity}
  \footnotesize
  \begin{algorithmic}[1]
    \Require
        Array $A$ sorted by $(\texttt{newID},\, \texttt{isDummy})$
        (real entries before dummies within each bucket);
        bucket capacity $\mathsf{PADSIZE}$

    \Ensure
        Each bucket retains at most $\mathsf{PADSIZE}$ entries;
        excess real and dummy entries are marked \texttt{isOverflow}

    \State $\mathsf{curBucket} \gets A[0].\texttt{newID}$
    \State $\mathsf{count} \gets 0$

    \For{$j = 0$ to $|A| - 1$}
      \State $\mathsf{bucketChanged} \gets \neg \Call{OblCompare}{A[j].\texttt{newID},\; \mathsf{curBucket}} $
      \State $\mathsf{count} \gets \Call{OblChoose}{\mathsf{bucketChanged},\; 0,\; \mathsf{count}}$
      \State $\mathsf{curBucket} \gets A[j].\texttt{newID}$
      \State $\mathsf{count} \gets \mathsf{count} + 1$
      \State $\mathsf{over} \gets  \Call{OblGt}{\mathsf{count}, \mathsf{PADSIZE}}$
      \State $A[j].\texttt{isOverflow} \gets \Call{OblChoose}{\mathsf{over},\; 1,\; 0}$
    \EndFor
  \end{algorithmic}
\end{algorithm}

\section{Oblivious Primitives}
\label{app:obl-primitive}

Although \system's request generation and OHT-based request identification are oblivious by construction (\S\ref{sec:components}),
the subsequent \emph{in-bin} processing must also avoid data-dependent leakage.
Concretely, after an OHT bin is selected, \system must (i) check which candidate slots match the target request and
(ii) copy the corresponding neuron/state into the request's neuron-buffer.
If implemented with ordinary branches (e.g., \texttt{if} statements) or data-dependent memory accesses,
these steps can leak through the instruction trace (branch predictor) and microarchitectural effects.
Therefore, \system relies on branch-free, data-independent \emph{oblivious primitives} for comparison and conditional assignment.

\subsection{Oblivious Comparison (\textsc{OblCompare})}
\label{subsec:obli_compare}
 
\textsc{OblCompare} is the fundamental equality predicate used throughout \system.
It performs a branch-free, data-independent test of whether two scalar operands are equal~\cite{dauterman2021snoopy}.
Algorithm~\ref{alg:obli_compare} shows the conceptual instruction sequence:
a \texttt{cmp} instruction sets the processor flags based on the difference $x - y$,
and \texttt{sete} copies the zero flag (\texttt{ZF}) into the output byte.
This sequence contains no data-dependent jumps, no secret-dependent addressing, and no variable-length loops.
 
\textsc{OblCompare} is consumed by operations that require identity checks,
such as OHT lookup (Algorithm~\ref{alg:neuron-fetcher}, line~7),
label matching during backpropagation (Algorithm~\ref{alg:updater-bp}, line~9),
and duplicate detection in oblivious merge (Algorithm~\ref{alg:obl-merge}, line~3).
 
\begin{algorithm}[H]
\caption{\textsc{OblCompare}$(x,y)$}
\label{alg:obli_compare}
\begin{algorithmic}[1]
  \Require Scalars $x$, $y$
  \Ensure $\textit{eq}=1$ iff $x=y$; $0$ otherwise
  \State \texttt{cmp }$y,x$                 
  \Comment{updates \texttt{ZF} $\leftarrow (x-y==0)$}
  \State \texttt{sete eq}                   
  \Comment{$eq \leftarrow 1$ if \texttt{ZF}; $0$ otherwise}
  \State \Return \textit{eq}
\end{algorithmic}
\end{algorithm}

\subsection{Oblivious Greater-Than}
\label{subsec:obli_gt}

\textsc{OblGt} is an oblivious predicate that materializes a strict-greater-than bit for use by later branch-free updates~\cite{dauterman2021snoopy}.
For integral values, our implementation uses the x86 \texttt{cmp}/\texttt{setg} sequence, analogous to \textsc{OblCompare}.
For floating-point values, we omit the algorithmic details here, as it follows similar logic.
A type-specific fixed FP/SIMD handles floating-point operands.
\textsc{OblCompare} first sets its output from the zero flag (\texttt{ZF}) via \texttt{sete},
\textsc{OblGt} then reads the sign and overflow flags via \texttt{setg} (set~if~greater, signed comparison) after the same \texttt{cmp} instruction.
The resulting instruction sequence (Algorithm~\ref{alg:obli_gt}) is identical in length and structure:
it contains no data-dependent jumps, no secret-dependent addressing, and no variable-length loops,
so the instruction trace is independent of the operand values.

\begin{algorithm}[H]
\caption{\textsc{OblGt}$(x,y)$ for integral operands}
\label{alg:obli_gt}
\begin{algorithmic}[1]
  \Require Scalars $x$, $y$
  \Ensure $\textit{gt}=1$ iff $x > y$
  \State \texttt{cmp }$y,x$                 
  \Comment{updates flags based on $x-y$}
  \State \texttt{setg gt}                   
  \Comment{$gt \leftarrow 1$ if $\texttt{SF}=\texttt{OF}$ and $\texttt{ZF}=0$; $0$ otherwise}
  \State \Return \textit{gt}
\end{algorithmic}
\end{algorithm}

Together, \textsc{OblCompare} and \textsc{OblGt} form a minimal pair of branch-free predicates:
\textsc{OblCompare} produces the equality predicate consumed by operations such as OHT lookup (Algorithm~\ref{alg:neuron-fetcher}) and label matching (Algorithm~\ref{alg:updater-bp}),
while \textsc{OblGt} produces the ordering predicate consumed by ReLU clamping and the oblivious maximum in the softmax pass (Algorithm~\ref{alg:updater-ff}).
Both predicates serve as inputs to \textsc{OblChoose} (Algorithm~\ref{alg:obli_choose}), which performs the actual conditional data movement.

\subsection{Oblivious Conditional Select (\textsc{OblChoose})}
\label{subsec:obli_choose}
 
After obtaining a predicate from \textsc{OblCompare} or \textsc{OblGt}, \system must select between two candidate values without revealing which one was chosen.
Algorithm~\ref{alg:obli_choose} shows the \textsc{OblChoose} primitive~\cite{dauterman2021snoopy}.
Given a Boolean predicate $\textit{pred}\in\{0,1\}$ and two candidate values $v_1$, $v_0$, it returns $v_1$ if $\textit{pred}=1$ and $v_0$ otherwise,
while producing an identical instruction and memory-access trace regardless of the predicate's value.
On x86, this is naturally implemented with a conditional move (\texttt{cmov}) that does not introduce a branch.
 
\begin{algorithm}[htb]
\caption{\textsc{OblChoose}$(\textit{pred},\;v_1,\;v_0)$}
\label{alg:obli_choose}
\begin{algorithmic}[1]
  \Require $\textit{pred}\in\{0,1\}$, values $v_1$, $v_0$
  \Ensure Returns $v_1$ if $\textit{pred}=1$; $v_0$ otherwise
  \State \texttt{mov  tmp,\;$v_1$}          
  \State \texttt{test pred,\;pred}           
  \Comment{sets \texttt{ZF}$\leftarrow(\textit{pred}==0)$}
  \State \texttt{cmovz tmp,\;$v_0$}         
  \Comment{if \texttt{ZF}, tmp $\leftarrow v_0$}
  \State \Return \textit{tmp}
\end{algorithmic}
\end{algorithm}

\subsection{Byte-Level Oblivious Assignment (\textsc{OblChooseBytes})}
\label{subsec:obli_choosebytes}

The scalar \textsc{OblChoose} (Algorithm~\ref{alg:obli_choose}) handles a single register-width value via one \texttt{cmov} instruction.
In practice, however, \system often needs to conditionally assign \emph{contiguous memory blocks}.
For instance, the weight array or the gradient accumulator array inside a neuron record, rather than a single scalar or an entire composite object at once.

\textsc{OblChooseBytes} (Algorithm~\ref{alg:obli_choosebytes}) generalises \textsc{OblChoose} to an arbitrary-length byte buffer.
Given a predicate $\mathit{pred}$, two source buffers $t\_buf$ (selected when $\mathit{pred}=1$) and $f\_buf$ (selected when $\mathit{pred}=0$), and a destination buffer $\mathit{dst}$, each of $N$ bytes, it writes $t\_buf$ into $\mathit{dst}$ if $\mathit{pred}=1$, and $f\_buf$ into $\mathit{dst}$ otherwise.

Internally, it partitions the $N$ bytes into fixed-size chunks and applies \textsc{OblChoose} to each chunk, from largest to smallest.
For 32-byte and 16-byte chunks, the single-scalar \texttt{cmov} in \textsc{OblChoose} is replaced by a wider vector conditional-select instruction (AVX2 or SSE) that processes the entire chunk; the semantics remain identical.
The chunk decomposition depends only on the public buffer size $N$, so the instruction count and sequence are fixed regardless of $\mathit{pred}$.

\begin{algorithm}[htb]
\caption{\textsc{OblChooseBytes}$(\mathit{pred},\;N,\;t\_buf,\;f\_buf,\;\mathit{dst})$}
\label{alg:obli_choosebytes}
\begin{algorithmic}[1]
  \Require $\mathit{pred}\in\{0,1\}$; byte buffers $t\_buf[0..N{-}1]$, $f\_buf[0..N{-}1]$, $\mathit{dst}[0..N{-}1]$
  \Ensure $\mathit{dst} \gets \mathit{pred}\;?\;t\_buf : f\_buf$; instruction trace independent of $\mathit{pred}$
  \State $k \gets 0$
  \While{$k + 32 \le N$}
      \Comment{32-byte chunks via AVX2 vector \textsc{OblChoose}}
    \State $\mathit{dst}[k..k{+}31] $ 
    \Statex $\ \ \ \ \ \  \gets \textsc{OblChoose}(\mathit{pred},\;t\_buf[k..k{+}31],\;f\_buf[k..k{+}31])$
    \State $k \gets k + 32$
  \EndWhile
  \If{$k + 16 \le N$}
      \Comment{16-byte remainder via SSE vector \textsc{OblChoose}}
    \State $\mathit{dst}[k..k{+}15] $ 
    \Statex $\ \ \ \ \ \  \gets \textsc{OblChoose}(\mathit{pred},\;t\_buf[k..k{+}15],\;f\_buf[k..k{+}15])$
    \State $k \gets k + 16$
  \EndIf
  \If{$k + 8 \le N$}
      \Comment{8-byte remainder via scalar \textsc{OblChoose}}
    \State $\mathit{dst}[k..k{+}7] $ 
    \Statex $\ \ \ \ \ \  \gets \textsc{OblChoose}(\mathit{pred},\;t\_buf[k..k{+}7],\;f\_buf[k..k{+}7])$
    \State $k \gets k + 8$
  \EndIf
  \If{$k + 4 \le N$}
      \Comment{4-byte remainder}
    \State  $\mathit{dst}[k..k{+}3] $ 
    \Statex $\ \ \ \ \ \  \gets \textsc{OblChoose}(\mathit{pred},\;t\_buf[k..k{+}3],\;f\_buf[k..k{+}3])$
    \State $k \gets k + 4$
  \EndIf
  \If{$k + 2 \le N$}
      \Comment{2-byte remainder}
    \State $\mathit{dst}[k..k{+}1] $ 
    \Statex $\ \ \ \ \ \  \gets \textsc{OblChoose}(\mathit{pred},\;t\_buf[k..k{+}1],\;f\_buf[k..k{+}1])$
    \State $k \gets k + 2$
  \EndIf
  \If{$k < N$}
      \Comment{1-byte remainder}
    \State $\mathit{dst}[k] \gets \textsc{OblChoose}(\mathit{pred},\;t\_buf[k],\;f\_buf[k])$
  \EndIf
\end{algorithmic}
\end{algorithm}

\noindent\textbf{Usage in \system.}
When processing a composite object such as a neuron record, \system does not apply \textsc{OblChooseBytes} to the entire struct in one call.
Instead, it applies \textsc{OblChooseBytes} separately to each contiguous array field within the struct (e.g., the weight array $u.\mathsf{weights}[0..D_\ell{-}1]$, etc.), sharing the same predicate obtained from \textsc{OblCompare} or \textsc{OblGt}.
This field-by-field approach avoids the need to pack the entire struct into a single contiguous buffer while still ensuring that every field is assigned in a fixed, branch-free manner.

\subsection{Extending Primitives to Composite Objects}
\label{app:obli_composite}

The primitives defined in Sections~\ref{subsec:obli_compare}--\ref{subsec:obli_choosebytes} operate on scalars and contiguous byte buffers.
However, \system's core data structures, such as neurons, request entries, and OHT slots, are composite objects containing a mix of scalar fields (e.g., neuron ID, flags, bias) and array fields (e.g., weights, gradient accumulators, moment estimates).
To conditionally assign an entire composite object, \system performs an \emph{oblivious structural copy}: given a single predicate $\mathit{pred}$ (obtained from \textsc{OblCompare} or \textsc{OblGt}), it walks the struct's fields in a fixed, publicly determined order and applies the appropriate primitive to each:

\begin{itemize}[nosep]
  \item \textbf{Scalar fields} (e.g., ID, bias, flags, counters, etc.): a single \textsc{OblChoose} call per field.
  \item \textbf{Contiguous array fields} (e.g., weight vector $u.\mathsf{weights}[0..D_\ell-1]$, etc.): a single \textsc{OblChooseBytes} call per array, processing the buffer in 32/16/8/.../1-byte chunks as described in Section~\ref{subsec:obli_choosebytes}.
\end{itemize}

\noindent
The number of fields, the size of each array, and the traversal order are all compile-time constants determined by the struct layout.
No memory is allocated conditionally; all destination storage is pre-allocated, and the primitives only \emph{select} which source values become visible at the destination.
As a result, the instruction count and memory-access pattern of an oblivious structural copy depend only on the struct definition, not on the predicate or the data values.

Concretely, when scanning candidate slots inside an \texttt{OHT} bin (Algorithm~\ref{alg:neuron-fetcher}, lines~6--11), \system applies an oblivious structural copy at every entry in this bin using the predicate from \textsc{OblCompare}.
Every entry is written to the matching entries, which receive the actual neuron data, while non-matching entries receive their own current values (a self-assignment).
The adversary therefore observes the same sequence of memory accesses regardless of which slot is the true match.\\

\noindent\textbf{Oblivious Sorting.}
\system uses a sorting-network-based oblivious sort, instantiated with the iterative formulation of bitonic sort~\cite{batcher1968sorting}.
The implementation consists of three nested loops whose bounds depend only on the (public) array length~$n$:
the outer loop doubles the merge stride $k$ from $2$ to $n$,
the middle loop halves the comparison distance $j$ from $k/2$ to $1$,
and the inner loop processes the public representatives
$i \in [0,n)$ satisfying $(i \oplus j)>i$.
For each $i$, the partner index $i \oplus j$ (bitwise XOR) and the comparison direction (ascending or descending, determined by the bit $i \mathbin{\&} k$) are both fixed functions of $i$, $j$, and $k$. They do not depend on element values.
At each representative pair, the two elements are compared to obtain a predicate and then conditionally swapped via \textsc{OblChoose}.
Because all loop bounds, index pairs, and comparison directions are determined solely by the public array length,
the iterative structure contains no data-dependent branches and no recursion, producing a fully fixed instruction and memory-access trace.

\emph{Parallelism.}
For a given $(k, j)$, 
all representative pairs $(i,\; i \oplus j)$ satisfying $(i \oplus j)>i$ are disjoint and independent.
The inner loop can therefore be executed in parallel, which is a well-known property of bitonic sorting networks~\cite{batcher1968sorting}.

\emph{Non-power-of-two lengths.}
The iterative bitonic sort requires the array length to be a power of two.
When the array length $n$ is not a power of two, \system decomposes it as $n = m + (n - m)$, where $m$ is the largest power of two less than $n$.
The first $m$ elements are sorted (in reverse) using the iterative bitonic sort;
the remaining $n - m$ elements are sorted recursively using the same decomposition;
and the two halves are combined via an oblivious bitonic merge.
The split point $m$ depends only on the public length $n$, so the decomposition introduces no data-dependent branching.

Because we extend \textsc{OblChoose} to composite objects (Section~\ref{app:obli_composite}) via \textsc{OblChooseBytes} (Section~\ref{subsec:obli_choosebytes}), the conditional swap operates on \emph{entire} composite records (e.g., request entries, neuron records) while keeping control flow, loop bounds, and accessed addresses unchanged.
The adversary can observe that pairs of elements are accessed in a fixed order (the sorting network's structural guarantee), but cannot distinguish whether any pair was swapped.
Therefore, the sorted order is not revealed, and the full oblivious sort is realized by combining the data-oblivious bitonic network with our composite \textsc{OblChoose} primitive.

\section{Security Proof}
\label{app:proof}

\subsection{Security Model}
\label{sec:security-model}

We model \system as a \emph{two-party computation} between a \emph{User} (i.e., $P_1$) and a \emph{Hypervisor} (i.e., $P_2$).
The User holds input $x_1 = (P, D)$, consisting of public parameters~$P$ and private training data~$D$.
The Hypervisor holds input $x_2 = P$.
We adopt the standard real/ideal paradigm for secure two-party computation~\cite{canetti2000security,lindell2017simulate}.
 
\noindent\textbf{Real world.}
In the real world, the two parties execute a protocol~$\pi_{\system}$.
Our protocol~$\pi_{\system}$ is \emph{\system \ running inside a TDX trust domain}.
The User sends their input $x_1=(P, D)$ to the trust domain, which executes the \system training algorithm and returns the trained model~$M$ to the User.
The Hypervisor is passive: it does not send messages to the trust domain.
 
Although all code and data reside within the trust domain and are protected by hardware memory encryption, in a \emph{real implementation} (as opposed to a theoretical framework), the Hypervisor can observe which encrypted memory locations are accessed at page and cache-line granularity.
For the proof, we model these observations by a canonical algorithmic trace~$\tau$, consisting of symbolic object/offset accesses and public per-worker schedules.
Concrete page/cache-line traces are covered only when the implementation maps symbolic accesses to addresses by a public layout and uses a public deterministic interleaving, or the public per-worker trace abstraction, for parallel work.
Formally, the view is defined as $\mathsf{view}_{\mathsf{H}} = \{P, \tau\}$.
 
\noindent\textbf{Ideal World.}
In the ideal world, trusted functionality $\F_{\mathsf{train}}$ between $P_1$ and $P_2$ is executed by a trusted party of the ideal world. $\F_{\mathsf{train}}$ receives inputs from both parties, computes the result, and returns an output to each party:
\[
    \F_{\mathsf{train}}\bigl[(P, D),\; P\bigr] = \bigl[M,\; \bot\bigr],
\]
where $P_1$'s input is $(P, D)$ and output is $M$, while $P_2$'s input is $P$ and output is $\bot$. 
The User receives the trained model~$M$; the Hypervisor receives nothing~($\bot$).

\begin{figure}[ht]
\centering
\fbox{\begin{minipage}{0.94\linewidth}
\textbf{Functionality $\F_{\mathsf{train}}$}
\\ \hrule
\begin{tabular}{@{}l@{}}
$(M,\; \bot) \leftarrow \F_{\mathsf{train}}\bigl((P, D),\; P\bigr)$: \\[4pt]
\quad 1: Receive input $(P, D)$ from $P_1$ (User) \\
\quad\quad \; and $P$ from $P_2$ (Hypervisor). \\[2pt]
\quad 2: Train model $M$ on dataset $D$ with public parameters $P$. \\[2pt]
\quad 3: \textbf{return} $M$ to $P_1$ and $\bot$ to $P_2$.
\end{tabular}
\end{minipage}}
\caption{Ideal training functionality.}
\label{func:train}
\end{figure}

\paragraph{Security Definition.}
Following the standard semi-honest definition~\cite{canetti2000security,lindell2017simulate}, we define:
\begin{itemize}[nosep]
    \item $\mathbf{Real}_{\pi_{\system},\mathcal{H}}(\kappa;\, (P, D),\, P)$:
    Run protocol~$\pi_{\system}$ with security parameter~$\kappa$, where the User uses input $(P, D)$ and the Hypervisor uses input $P$.
    Let $\mathsf{view}_{\mathsf{H}} = \{P, \tau\}$ be the Hypervisor's view (its input~$P$ and the observed trace~$\tau$), and let the parties' outputs be $y_1 = M$, $y_2 = \bot$.\\
    Output $(\mathsf{view}_{\mathsf{H}},\, (y_1, y_2))$.
    \item $\mathbf{Ideal}_{\F_{\mathsf{train}},\Sim}$:
    Compute $(y_1, y_2) \leftarrow \F_{\mathsf{train}}\bigl((P, D),\, P\bigr)$, where $y_1=M$ and $y_2=\bot$.\\
    Output $\bigl(\Sim(P, \bot),\, (M, \bot)\bigr)$.
\end{itemize}
 
\begin{definition}[Security of \system]
\label{def:security}
Protocol~$\pi_{\system}$ \emph{securely realizes} $\F_{\mathsf{train}}$ in the presence of a semi-honest adversary (the Hypervisor) if there exists a PPT simulator~$\Sim$ such that, for all public parameters~$P$ and all datasets~$D$ consistent with~$P$:
\[
    \bigl\{\mathbf{Ideal}_{\F_{\mathsf{train}},\Sim}(\kappa;\, (P,D),\, P)\bigr\}
    \;\approx\;
    \bigl\{\mathbf{Real}_{\pi_{\system},\mathcal{H}}(\kappa;\, (P,D),\, P)\bigr\}
\]
where $\approx$ denotes computational indistinguishability with respect to the security parameter~$\kappa$.
\end{definition}
 
Concretely, the simulator~$\Sim$ receives only~$P$ and must produce a simulated Hypervisor view indistinguishable from $\{P, \tau\}$.
Since the Hypervisor is passive, the simulation task reduces to generating a trace that is indistinguishable from the real execution trace without knowledge of~$D$.

\noindent\textbf{Simplification.}
The protocol~$\pi_{\system}$ computes the same function as $\F_{\mathsf{train}}$.
The only difference is how the output is produced.
In \system we follow an oblivious architecture with techniques such as dummy padding, oblivious sorting, and branch-free primitives that alter the memory access pattern but not the final output.
Therefore, the outputs $y_1 = M$ and $y_2 = \bot$ are identically distributed in both worlds, and the security condition reduces to comparing the Hypervisor's view alone.

Note that the joint distribution $(\mathsf{view}_{\mathsf{H}}, (y_1, y_2))$ is simulatable, not just the marginal $\mathsf{view}_{\mathsf{H}}$.
As we prove in Sections~\ref{sec:proof-req}--\ref{sec:proof-upd}, the canonical algorithmic trace~$\tau$ produced by each component depends only on~$P$ and on fresh randomness (e.g., dummy inputs) that is independent of the private data~$D$, and of the training outcome~$M$.
Consequently, $\tau$ is independent of~$M$ given~$P$, and the security condition reduces to simulating the Hypervisor's trace alone.

\begin{definition}[Trace-based Security of \system]
\label{def:security-working}
Protocol~$\pi_{\system}$ securely realizes $\F_{\mathsf{train}}$ if there exists a simulator~$\Sim$ such that, for all~$P$ and all~$D$ consistent with~$P$:
\[
    \bigl\{\Sim(P)\bigr\}
    \;\approx_c\;
    \bigl\{\mathsf{view}_{\mathsf{H}}^{\pi_{\system}}(P, D)\bigr\}
\]
where $\mathsf{view}_{\mathsf{H}}^{\pi_{\system}}(P, D) = \{P, \tau\}$ is the Hypervisor's (H) view during real execution, and $\approx_c$ denotes computational indistinguishability in the security parameter~$\kappa$.
\end{definition}

\noindent\textbf{Simulator Convention.}
In the component proofs that follow, each simulator is described as an algorithm that mirrors the real protocol's control flow on dummy inputs.
The simulator's \emph{output in the security statement} is the canonical algorithmic trace generated during its execution, denoted $\hat{\tau}$.
The data structures returned by the simulator (e.g., a dummy $\mathsf{ReqArray}$ or a dummy $\mathsf{LSH}$ table) are internal states used solely for composing consecutive components.
They are not part of the Hypervisor's view.
Formally, $\Sim(P)$ denotes $(P, \hat{\tau})$ where $\hat{\tau}$ is the canonical symbolic trace generated by the simulator; the concrete-address caveat is the one stated in the security model above.

\subsection{Public and Private Information.}
\label{sec:public-private}

The adversary (Hypervisor, H) is assumed to know the following \emph{public parameters}~$P$.
\begin{itemize}[nosep]
    \item \emph{NN architecture}: the number of layers~$L$, the neuron count per layer (specifically, layer-0 size~$N_0$), each layer's weight dimension~$D_\ell$, and each layer's activation type (ReLU or Softmax);
    \item \emph{Training configuration}: batch size~$B$, the number of training iterations, the Adam optimizer~\cite{kingma2014adam} hyperparameters ($\alpha, \beta_1, \beta_2, \epsilon$), and the training step;
    \item \emph{LSH configuration}: the number of hash functions~$K$, the number of buckets~$\mathsf{numLshBucs}$, the bucket capacity~$\PADSIZE$, any dataset-dependent rule used to choose $\PADSIZE$, the probing sequence length~$\mathsf{lenSeq}_\ell$, the MP-WTA parameters ($r{=}3$, $N{=}3$), and the LSH rebuild period;
    \item \emph{OHT configuration}: the number of bins per tier~$\mathsf{numBin}$, the bin capacity~$\mathsf{binCap}$, duplicate-record semantics for repeated \texttt{LSHbucketID} values, and the no-observable-overflow/failure capacity contract;
    \item \emph{Schedule metadata}: all public hash descriptions or seeds, sampled WTA subsets, deterministic tie-breaking rules, and public optimization schedules used by O1--O3;
    \item \emph{Data characteristics}: the dataset size, the input dimensionality~$D_{input}$, the number of non-zero features per input~$\mathsf{nnz}$, the non-zero feature indices $\{d_1,\ldots,d_{\mathsf{nnz}}\}$ of each input, and the label count~$|Y_i|$ per input;
    \item \emph{Derived quantities}: the \texttt{Request-Array} size $T = B \cdot \mathsf{lenSeq}_\ell$, the number of neuron slots per input ($R_\ell = \mathsf{lenSeq}_\ell \cdot \PADSIZE$ for LSH-selected layers, and $R_0=N_0$ for dense layer~0), and the Neuron Fetcher operation mode (Read/Write).
\end{itemize}
These parameters fully determine all loop bounds, array dimensions, and traversal schedules throughout the protocol.

\noindent
\system protects the following \emph{private information}: the input training data values $D$;
all neuron weights, biases, activations, gradients, deltas, and optimizer moment estimates;
all data-dependent WTA/LSH signatures and neuron-to-bucket assignments;
all training label values~$Y_i$ and input feature values~$x_{d_j}$;
and all intermediate quantities arising during execution
(e.g., OHT contents, softmax normalization constants, and internal vectors used during feedforward and backpropagation).

\

\noindent\emph{Remark on input sparsity.}
\system focuses on protecting the access patterns of the wide output layer, where data-dependent neuron selection via LSH creates the primary side-channel threat (Section~\ref{sec:overview}, Figure~\ref{fig:overview}).
The non-zero feature indices $\{d_1,\ldots,d_{\mathsf{nnz}}\}$ of each input are treated as public because they affect the dense layer~0 feedforward scan and the first-layer gradient update (Algorithm~\ref{alg:updater-bp}, Phase~3), which operate on the narrow input layer and do not involve LSH-based selection.

Hiding sparse access patterns has been studied in both TEE-based systems~\cite{ohrimenko2016oblivious,kato2023olive, umar2025efficient} and MPC frameworks~\cite{schoppmann2019make,damie2025secure}.
These techniques address an orthogonal concern and can be composed with \system at an additional cost of $O(N_0 \cdot D \log D)$ per batch.
We consider this complementary and leave integration to future work.

For intuition, we provide a full-dimension scanning front-end that removes this exposure, together with its measured cost on the evaluated workloads. 
With that front end enabled, the indices $\{d_1, \dots, d_{\mathrm{nnz}}\}$ are removed from $P$, and Theorem~\ref{thm:system-final} holds against an adversary that additionally does not learn the sparse input's non-zero positions.

\subsection{Oblivious Building Blocks}
\label{sec:building-blocks}

We prove the security of \system in the \emph{hybrid model}~\cite{canetti2000security}.
In the hybrid model, the protocol may invoke a set of functionalities as black-box subroutines.
Each such subroutine correctly computes its specified function and produces a trace that depends only on public parameters, which is oblivious.
Security proven in the hybrid model extends to the fully instantiated protocol, since each functionality can be replaced by a concrete oblivious implementation while preserving security, via sequential composition~\cite{canetti2000security}.

Below we describe the functionalities that \system uses in the hybrid model.
Each corresponds to a standard oblivious primitive whose security has been established in prior work.

\begin{figure}[ht]
\centering
\fbox{\begin{minipage}{0.94\linewidth}
\textbf{Functionality $\F_{\OSort}$}
\\ \hrule
\begin{tabular}{@{}l@{}}
$A' \leftarrow \F_{\OSort}(A, n, \mathsf{key})$: \\[2pt]
\quad 1: Sort the $n$-element array $A$ by comparison key $\mathsf{key}$.\\
\quad 2: \textbf{return} sorted array $A'$.
\end{tabular}
\end{minipage}}
\caption{Oblivious sort functionality.}
\label{func:osort}
\end{figure}

\noindent
$\F_{\OSort}$ sorts an array of $n$ elements by a given comparison key.
The trace depends only on~$n$; a secure realization $\pi_{\mathsf{OSort}}$ exists via bitonic sort~\cite{batcher1968sorting}.

\begin{figure}[ht]
\centering
\fbox{\begin{minipage}{0.94\linewidth}
\textbf{Functionality $\F_{\OCmpSet}$}
\\ \hrule
\begin{tabular}{@{}l@{}}
$\F_{\OCmpSet}(a, b, \mathsf{pred}, \mathsf{dst})$: \\[2pt]
\quad 1: Evaluate predicate $\mathsf{pred}$ on $(a, b)$.\\
\quad 2: \textbf{if} $\mathsf{pred}$ is true, write $a$ to $\mathsf{dst}$;
         \textbf{else} write $b$ to $\mathsf{dst}$.
\end{tabular}
\end{minipage}}
\caption{Oblivious compare-and-set functionality.}
\label{func:ocmpset}
\end{figure}

\noindent
$\F_{\OCmpSet}$ is an umbrella functionality for the branch-free scalar steps used in \system.
It covers the compare and select steps behind \texttt{OblCompare}, \texttt{OblGt}, and \texttt{OblChoose}.
The public input $\mathsf{pred}$ says which test is used.
The values $a$ and $b$ are private.
The trace does not depend on the truth value of the test or on the values $a$ and $b$.
A secure realization $\pi_{\mathsf{OCmpSet}}$ exists via branch-free compare and conditional move instructions~\cite{ohrimenko2016oblivious, dauterman2021snoopy}.

\begin{figure}[ht]
\centering
\fbox{\begin{minipage}{0.94\linewidth}
\textbf{Functionality $\F_{\OHT}$}
\\ \hrule
\begin{tabular}{@{}l@{}}
$(T,\; \mathsf{H}_1,\; \mathsf{H}_2) \leftarrow \F_{\OHT}.\mathsf{Build}(S,\; \mathsf{numBin},\; \mathsf{binCap})$: \\[2pt]
\quad 1: Use the hash functions
  $\mathsf{H}_1, \mathsf{H}_2$ specified by $P$. \\
\quad 2: Construct a two-tier hash table $T$ from key-value pairs $S$,\\
\quad\quad\; using $\mathsf{H}_1, \mathsf{H}_2$,
  with $\mathsf{numBin}$ bins of capacity $\mathsf{binCap}$.\\
\quad 3: \textbf{return} $(T,\; \mathsf{H}_1,\; \mathsf{H}_2)$.
\end{tabular}
\end{minipage}}
\caption{Oblivious hash table functionality.}
\label{func:oht}
\end{figure}

\noindent
$\F_{\OHT}$ provides an oblivious build procedure for a fixed-capacity two-tier hash table.
The build trace depends only on $|S|$, $\mathsf{numBin}$, $\mathsf{binCap}$.
A secure realization $\pi_{\mathsf{OHT}}$ exists via~\cite{chan2017oblivious}.
In \system, $S$ is a multiset of request records: duplicate private \texttt{LSHbucketID} values are treated as distinct records (requested by distinct inputs in a batch), and the public OHT capacity parameters are chosen to avoid observable overflow or failure for all request multisets of public size $T$.

We assume that the OHT build procedure does not perform data-dependent retries or hash-function resampling.
The hash functions $\mathsf{H}_1, \mathsf{H}_2$ are specified by $P$ before examining the key-value pairs.
Items are placed into fixed-capacity bins via oblivious sort; no Cuckoo-style eviction loop is used. 
The standard capacity analysis of~\cite{chan2017oblivious} applies when its load assumptions hold; for TENNOR's duplicate-request case, the theorem relies on the explicit public-capacity contract above.

\subsection{Functionalities of \system's Components}
\label{sec:component-functionalities}

We organize one batch step of $\pi_{\system}$ into three sequential components, namely Neuron Requester, Neuron Fetcher, and Neuron Updater, each with its own functionality.
Each functionality is a black box that receives inputs (including the public parameters~$P$), computes the specified operation, and returns outputs.
All quantities associated with~$P$ (such as $T = B \cdot \mathsf{lenSeq}_\ell$ and $R_\ell = \mathsf{lenSeq}_\ell \cdot \PADSIZE$) are implicitly available to the functionality.

Because these three batch-step components execute \emph{strictly sequentially} (the nonconcurrency condition required by the composition theorem~\cite{canetti2000security}), their composed security follows from their individual security guarantees.

\begin{figure}[ht]
\centering
\fbox{\begin{minipage}{0.96\linewidth}
\textbf{Functionality $\F_{\mathsf{req}}$}
\\ \hrule
\begin{tabular}{@{}l@{}}
$\mathsf{ReqArray}_\ell \leftarrow \F_{\mathsf{req}}(\ell,\; \{q_1,\ldots,q_B\},\; P)$: \\[4pt]
\quad 1: For each input $q_i$ in the batch, compute its MP-WTA \\
\quad\quad\; probing sequence and assemble the corresponding \\ \quad\quad\; \texttt{Request-Array} entries, \\
\quad\quad\; each containing an \texttt{LSHbucketID}, a \texttt{Batch-Rank}, and a \\
\quad\quad\; dummy \texttt{Neuron-Buffer} of $\PADSIZE$ entries. \\[2pt]
\quad 2: \textbf{return} $\mathsf{ReqArray}_\ell$ of size $T = B \cdot \mathsf{lenSeq}_\ell$.
\end{tabular}
\end{minipage}}
\caption{Functionality for Neuron Requester.}
\label{func:req}
\end{figure}
 
\noindent
$\F_{\mathsf{req}}$ computes the MP-WTA probing sequence for each input in the batch and assembles the \texttt{Request-Array}.

\begin{figure}[ht]
\centering
\fbox{\begin{minipage}{0.96\linewidth}
\textbf{Functionality $\F_{\mathsf{fetch}}$}
\\ \hrule
\begin{tabular}{@{}l@{}}
$(\mathsf{ReqArray}_\ell',\; \mathsf{LSH}_\ell') \leftarrow
  \F_{\mathsf{fetch}}(\mathsf{ReqArray}_\ell,\;
  \mathsf{LSH}_\ell,\; \mathsf{mode},\; P)$: \\[4pt]
\quad 1: \textbf{if} $\mathsf{mode} = \textsc{Read}$, populate each entry's \\
\quad\quad\; \texttt{Neuron-Buffer} with the neurons stored in the \\
\quad\quad\; LSH bucket matching its \texttt{LSHbucketID}. \\
\quad\quad\; $\mathsf{LSH}_\ell' \leftarrow \mathsf{LSH}_\ell$. \\[2pt]
\quad 2: \textbf{if} $\mathsf{mode} = \textsc{Write}$, write each non-dummy entry's \\
\quad\quad\; \texttt{Neuron-Buffer} back to the matching LSH bucket. \\
\quad\quad\; $\mathsf{LSH}_\ell'$ reflects the updated table. \\[2pt]
\quad 3: \textbf{return} $(\mathsf{ReqArray}_\ell',\; \mathsf{LSH}_\ell')$.
\end{tabular}
\end{minipage}}
\caption{Functionality for Neuron Fetcher.}
\label{func:fetch}
\end{figure}

\noindent
$\F_{\mathsf{fetch}}$ transfers data between the \texttt{Request-Array} and the LSH table.
In \textsc{Read} mode it populates each entry's \texttt{Neuron-Buffer} from the matching LSH bucket and returns the LSH table unchanged.
In \textsc{Write} mode, it writes modified neurons back to the LSH table and returns both the updated $\mathsf{ReqArray}_\ell$ and $\mathsf{LSH}_\ell$.
All outputs are delivered to the User (the TD), the Hypervisor receives $\bot$.

\begin{figure}[ht]
\centering
\fbox{\begin{minipage}{0.96\linewidth}
\textbf{Functionality $\F_{\mathsf{upd}}$}
\\ \hrule
\begin{tabular}{@{}l@{}}
$\mathsf{ReqArray}_\ell \leftarrow
  \F_{\mathsf{upd}}(\mathsf{ReqArray}_\ell,\;
  \mathsf{ActiveNodes},\; \mathsf{ActiveVals},\;$\\
\quad\quad\quad\quad\quad\quad\quad\quad $\{Y_i\},\; \{\mathbf{x}_i\},\; P)$: \\[4pt]
\quad 1: Perform feedforward: compute activations for all inputs \\
\quad\quad\; across the populated \texttt{Request-Array}. \\[2pt]
\quad 2: Perform backpropagation: compute and accumulate \\
\quad\quad\; gradients for all neurons in the \texttt{Request-Array}. \\[2pt]
\quad 3: Merge duplicate neurons that are in the duplicate \\ \quad\quad\; entries sharing the same \texttt{LSHbucketID} \\
\quad\quad\; by consolidating their gradient accumulators. \\[2pt]
\quad 4: Apply the Adam optimizer to update all neuron weights \\
\quad\quad\; and biases. \\[2pt]
\quad 5: \textbf{return} updated $\mathsf{ReqArray}_\ell$.
\end{tabular}
\end{minipage}}
\caption{Functionality for Neuron Updater.}
\label{func:upd}
\end{figure}
 
\noindent
$\F_{\mathsf{upd}}$ performs the complete update cycle on the populated neurons in the \texttt{Request-Array} and on dense layer~0, including feedforward activation, backpropagation, gradient merge, and optimizer step.
For readability, the displayed interface suppresses the dense layer~0 state; it is passed through the updater, updated by the public-order dense scan, and returned with the same public shape.

\subsection{Security Statements}
\label{sec:security-statements}

We now state the security of the three components.
Each theorem below is a hybrid-model statement.
It proves security while the primitive calls inside the component are still ideal calls to the building blocks from Section~\ref{sec:building-blocks}.
The step that replaces those ideal primitive calls by the real primitive protocols is given later in Section~\ref{sec:primitive-replacement}.

\begin{theorem}[Hybrid-model security of Neuron Requester]
\label{thm:req}
Protocol~$\pi_{\mathsf{req}}$ (Algorithm~\ref{alg:neuron-requester})
securely realizes $\F_{\mathsf{req}}$ in the $\F_{\OCmpSet}$-hybrid model.
That is, there exists a simulator $\Sim_{\mathsf{req}}$ such that for all~$P$ and all~$D$ consistent with~$P$:
\[
    \bigl\{\Sim_{\mathsf{req}}(P)\bigr\}
    \;\approx_c\;
    \bigl\{\mathsf{view}_{\mathsf{H}}^{\pi_{\mathsf{req}}}(P, D)\bigr\}.
\]
\end{theorem}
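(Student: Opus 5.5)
We construct $\Sim_{\mathsf{req}}$ explicitly and argue that its trace is \emph{identically} distributed to the real one. On input $P$, the simulator first derives the public quantities $B$, $K$, $M$, $\mathsf{lenSeq}_\ell$, $\PADSIZE$ and $T = B\cdot\mathsf{lenSeq}_\ell$, together with the sampled WTA subsets $S_1,\ldots,S_K$ that are part of $P$. It then allocates an array of $T$ entries at the same public layout as $\mathsf{ReqArray}_\ell$.

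The simulator next runs Algorithm~\ref{alg:neuron-requester} on a batch of $B$ dummy activation vectors of the correct public dimension, for example all-zero vectors. Every invocation of $\F_{\OCmpSet}$ is replaced by the ideal-call trace entry that the hybrid model prescribes. By the description of $\F_{\OCmpSet}$, that entry depends only on which predicate is used, which is public, and not on $a$, $b$, or the truth value of the test. The simulator outputs $(P,\hat\tau)$, where $\hat\tau$ is the canonical symbolic trace of this run.

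Correctness of the simulation comes from checking, line by line, that the real trace $\tau$ is a deterministic function of $P$ alone, given ideal primitive calls. The argument has three parts.
\begin{itemize}
\item Lines 1--4 touch only public scalars and allocate an array whose size is $T$.
\item For the MP-WTA call on line 7, we expand Algorithm~\ref{alg:oblivious_mpwta_mp} and Appendix~\ref{app:mp-wta}. Computing $h_i$, $h^2_i$, $h^3_i$ is a fixed scan over the public subset $S_i$ of size $M$. It maintains top-three candidates through a fixed sequence of $\F_{\OCmpSet}$ calls, namely \texttt{OblGt} comparisons followed by \texttt{OblChoose} selects, with deterministic tie-breaking in sampled order. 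The addresses read are the public positions in $S_i$, and the number of calls is a function of $M$ only. The perturbation loops enumerate subsets $S$ with $|S|\le 3$ and tuples in $\{2,3\}^{|S|}$. Both enumerations are fixed by $K$ and the public $(r,N)=(3,3)$. Each $\tilde h_j \gets h^{r_j}_j$ reads a register or local slot indexed by the public pair $(j,r_j)$, not by a secret value.
\item Lines 8--12 write to index $(i-1)\mathsf{lenSeq}_\ell+t$, which depends only on the loop counters. The values written are the signature, the public rank $i$, and $\PADSIZE$ dummies, and these are hidden by encryption and do not enter the trace. Parallelising lines 6--12 is also covered, since each worker's schedule is a fixed public partition of indices.
\end{itemize}
By induction over the program steps, the real trace and the simulated trace coincide symbol for symbol. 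Hence $\{\Sim_{\mathsf{req}}(P)\}\equiv\{\mathsf{view}^{\pi_{\mathsf{req}}}_{\mathsf H}(P,D)\}$, which is stronger than the required $\approx_c$. Output correctness also holds: the functionality $\F_{\mathsf{req}}$ is exactly what Algorithm~\ref{alg:neuron-requester} computes, so the joint distribution with the User's output matches as well.

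The main obstacle is the MP-WTA step. The natural implementation of ``index of the $r$-th largest'' uses data-dependent branches or array lookups such as $x[\text{argmax}]$, and these would leak the winner. I would handle this by fixing a canonical oblivious top-3 routine: for each sampled position $p\in S_i$ in public order, compare $x_p$ against the three current (value, index) candidates, and shift the candidates with \texttt{OblChoose}. I would also make sure that assembling $\tilde{\mathbf h}$ never dereferences memory at the secret index, only copies it. I would state this implementation requirement explicitly as an assumption on Algorithm~\ref{alg:abstract_max2wta}, so that the trace claim for line 7 is justified and not merely asserted.
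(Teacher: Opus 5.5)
Your proposal is correct and follows essentially the same route as the paper. The paper's proof also uses a simulator that reruns the requester on dummy inputs of the public dimension. It then checks step by step that the fixed-size allocation, the $M$-determined number of $\F_{\OCmpSet}$ calls inside MP-WTA, the perturbation loops bounded by $K,r,N$, and the deterministic write indices $(i-1)\cdot\mathsf{lenSeq}_\ell+t$ give identical traces. Your explicit implementation assumption for the top-3 routine (Algorithm~\ref{alg:abstract_max2wta}), with no dereference at secret indices, states precisely what the paper only asserts when it says this computation is a fixed sequence of branch-free $\F_{\OCmpSet}$ calls.
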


\begin{theorem}[Hybrid-model security of Neuron Fetcher]
\label{thm:fetch}
Protocol~$\pi_{\mathsf{fetch}}$ (Algorithm~\ref{alg:neuron-fetcher})
securely realizes $\F_{\mathsf{fetch}}$ in the $(\F_{\OHT}, \F_{\OSort}, $ $ \F_{\OCmpSet})$-hybrid model.
That is, there exists a simulator $\Sim_{\mathsf{fetch}}$ such that for all~$P$ and all~$D$ consistent with~$P$:
\[
    \bigl\{\Sim_{\mathsf{fetch}}(P)\bigr\}
    \;\approx_c\;
    \bigl\{\mathsf{view}_{\mathsf{H}}^{\pi_{\mathsf{fetch}}}(P, D)\bigr\}.
\]
\end{theorem}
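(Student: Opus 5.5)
We construct $\Sim_{\mathsf{fetch}}$ explicitly and show that its trace is identically distributed to the real trace. The simulator mirrors the control flow of Algorithm~\ref{alg:neuron-fetcher} on dummy data. On input $P$, it first instantiates a dummy $\mathsf{ReqArray}_\ell$ of public length $T = B\cdot\mathsf{lenSeq}_\ell$. It then instantiates a dummy LSH table with $\mathsf{numLshBucs}$ buckets of $\PADSIZE$ records and reads the public mode flag from $P$. Every private field, including \texttt{LSHbucketID}s, neuron contents and \texttt{isDummy} flags, is set to an arbitrary fixed value, e.g.\ all zeros. The simulator then runs the three phases with exactly the same loop structure as the real protocol and outputs the canonical symbolic trace $\hat\tau$. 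The argument is a phase-by-phase hybrid: we show that the real trace $\tau$ restricted to each phase is a deterministic function of $P$ alone. The concatenation of these three traces is then identical in both worlds, so $\approx_c$ holds in the strong form of equality of distributions.

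\emph{Phase~1.} The only operation is the call to $\F_{\OHT}.\mathsf{Build}$. By the specification of $\F_{\OHT}$ in Section~\ref{sec:building-blocks}, its trace depends only on $|S|=T$, $\mathsf{numBin}$ and $\mathsf{binCap}$, all of which lie in $P$. Here we must invoke the public-capacity contract: duplicate \texttt{LSHbucketID}s are treated as distinct records, and $\mathsf{H}_1,\mathsf{H}_2$ are fixed by $P$ with no data-dependent retries. This ensures there is no observable overflow or failure event whose occurrence could depend on $D$. Under O1, the subsequent dummy-payload initialization is a fixed scan over all $\mathsf{numBin}\cdot\mathsf{binCap}$ slots, which is again determined by $P$.

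\emph{Phase~2.} This is the step I expect to be the main obstacle, because it is the one place where addresses are computed rather than fixed by a loop counter. The key observation is that the bin indices $\mathsf{H}_1(b)$ and $\mathsf{H}_2(b)$ are functions of the \emph{public} loop counter $b$ and the public hash descriptions, never of any private key. The inner loop touches all $\mathsf{binCap}$ entries of each such bin. Each touch consists of one $\F_{\OCmpSet}$ call, namely \texttt{OblCompare} followed by \texttt{OblChoose} realized as a structural copy. The copy writes the destination buffer in both outcomes: $e.\texttt{Neuron-Buffer}$ in Read mode and $\mathsf{LSH}_\ell[b]$ in Write mode. By the $\F_{\OCmpSet}$ trace guarantee, the trace of each such call is independent of the match bit and of the operands. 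The branch on $\mathsf{mode}$ depends only on $P$. The sequence of accessed symbolic locations is therefore the fixed list
\[
\bigl(\mathsf{LSH}_\ell[b],\ \mathsf{OHT}.\mathsf{bins}[\mathsf{H}_t(b)][c]\bigr)_{b,t,c},
\]
which the simulator reproduces verbatim. For the parallel schedule O2, we argue that the scheduler is computed from $(b,\mathsf{H}_1,\mathsf{H}_2)$ only. The per-worker traces are thus public functions of $P$, matching the public per-worker trace abstraction fixed in the security model. For O3, the recorded layout is consumed by a single $\F_{\OSort}$ on public size, so it contributes no additional leakage. Correctness, in the sense that the output equals $\F_{\mathsf{fetch}}$, uses the single-query property: each key $b$ is looked up exactly once. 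In Write mode it also uses the precondition of at most one non-dummy entry per \texttt{LSHbucketID}.

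\emph{Phase~3.} The flattening of all bins is a fixed scan over $\mathsf{numBin}\cdot\mathsf{binCap}$ positions per tier. It is followed by one $\F_{\OSort}$ call whose trace depends only on the array length, and then by a copy of the public prefix $A[0..T-1]$. All of these are reproduced by the simulator. Combining the three phases, $\Sim_{\mathsf{fetch}}(P)=(P,\hat\tau)$ equals $\mathsf{view}_{\mathsf H}^{\pi_{\mathsf{fetch}}}(P,D)$ for every $D$ consistent with $P$. The returned $(\mathsf{ReqArray}_\ell',\mathsf{LSH}_\ell')$ is given to the User only and is not part of the Hypervisor's view. This establishes Theorem~\ref{thm:fetch} in the $(\F_{\OHT},\F_{\OSort},\F_{\OCmpSet})$-hybrid model.
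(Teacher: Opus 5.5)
Your proposal is correct and follows essentially the same route as the paper. The simulator replays the three-phase control flow on dummy data: Phases~1 and~3 are covered by the content-independent traces of $\F_{\OHT}$ and $\F_{\OSort}$, and Phase~2 by the public hash functions $\mathsf{H}_1,\mathsf{H}_2$ evaluated on the public counter $b$ together with the value-independence of $\F_{\OCmpSet}$. Your treatment of the public mode flag and of O1--O3 inside the proof is a slightly tighter version of what the paper handles in its separate ``Optimized Fetcher schedules'' paragraph.
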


\begin{theorem}[Hybrid-model security of Neuron Updater]
\label{thm:upd}
Protocol~$\pi_{\mathsf{upd}}$ (Algorithms~\ref{alg:updater-ff}, \ref{alg:updater-bp}, \ref{alg:obl-merge}, and~\ref{alg:adam-update})
securely realizes $\F_{\mathsf{upd}}$ in the $(\F_{\OSort}, \F_{\OCmpSet})$-hybrid model.
This theorem is for the two-layer \system architecture analyzed in Appendix~\ref{app:neuron-updater}, where layer~0 is scanned densely for feedforward and optimizer updates, and the wide output layer is the LSH-selected layer.
That is, there exists a simulator $\Sim_{\mathsf{upd}}$ such that for all~$P$ and all~$D$ consistent with~$P$:
\[
    \bigl\{\Sim_{\mathsf{upd}}(P)\bigr\}
    \;\approx_c\;
    \bigl\{\mathsf{view}_{\mathsf{H}}^{\pi_{\mathsf{upd}}}(P, D)\bigr\}.
\]
\end{theorem}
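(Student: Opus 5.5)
We prove Theorem~\ref{thm:upd} by constructing $\Sim_{\mathsf{upd}}$ as an algorithm that, given only $P$, runs the four sub-protocols (Algorithms~\ref{alg:updater-ff}, \ref{alg:updater-bp}, \ref{alg:obl-merge}, \ref{alg:adam-update}) in the same order on dummy state. That state is an all-zero $\mathsf{ReqArray}_\ell$ of $T=B\cdot\mathsf{lenSeq}_\ell$ entries with $\PADSIZE$-slot buffers, a zero dense layer~0 of $N_0$ neurons, and dummy inputs $\mathbf{x}_i$ with the public non-zero indices $\{d_1,\ldots,d_{\mathsf{nnz}}\}$ and zero values. It also uses dummy label sets of the public sizes $|Y_i|$. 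Every call to $\F_{\OCmpSet}$ or $\F_{\OSort}$ is answered by emitting that functionality's public trace, which depends only on the operand types and the array length $n$. The output is $(P,\hat\tau)$. We then argue that the real trace $\tau$ and $\hat\tau$ are \emph{identical} as symbolic sequences, which gives perfect rather than merely computational indistinguishability.

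The argument goes by a hybrid over the four phases, showing for each that its trace is a deterministic function of $P$ alone.
\begin{itemize}
\item \textbf{Feedforward (Algorithm~\ref{alg:updater-ff}).} Loop bounds are $B$, $R_\ell$, $R_{\ell-1}$ (or $N_0$ and $\mathsf{nnz}$ for layer~0). Slot addresses come from \textsc{SlotNode}, whose index arithmetic uses only $i,r,\mathsf{lenSeq}_\ell,\PADSIZE$. ReLU, the running maximum and dummy zeroing are $\F_{\OCmpSet}$ calls.
\item \textbf{Backpropagation (Algorithm~\ref{alg:updater-bp}).} Phase~1 scans $Y_i$ with the public bound $|Y_i|$. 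Phase~2 always takes the dense branch in the two-layer architecture, so the $v$ loop ranges over $n\in[1..N_0]$ in public order. Phase~3 touches $v.\mathsf{t}[d_j]$ only at the public indices.
\item \textbf{Merge (Algorithm~\ref{alg:obl-merge}).} It consists of two $\F_{\OSort}$ calls on length $T$ and a backward scan whose \textsc{FoldIntoPrev}/\textsc{MarkDummy} are field-wise $\F_{\OCmpSet}$ calls over compile-time-fixed struct layouts. This step relies on the composite-object extension of Section~\ref{app:obli_composite}.
\item \textbf{Adam (Algorithm~\ref{alg:adam-update}).} It runs straight-line arithmetic over $T\cdot\PADSIZE\cdot D_\ell$ public indices, plus a public-order scan over layer~0.
\end{itemize}
Chaining these hybrids yields $\tau=\hat\tau$.

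The step needing the most care is the \emph{data-dependent indexing} in the inner loops. An example is $u.\mathsf{weights}[\mathsf{ActiveNodes}[i][\ell{-}1][r']]$ in feedforward, and $u.\mathsf{weights}[v.\mathsf{ID}]$ and $u.\mathsf{t}[v.\mathsf{ID}]$ in backpropagation. There the offset is a neuron ID read from memory rather than a loop counter. For this I would make the two-layer restriction in the theorem load-bearing.
\begin{itemize}
\item When $\ell=1$, the previous layer is the dense layer~0, stored in public order. Then $\mathsf{ActiveNodes}[i][0][r']$ equals the public ID of the $r'$-th dense neuron and $v.\mathsf{ID}=n$, so each offset is a fixed function of the loop counter. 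I would state this as an invariant established by the layer-0 feedforward pass.
\item When $\ell=0$, the offsets are the public $d_j$.
\end{itemize}
I would also argue that the sparse branch (lines 19--25) is never executed, so its data-dependent $v.\mathsf{ID}$ never enters the trace.

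A secondary point concerns dummy neurons, whose IDs are public sentinels. Their weight arrays must be read at the same public offsets, which holds because dummies have full-size $D_\ell$ arrays. Finally, since the outputs $M$ are produced identically in both worlds and $\hat\tau$ is independent of $D$, the joint-distribution requirement of Definition~\ref{def:security} is met.
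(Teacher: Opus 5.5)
Your proposal is correct and takes essentially the same route as the paper. Like the paper, you build a simulator that replays the four sub-algorithms' fixed control flow on dummy state of the public shapes, then argue phase by phase (feedforward, backpropagation, merge, Adam) that the traces coincide; your explicit handling of the ID-indexed accesses through the two-layer restriction spells out more carefully what the paper only asserts as ``publicly determined addressing.'' One small fix is needed: the dummy dense layer~0 must carry the public neuron IDs rather than being all-zero, because $v.\mathsf{ID}$ and $\mathsf{ActiveNodes}[i][0][r']$ are used as offsets into $u.\mathsf{weights}$ and $u.\mathsf{t}$, so with all-zero records the simulated trace would not match the real one.
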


The proofs of these three theorems are given next.
After that we prove one batch step, then the full training loop.

\subsection{Proof of Theorem~\ref{thm:req}: Security of Neuron Requester}
\label{sec:proof-req}

Figure~\ref{fig:pi-req} defines the protocol $\pi_{\mathsf{req}}$ and
Figure~\ref{fig:sim-req} defines the simulator $\Sim_{\mathsf{req}}$.

\begin{figure}[ht]
\centering
\fbox{\begin{minipage}{0.96\linewidth}
\textbf{Protocol $\pi_{\mathsf{req}}$}
\\ \hrule
\begin{tabular}{@{}l@{}}
$\mathsf{ReqArray}_\ell \leftarrow \pi_{\mathsf{req}}(\ell,\; \{q_1,\ldots,q_B\},\; P)$: \\[4pt]
\quad 1: Allocate $\mathsf{ReqArray}_\ell[0..T-1]$ with $T = B \cdot \mathsf{lenSeq}_\ell$. \\[2pt]
\quad 2: \textbf{for} $i = 1$ to $B$: \\
\quad\quad\quad $\texttt{Seq}_i \leftarrow$ \textsc{MP-WTA}$(q_i, h_1,\ldots,h_K)$ \\
    \quad\quad\quad (Algorithm~\ref{alg:oblivious_mpwta_mp}; internally invokes $\F_{\OCmpSet}$\\
\quad\quad\quad  via \textsc{$h^2$-WTA} (Algorithm~\ref{alg:abstract_max2wta})\\
\quad\quad\quad  and \textsc{$h^3$-WTA})\\[2pt]
\quad 3: \textbf{for} $i = 1$ to $B$, \textbf{for} $t = 0$ to $\mathsf{lenSeq}_\ell - 1$: \\
\quad\quad\quad populate $\mathsf{ReqArray}_\ell[(i-1)\!\cdot\!\mathsf{lenSeq}_\ell + t]$ \\
    \quad\quad\quad (Algorithm~\ref{alg:neuron-requester}, lines 6--12). \\[2pt]
\quad 4: \textbf{return} $\mathsf{ReqArray}_\ell$.
\end{tabular}
\end{minipage}}
\caption{Protocol for Neuron Requester
(in the $\F_{\OCmpSet}$-hybrid model).}
\label{fig:pi-req}
\end{figure}

\noindent\textbf{Protocol Description.}
The protocol $\pi_{\mathsf{req}}$ corresponds directly to Algorithm~\ref{alg:neuron-requester}.
It first allocates a \texttt{Request-Array} of fixed public size $T = B \cdot \mathsf{lenSeq}_\ell$ (step~1).
For each input $q_i$ in the batch, it computes the MP-WTA probing sequence (step~2)
by invoking Algorithm~\ref{alg:oblivious_mpwta_mp},
which internally calls Algorithm~\ref{alg:abstract_max2wta} to determine the maximum, second-maximum, and third-maximum indices within each WTA hash function's sampled subset.
Each such max/second-max/third-max computation uses $\F_{\OCmpSet}$ for branch-free comparison and selection.
Finally, the protocol writes one entry per probe into the \texttt{Request-Array} at a deterministic position (step~3).

\begin{figure}[ht]
\centering
\fbox{\begin{minipage}{0.94\linewidth}
\textbf{Simulator $\Sim_{\mathsf{req}}(P)$}
\\ \hrule
\footnotesize
\begin{tabular}{@{}l@{}}
$\mathsf{ReqArray}_\ell \leftarrow \Sim_{\mathsf{req}}(P)$: \\[4pt]
\quad 0: Generate random  $\{\hat{q}_1, \dots \hat{q}_B\}$ as dummy inputs \\[2pt]

\quad 1: Allocate $\mathsf{ReqArray}_\ell[0..T-1]$ with $T = B \cdot \mathsf{lenSeq}_\ell$;\\[2pt]
\quad 2: \textbf{for} $i = 1$ to $B$: \\
\quad\quad\quad Invoke $\F_{\OCmpSet}$ exactly as \textsc{MP-WTA} \\
\quad\quad\quad (Algorithm~\ref{alg:oblivious_mpwta_mp} via Algorithm~\ref{alg:abstract_max2wta}) \\
\quad\quad\quad would on an random input $\hat{q}_i$ of the same dimension, \\
\quad\quad\quad to produce a dummy probing sequence $\hat{\texttt{Seq}}_i$ \\
\quad\quad\quad of length $\mathsf{lenSeq}_\ell$. \\[2pt]
\quad 3: \textbf{for} $i = 1$ to $B$, \textbf{for} $t = 0$ to $\mathsf{lenSeq}_\ell - 1$: \\
\quad\quad\quad write a dummy entry to
    $\mathsf{ReqArray}_\ell[(i-1)\!\cdot\!\mathsf{lenSeq}_\ell + t]$. \\[2pt]
\quad 4: \textbf{return} $\mathsf{ReqArray}_\ell$.
\end{tabular}
\end{minipage}}
\caption{Simulator for Neuron Requester.}
\label{fig:sim-req}
\end{figure}

\noindent\textbf{Simulator description.}
The simulator $\Sim_{\mathsf{req}}$ receives only the public parameters~$P$ and has no access to the real inputs $\{q_1,\ldots,q_B\}$.
It mirrors the structure of $\pi_{\mathsf{req}}$ step by step,
replacing each real input $q_i$ with a random dummy vector $\hat{q}_i$ of the same public dimension.
In step~2, the simulator executes the same MP-WTA procedure on $\hat{q}_i$,
issuing the same sequence of $\F_{\OCmpSet}$ calls.
Since $\F_{\OCmpSet}$ produces a trace independent of its input values,
the trace generated by each $\F_{\OCmpSet}$ invocation on $\hat{q}_i$ is identical to that on $q_i$.
In step~3, the simulator writes dummy entries to the same deterministic positions in the \texttt{Request-Array} as the real protocol.

\begin{proof}[Proof of Theorem~\ref{thm:req}]
We compare the traces of $\pi_{\mathsf{req}}$ and $\Sim_{\mathsf{req}}$ step by step.

\emph{Step~1} (allocation):
Both the protocol and the simulator allocate an array of the same public size~$T$.
The trace (a sequence of memory writes to $T$ consecutive locations) is identical.

\emph{Step~2} (MP-WTA):
For each input $i \in [1\ldots B]$, Algorithm~\ref{alg:oblivious_mpwta_mp} executes the following:
\begin{itemize}[nosep]
    \item Algorithm~\ref{alg:abstract_max2wta} computes the max, second-max, and third-max indices across all $K$ hash functions.
    For each hash function, this requires a fixed number of $\F_{\OCmpSet}$ invocations determined by the window size~$M$, which is public.
    By the security of $\F_{\OCmpSet}$ (Figure~\ref{func:ocmpset}), each invocation produces a trace independent of the operand values.
    \item The perturbation loop (Algorithm~\ref{alg:oblivious_mpwta_mp}, lines 4--9) iterates over all subsets of size $n \in \{1,2,3\}$ among $K$ positions and all tuples with values in $\{2,3\}$.
    The loop bounds depend only on $K$, $r{=}3$, and $N{=}3$, all public.
    The body performs deterministic array copies with no $\F_{\OCmpSet}$ calls.
\end{itemize}
Therefore, the trace produced by step~2 on real input $q_i$ is indistinguishable from that on dummy input $\hat{q}_i$.

\emph{Step~3} (array population):
Both the protocol and the simulator write to position $(i-1)\cdot\mathsf{lenSeq}_\ell + t$ for each $(i, t)$.
The write addresses are deterministic functions of public parameters.
The values written differ (real vs.\ dummy), but values reside in TDX-encrypted memory and are not part of the trace.
The traces are therefore identical.

\emph{Step~4} (return):
No additional memory accesses.

\

\noindent
The next step is replacing each building-block functionality with its concrete oblivious implementation
(e.g., $\pi_{\mathsf{OCmpSet}}$ for $\F_{\OCmpSet}$)
preserves indistinguishability by the sequential composition theorem~\cite[Corollary~7]{canetti2000security},
since all building-block invocations within $\pi_{\mathsf{req}}$ execute sequentially.
(See Appendix~\ref{sec:primitive-replacement} for additional details.)

\

\noindent
Since the traces are identical in every step, we have:
\[
\{\Sim_{\mathsf{req}}(P)\} \approx_c \{\mathsf{view}_{\mathsf{H}}^{\pi_{\mathsf{req}}}(P, D)\}.
\]
\end{proof}

\subsection{Proof of Theorem~\ref{thm:fetch}: Security of Neuron Fetcher}
\label{sec:proof-fetch}

Figure~\ref{fig:pi-fetch} defines the protocol $\pi_{\mathsf{fetch}}$ and
Figure~\ref{fig:sim-fetch} defines the simulator $\Sim_{\mathsf{fetch}}$.

\begin{figure}[ht]
\centering
\fbox{\begin{minipage}{0.96\linewidth}
\textbf{Protocol $\pi_{\mathsf{fetch}}$}
\\ \hrule
\begin{tabular}{@{}l@{}}
$(\mathsf{ReqArray}_\ell',\; \mathsf{LSH}_\ell') \leftarrow \pi_{\mathsf{fetch}}(\mathsf{ReqArray}_\ell,\;
  \mathsf{LSH}_\ell,\; \mathsf{mode},\; P)$: \\[4pt]
\quad\textit{// Phase 1: OHT construction} \\
\quad 1: $(\mathsf{OHT},\; \mathsf{H}_1,\; \mathsf{H}_2) \leftarrow
  \F_{\OHT}.\mathsf{Build}(\mathsf{ReqArray}_\ell,\;$\\
  \quad\quad\quad $\mathsf{numBin},\; \mathsf{binCap})$ \\
\quad\quad\; (Algorithm~\ref{alg:neuron-fetcher}, line 2;
  internally invokes $\F_{\OSort}$). \\[2pt]
\quad\textit{// Phase 2: linear scan of LSH table} \\
\quad 2: \textbf{for} $b = 0$ to $\mathsf{numLshBucs} - 1$: \\
\quad\quad\quad compute $\mathsf{bin}_1 \leftarrow \mathsf{H}_1(b)$,\;
    $\mathsf{bin}_2 \leftarrow \mathsf{H}_2(b)$. \\
\quad\quad\quad \textbf{for} $\mathsf{tier} \in \{1,2\}$: \\
\quad\quad\quad\quad \textbf{for} each of $\mathsf{binCap}$ entries $e$
    in $\mathsf{OHT}.\mathsf{bins}[\mathsf{bin}_{\mathsf{tier}}]$: \\
\quad\quad\quad\quad\quad invoke $\F_{\OCmpSet}$ to test
    $e.\texttt{LSHbucketID} = b$ \\
\quad\quad\quad\quad\quad (and, in \textsc{Write} mode, that $e$ is non-dummy) \\
\quad\quad\quad\quad\quad and conditionally transfer neurons between \\
\quad\quad\quad\quad\quad $\mathsf{LSH}_\ell[b]$ and $e.\texttt{Neuron-Buffer}$ \\
\quad\quad\quad\quad\quad (direction determined by $\mathsf{mode}$). \\
\quad\quad\; (Algorithm~\ref{alg:neuron-fetcher}, lines 3--13.) \\[2pt]
\quad\textit{// Phase 3: restore Request-Array} \\
\quad 3: Collect all entries from $\mathsf{OHT}$ bins into a flat array $A$. \\
\quad 4: $\F_{\OSort}(A,\; \texttt{key} = (\texttt{isDummy},\, \texttt{Batch-Rank}))$. \\
\quad\quad\; (Algorithm~\ref{alg:neuron-fetcher}, lines 14--16.) \\[2pt]
\quad 5: $\mathsf{ReqArray}_\ell' \leftarrow A[0..T-1]$. \\[2pt]
\quad 6: \textbf{return} $(\mathsf{ReqArray}_\ell',\; \mathsf{LSH}_\ell')$.
\end{tabular}
\end{minipage}}
\caption{Protocol for Neuron Fetcher
(in the $(\F_{\OHT}, \F_{\OSort}, \F_{\OCmpSet})$-hybrid model).}
\label{fig:pi-fetch}
\end{figure}

\noindent\textbf{Protocol description.}
The protocol $\pi_{\mathsf{fetch}}$ corresponds directly to Algorithm~\ref{alg:neuron-fetcher}.
It operates in three phases.
In Phase~1 (step~1), the $T$ entries of $\mathsf{ReqArray}_\ell$ are transformed into an Oblivious Hash Table via $\F_{\OHT}.\mathsf{Build}$, which internally invokes $\F_{\OSort}$ to place entries into fixed-capacity bins keyed by \texttt{LSHbucketID}.
The Build procedure uses the two OHT hash functions $\mathsf{H}_1, \mathsf{H}_2$ specified by $P$.
In Phase~2 (step~2), the protocol performs a linear scan over all $\mathsf{numLshBucs}$ buckets of the LSH table.
For each LSH bucket~$b$, it computes the two candidate OHT bins via $\mathsf{H}_1(b)$ and $\mathsf{H}_2(b)$, then iterates over all $\mathsf{binCap}$ entries in each bin.
At every entry, $\F_{\OCmpSet}$ is invoked to test whether the entry's \texttt{LSHbucketID} matches~$b$ (and, in \textsc{Write} mode, that the entry is non-dummy), and to conditionally transfer neurons between $\mathsf{LSH}_\ell[b]$ and the entry's \texttt{Neuron-Buffer};
the direction of transfer is determined by $\mathsf{mode}$.
Because $\F_{\OCmpSet}$ always performs a write regardless of the match outcome, every iteration produces the same trace.
In Phase~3 (steps~3--5), all OHT entries are collected into a flat array and sorted via $\F_{\OSort}$ by $(\texttt{isDummy}, \texttt{Batch-Rank})$, placing non-dummy entries first, followed by dummy padding, to restore the fixed-length $\mathsf{ReqArray}_\ell$.
The protocol returns both the restored $\mathsf{ReqArray}_\ell'$ and $\mathsf{LSH}_\ell'$ (which is unchanged in \textsc{Read} mode and updated in place in \textsc{Write} mode).

\begin{figure}[ht]
\centering
\fbox{\begin{minipage}{0.96\linewidth}
\textbf{Simulator $\Sim_{\mathsf{fetch}}(P)$}
\\ \hrule
\begin{tabular}{@{}l@{}}
$(\mathsf{ReqArray}_\ell',\; \mathsf{LSH}_\ell') \leftarrow \Sim_{\mathsf{fetch}}(P)$: \\[4pt]
\quad 0: 
\; Initialize $T$ dummy entries, each with a fresh random \\
\quad\quad\; \texttt{LSHbucketID}, \texttt{Batch-Rank}, and dummy \texttt{Neuron-Buffer}. \\
\quad\quad\; Initialize a dummy LSH table with $\mathsf{numLshBucs}$ buckets, \\
\quad\quad\; each filled with $\mathsf{PADSIZE}$ dummy neurons. \\[2pt]
\quad\textit{// Phase 1: OHT construction} \\
\quad 1: $(\widehat{\mathsf{OHT}},\; \hat{\mathsf{H}}_1,\;
  \hat{\mathsf{H}}_2) \leftarrow
  \F_{\OHT}.\mathsf{Build}(T\ \text{dummy entries},\;$ \\
  \quad\quad\quad\quad $\mathsf{numBin},\; \mathsf{binCap})$. \\
\quad\textit{// Phase 2: linear scan of dummy LSH table} \\
\quad 2: \textbf{for} $b = 0$ to $\mathsf{numLshBucs} - 1$: \\
\quad\quad\quad compute $\mathsf{bin}_1 \leftarrow \hat{\mathsf{H}}_1(b)$,\;
    $\mathsf{bin}_2 \leftarrow \hat{\mathsf{H}}_2(b)$. \\
\quad\quad\quad \textbf{for} $\mathsf{tier} \in \{1,2\}$: \\
\quad\quad\quad\quad \textbf{for} each of $\mathsf{binCap}$ entries $e$
    in $\widehat{\mathsf{OHT}}.\mathsf{bins}[\mathsf{bin}_{\mathsf{tier}}]$: \\
\quad\quad\quad\quad\quad invoke $\F_{\OCmpSet}$ on dummy values. \\[2pt]
\quad\textit{// Phase 3: restore Request-Array} \\
\quad 3: Collect all entries from $\widehat{\mathsf{OHT}}$ bins into a flat array $A$. \\
\quad 4: $\F_{\OSort}(A,\; \texttt{key} = (\texttt{isDummy},\, \texttt{Batch-Rank}))$. \\[2pt]
\quad 5: $\mathsf{ReqArray}_\ell' \leftarrow A[0..T-1]$. \\[2pt]
\quad 6: \textbf{return} $(\mathsf{ReqArray}_\ell',\; \mathsf{LSH}_\ell')$.
\end{tabular}
\end{minipage}}
\caption{Simulator for Neuron Fetcher.}
\label{fig:sim-fetch}
\end{figure}

\noindent\textbf{Simulator description.}
The simulator $\Sim_{\mathsf{fetch}}$ receives only~$P$ and has no access to the real $\mathsf{ReqArray}_\ell$ or $\mathsf{LSH}_\ell$.
Its OHT build uses the same public hash functions specified by $P$.
It then mirrors the three-phase structure of $\pi_{\mathsf{fetch}}$, replacing all private data with dummy values.
In Phase~1, the simulator invokes $\F_{\OHT}.\mathsf{Build}$ on $T$ dummy entries and obtains $(\widehat{\mathsf{OHT}}, \hat{\mathsf{H}}_1, \hat{\mathsf{H}}_2)$;
the call to $\F_{\OHT}.\mathsf{Build}$ produces a trace that depends only on $T$, $\mathsf{numBin}$, $\mathsf{binCap}$, and $P$, not on the entry contents.
In Phase~2, the simulator iterates over all $\mathsf{numLshBucs}$ buckets and, for each LSH bucket~$b$, computes $\hat{\mathsf{H}}_1(b)$ and $\hat{\mathsf{H}}_2(b)$ to determine which OHT bins to visit.
The simulator then scans all $\mathsf{binCap}$ entries per bin, invoking $\F_{\OCmpSet}$ on dummy values at every entry.
In Phase~3, the simulator collects and sorts entries via $\F_{\OSort}$ on an array of the same public size.

\begin{proof}[Proof of Theorem~\ref{thm:fetch}]
We compare the traces of $\pi_{\mathsf{fetch}}$ and $\Sim_{\mathsf{fetch}}$ phase by phase.
 
\emph{Phase~1} (OHT construction).
Both the protocol and the simulator invoke $\F_{\OHT}.\mathsf{Build}$ on an array of $T$ entries.
By the security of $\F_{\OHT}$ (Figure~\ref{func:oht}), the build trace depends only on the number of entries~$T$, the number of bins~$\mathsf{numBin}$, and the bin capacity~$\mathsf{binCap}$, all of which are public.
The entry contents (real or dummy) do not affect the trace.
The traces are therefore identical.
 
\emph{Phase~2} (linear scan).
In both executions, the OHT hash functions are produced as part of
$\F_{\OHT}.\mathsf{Build}$'s output.
By the security of $\F_{\OHT}$ (Figure~\ref{func:oht}), the build procedure, using the public hash functions specified by $P$, produces a trace and outputs that depend only on $|S|$, $\mathsf{numBin}$, $\mathsf{binCap}$, and $P$, not on the key-value contents.
Since the real protocol and the simulator both use the same public hash functions from $P$, they evaluate the same $(\mathsf{H}_1, \mathsf{H}_2)$ on the same public sequence $b = 0, \ldots, \mathsf{numLshBucs}-1$.
Thus the bin access sequences are identical conditional on $P$.

Note that the hash functions $\mathsf{H}_1, \mathsf{H}_2$ are public.
This does not weaken security because, even if the OHT hash functions were kept private, the Hypervisor could observe from the trace which two OHT bins are accessed for each LSH bucket~$b$, and thereby reconstruct $\mathsf{H}_1(b)$ and $\mathsf{H}_2(b)$ for every~$b$.
Making the hash functions public therefore reveals no information beyond what is already present in the trace.

\emph{Phase~3} (Request-Array restoration).
Both executions collect all OHT bin entries into a flat array of the same public size ($\mathsf{numBin} \times \mathsf{binCap}$ per tier, summed over both tiers)
and invoke $\F_{\OSort}$ with key $(\texttt{isDummy}, \texttt{Batch-Rank})$.
By the security of $\F_{\OSort}$ (Figure~\ref{func:osort}), the sort trace depends only on the array length.
The subsequent extraction of the first $T$ entries accesses the same deterministic positions.
The traces are therefore identical.

\
 
\noindent
Combining all three phases, the traces produced by $\pi_{\mathsf{fetch}}$ and $\Sim_{\mathsf{fetch}}$ are identically distributed.
Replacing each ideal functionality with its concrete oblivious implementation (e.g., $\pi_{\OSort}$ for $\F_{\OSort}$, $\pi_{\mathsf{OHT}}$ for $\F_{\OHT}$, $\pi_{\mathsf{OCmpSet}}$ for $\F_{\OCmpSet}$) preserves indistinguishability by the sequential composition theorem~\cite[Corollary~7]{canetti2000security}, since all building-block invocations within $\pi_{\mathsf{fetch}}$ execute sequentially.
(Refer to Appendix~\ref{sec:primitive-replacement} for additional details.)

\
 
\noindent
We conclude:
\[
\{\Sim_{\mathsf{fetch}}(P)\}  \approx_c \{\mathsf{view}_{\mathsf{H}}^{\pi_{\mathsf{fetch}}}(P, D)\}.
\]
\end{proof}

\subsection{Proof of Theorem~\ref{thm:upd}: Security of Neuron Updater}
\label{sec:proof-upd}

Figure~\ref{fig:pi-upd} defines the protocol $\pi_{\mathsf{upd}}$ and
Figure~\ref{fig:sim-upd} defines the simulator $\Sim_{\mathsf{upd}}$.

\begin{figure}[htb]
\centering
\fbox{\begin{minipage}{0.96\linewidth}
\textbf{Protocol $\pi_{\mathsf{upd}}$}
\\ \hrule
\footnotesize
\begin{tabular}{@{}l@{}}
$\mathsf{ReqArray}_\ell \leftarrow
  \pi_{\mathsf{upd}}(\mathsf{ReqArray}_\ell,\;
  \mathsf{ActiveNodes},\; \mathsf{ActiveVals},\;$\\
  \quad\quad\quad\quad\quad\quad\quad\quad $\{Y_i\},\; \{\mathbf{x}_i\},\; P)$: \\[4pt]
\quad\textit{// Phase 1: Feed-Forward} (Algorithm~\ref{alg:updater-ff}.) \\
\quad 1: \textbf{for} $i = 1$ to $B$, \textbf{for} $r = 1$ to $R_\ell$: \\
\quad\quad\quad retrieve Request-Array neurons via \textsc{SlotNode}; \\
\quad\quad\quad retrieve dense layer~0 neurons in public order; \\
\quad\quad\quad compute activation from previous layer; \\
\quad\quad\quad apply ReLU or Softmax via $\F_{\OCmpSet}$. \\
\quad 2: \textbf{if} Softmax: \textbf{for} $i = 1$ to $B$, \textbf{for} $r = 1$ to $R_\ell$: \\
\quad\quad\quad compute stable exponential;\\
\quad\quad\quad set zero value to dummy slots via $\F_{\OCmpSet}$; \\
\quad\quad\quad accumulate normalization constant. \\[2pt]
\quad\textit{// Phase 2: Backpropagation} (Algorithm~\ref{alg:updater-bp}.) \\
\quad 3: \textbf{for} $i = 1$ to $B$, \textbf{for} $r = 1$ to $R_\ell$: \\
\quad\quad\quad compute output-layer delta;\\
\quad\quad\quad match labels via $\F_{\OCmpSet}$. \\
\quad 3a: \textbf{for} $i = 1$ to $B$, \textbf{for} $n = 1$ to $N_0$: \\
\quad\quad\quad reset dense-layer delta scratch state. \\
\quad 4: \textbf{for} $i = 1$ to $B$, \textbf{for} $r_u = 1$ to $R_{L-1}$, \\
\quad\quad\quad \textbf{for} $n = 1$ to $N_0$: \\
\quad\quad\quad\quad propagate delta (gated by ReLU via $\F_{\OCmpSet}$); \\
\quad\quad\quad\quad accumulate weight gradient. \\
\quad 5: \textbf{for} $i = 1$ to $B$, \textbf{for} $n = 1$ to $N_0$: \\
\quad\quad\quad accumulate first-layer gradients from input features. \\[2pt]
\quad\textit{// Phase 3: Oblivious Merge} (Algorithm~\ref{alg:obl-merge}.) \\
\quad 6: $\F_{\OSort}(\mathsf{ReqArray}_\ell,\; \texttt{key} = \texttt{LSHbucketID})$. \\
\quad 7: \textbf{for} $j = T-1$ down to $1$: \\
\quad\quad\quad invoke $\F_{\OCmpSet}$ to test adjacent entries and \\
\quad\quad\quad conditionally fold gradients; mark donor as dummy. \\
\quad 8: $\F_{\OSort}(\mathsf{ReqArray}_\ell,\; \texttt{key} = \texttt{isDummy})$. \\[2pt]
\quad\textit{// Phase 4: Adam Update} (Algorithm~\ref{alg:adam-update}.)  \\
\quad 9: \textbf{for} $k = 0$ to $T-1$, \textbf{for} $s = 0$ to $\PADSIZE-1$: \\
\quad\quad\quad update moments and weights (pure arithmetic). \\[2pt]
\quad\quad\quad also scan dense layer~0 in public order. \\[2pt]
\quad 10: \textbf{return} $\mathsf{ReqArray}_\ell$.
\end{tabular}
\end{minipage}}
\caption{Protocol for Neuron Updater
(in the $(\F_{\OSort}, \F_{\OCmpSet})$-hybrid model).}
\label{fig:pi-upd}
\end{figure}

\noindent\textbf{Protocol description.}
The protocol $\pi_{\mathsf{upd}}$ comprises four phases corresponding to four sub-algorithms.

Phase~1 (steps~1--2) performs feedforward (Algorithm~\ref{alg:updater-ff}).
For each input~$i$ and slot~$r$, it retrieves Request-Array neurons via \textsc{SlotNode} (Algorithm~\ref{alg:slotnode}) and dense layer~0 neurons in public order.
The activation is computed as a weighted sum over the previous layer's active state.
For ReLU layers, the activation is clamped to zero via $\F_{\OCmpSet}$ (an \texttt{OblGt} followed by an \texttt{OblChoose}).
For the Softmax output layer, a running oblivious maximum is maintained via $\F_{\OCmpSet}$,
followed by a second pass that computes stable exponentials, sets zero value to dummy slots via $\F_{\OCmpSet}$, and accumulates the normalization constant.
All loop bounds ($B$, $R_\ell$, $R_{\ell-1}$) are determined by~$P$.

Phase~2 (steps~3--5) performs backpropagation (Algorithm~\ref{alg:updater-bp}).
Step~3 computes output-layer deltas: for each neuron slot, a scan over $|Y_i|$ labels invokes $\F_{\OCmpSet}$ per label for oblivious matching, then $\F_{\OCmpSet}$ selects the appropriate delta.
Step~4 propagates deltas from the output layer to dense layer~0: the nested loop over $R_{L-1} \times N_0$ pairs invokes $\F_{\OCmpSet}$ to gate the delta propagation by the ReLU derivative and to accumulate weight gradients.
Step~5 accumulates first-layer gradients from input features over $N_0 \times \mathsf{nnz}$ iterations with no oblivious primitives (all indices are public).
All loop bounds are determined by~$P$.

Phase~3 (steps~6--8) performs oblivious merge (Algorithm~\ref{alg:obl-merge}).
Step~6 invokes $\F_{\OSort}$ to group request entries by \texttt{LSHbucketID}.
Step~7 performs a backward linear scan over all $T$ entries, invoking $\F_{\OCmpSet}$ at each pair to test for duplicate \texttt{LSHbucketID}s and conditionally fold gradients.
Step~8 invokes $\F_{\OSort}$ to push marked-dummy entries to the end.

Phase~4 (step~9) applies the Adam optimizer (Algorithm~\ref{alg:adam-update}).
It performs a flat linear scan over all $T \times \PADSIZE$ neuron slots, executing identical fixed arithmetic at every slot, and also scans dense layer~0 in public order.
No oblivious primitives are required; dummy neurons produce zero updates due to their zero gradient accumulators.

\begin{figure}[htb]
\centering
\fbox{\begin{minipage}{0.94\linewidth}
\textbf{Simulator $\Sim_{\mathsf{upd}}(P)$}
\\ \hrule
\footnotesize
\begin{tabular}{@{}l@{}}
$\mathsf{ReqArray}_\ell \leftarrow \Sim_{\mathsf{upd}}(P)$: \\[4pt]

\quad 0: Initialize a dummy $\mathsf{ReqArray}_\ell$ of $T$ entries, each with a \\
\quad\quad\; \texttt{Neuron-Buffer} of $\PADSIZE$ zero-valued neurons. \\
\quad\quad Initialize dummy $\mathsf{ActiveNodes}$, $\mathsf{ActiveVals}$ of public sizes; \\
\quad\quad Initialize dummy labels $\{\hat{Y}_i\}$ and dummy inputs $\{\hat{\mathbf{x}}_i\}$\\
\quad\quad \; of public size $|Y_i|$ and input dimensions.\\
\quad\textit{// Phase 1: Feed-Forward} \\
\quad 1: \textbf{for} $i = 1$ to $B$, \textbf{for} $r = 1$ to $R_\ell$: \\
\quad\quad\quad use \textsc{SlotNode} for Request-Array slots; \\
\quad\quad\quad use public-order access for dense layer~0; \\
\quad\quad\quad perform arithmetic on dummies; \\
\quad\quad\quad invoke $\F_{\OCmpSet}$ for ReLU or Softmax. \\
\quad 2: \textbf{if} Softmax: \textbf{for} $i = 1$ to $B$, \textbf{for} $r = 1$ to $R_\ell$: \\
\quad\quad\quad compute dummy exponential; \\
\quad\quad\quad invoke $\F_{\OCmpSet}$ to zero out dummy slots; \\
\quad\quad\quad accumulate dummy normalization constant. \\[2pt]
\quad\textit{// Phase 2: Backpropagation} \\
\quad 3: \textbf{for} $i = 1$ to $B$, \textbf{for} $r = 1$ to $R_\ell$: \\
\quad\quad\quad \textbf{for} each of $|Y_i|$ dummy labels: invoke $\F_{\OCmpSet}$. \\
\quad\quad\quad invoke $\F_{\OCmpSet}$ to select dummy delta. \\
\quad 3a: \textbf{for} $i = 1$ to $B$, \textbf{for} $n = 1$ to $N_0$: \\
\quad\quad\quad reset dummy dense-layer delta scratch state. \\
\quad 4: \textbf{for} $i = 1$ to $B$, \textbf{for} $r_u = 1$ to $R_{L-1}$, \\
\quad\quad\quad \textbf{for} $n = 1$ to $N_0$: \\
\quad\quad\quad\quad invoke $\F_{\OCmpSet}$ for dummy delta propagation; \\
\quad\quad\quad\quad perform dummy gradient accumulation. \\
\quad 5: \textbf{for} $i = 1$ to $B$, \textbf{for} $n = 1$ to $N_0$: \\
\quad\quad\quad \textbf{for} $j = 1$ to $\mathsf{nnz}$: perform dummy first-layer gradient. \\[2pt]
\quad\textit{// Phase 3: Oblivious Merge} \\
\quad 6: $\F_{\OSort}(\mathsf{ReqArray}_\ell,\; \texttt{key} = \texttt{LSHbucketID})$. \\
\quad 7: \textbf{for} $j = T-1$ down to $1$: \\
\quad\quad\quad invoke $\F_{\OCmpSet}$ on adjacent dummy entries. \\
\quad 8: $\F_{\OSort}(\mathsf{ReqArray}_\ell,\; \texttt{key} = \texttt{isDummy})$. \\[2pt]
\quad\textit{// Phase 4: Adam Update} \\
\quad 9: \textbf{for} $k = 0$ to $T-1$, \textbf{for} $s = 0$ to $\PADSIZE-1$: \\
\quad\quad\quad\; perform dummy arithmetic over $D_\ell$ weight dimensions. \\[2pt]
\quad\quad\quad\; also scan dummy dense layer~0 in public order. \\[2pt]
\quad 10: \textbf{return} $\mathsf{ReqArray}_\ell$.
\end{tabular}
\end{minipage}}
\caption{Simulator for Neuron Updater.}
\label{fig:sim-upd}
\end{figure}

\noindent\textbf{Simulator description.}
The simulator $\Sim_{\mathsf{upd}}$ receives only~$P$ and has no access to the real $\mathsf{ReqArray}_\ell$, activations, labels~$\{Y_i\}$, or input features~$\{\mathbf{x}_i\}$.
It mirrors the four-phase structure of $\pi_{\mathsf{upd}}$, replacing all private data with dummy values (e.g., zero-valued neurons, random labels, zero-valued features).

In Phase~1 (steps~1--2), the simulator initializes dummy data structures of the same public sizes and executes the same feedforward loop structure.
It also executes the dense layer~0 feedforward scan on dummy dense records in the same public order.
For Request-Array slots, it invokes \textsc{SlotNode}; for dense layer~0, it uses the same public-order scan.
The arithmetic (weighted sums, exponentials) is performed on dummy values;
$\F_{\OCmpSet}$ is invoked at exactly the same points as in $\pi_{\mathsf{upd}}$ for ReLU gating, oblivious maximum, and dummy zeroing.

In Phase~2 (steps~3--5), the simulator executes the same backpropagation loop structure.
Step~3 scans $|Y_i|$ dummy labels per slot, invoking $\F_{\OCmpSet}$ at each label and once for delta selection.
Step~4 iterates over the same $R_{L-1} \times N_0$ nested loop, invoking $\F_{\OCmpSet}$ for each pair.
Step~5 iterates over $N_0 \times \mathsf{nnz}$ pairs; since the non-zero feature indices are public (included in~$P$), the simulator accesses the same weight dimensions as the real protocol.

In Phase~3 (steps~6--8), the simulator invokes $\F_{\OSort}$ and $\F_{\OCmpSet}$ on the same-sized dummy arrays, following the identical merge procedure.

In Phase~4 (step~9), the simulator performs the same arithmetic loop over Request-Array slots and all $D_\ell$ weight dimensions, with no oblivious primitives.
For dense layer~0, it performs the same public-order Adam scan over dummy dense records.

\begin{proof}[Proof of Theorem~\ref{thm:upd}]
We compare the traces of $\pi_{\mathsf{upd}}$ and $\Sim_{\mathsf{upd}}$ phase by phase.

\emph{Phase~1} (feedforward):
Both executions perform identical loop structures with bounds $B \times R_\ell$ (Pass~1) and $B \times R_\ell$ (Pass~2, Softmax only), all determined by~$P$.
For Request-Array slots, \textsc{SlotNode} computes a deterministic index from $(i, r, P)$ alone (Algorithm~\ref{alg:slotnode}); dense layer~0 uses public-order addresses, so both executions access the same modeled positions.
The weighted-sum computation (Algorithm~\ref{alg:updater-ff}, lines 6--8) iterates over $R_{\ell-1}$ previous-layer slots, with $R_0=N_0$ for dense layer~0, using publicly determined addressing; the operand values differ but reside in encrypted memory and are not part of the trace.
Each invocation of $\F_{\OCmpSet}$ (for ReLU clamping, oblivious maximum, and dummy zeroing) produces a trace independent of the operand values by Figure~\ref{func:ocmpset}.
The traces are therefore identical.

\emph{Phase~2} (backpropagation):
Step~3 iterates over $B \times R_\ell$ slots.
Within each iteration, the label-matching loop runs $|Y_i|$ times, invoking $\F_{\OCmpSet}$ at each label and once for delta selection.
Both $R_\ell$ and $|Y_i|$ are public.
Before Step~4, both executions perform the same public-order $B \times N_0$ reset of dense-layer delta scratch state.
Step~4 iterates over the output-layer slots and dense layer-0 neurons, with a nested loop of $B \times R_{L-1} \times N_0$ iterations.
At each iteration, both executions invoke $\F_{\OCmpSet}$ for ReLU-gated delta propagation and perform arithmetic accumulation at addresses determined by public loop indices.
Step~5 iterates over $B \times N_0 \times \mathsf{nnz}$ entries.
The weight dimension accessed at each iteration is $d_j$, the $j$-th non-zero feature index, which is public (included in~$P$).
Therefore, both executions access the same addresses.
By the security of $\F_{\OCmpSet}$, all invocations produce value-independent traces.
The traces are therefore identical.

\emph{Phase~3} (oblivious merge):
Step~6 invokes $\F_{\OSort}$ on an array of size~$T$.
By Figure~\ref{func:osort}, the sort trace depends only on~$T$.
Step~7 performs a linear scan from $j = T-1$ down to $1$, invoking $\F_{\OCmpSet}$ at each of the $T-1$ pairs.
By Figure~\ref{func:ocmpset}, each invocation produces a value-independent trace.
Step~8 invokes $\F_{\OSort}$ on the same array of size~$T$.
The traces are therefore identical.

\emph{Phase~4} (Adam update):
Both executions iterate over $T \times \PADSIZE$ neuron slots, and within each slot over $D_\ell$ weight dimensions plus the bias.
For dense layer~0, both executions also perform the same public-order Adam scan over all $N_0$ dense records.
No oblivious primitives are invoked.
All loop bounds ($T$, $\PADSIZE$, $D_\ell$, and $N_0$ for the dense scan) and the bias-corrected learning rate (which depends on $\mathsf{step}$, $\alpha$, $\beta_1$, $\beta_2$, all public) are determined by~$P$.
Every Request-Array slot executes the same sequence of reads and writes at \textsc{SlotNode}-determined addresses, and every dense-layer slot is visited in public order.
The traces are therefore identical.

\

\noindent
The next step is replacing each building-block functionality with its concrete oblivious implementation
(e.g., $\pi_{\OSort}$ for $\F_{\OSort}$,  $\pi_{\mathsf{OCmpSet}}$ for $\F_{\OCmpSet}$)
preserves indistinguishability by the sequential composition theorem~\cite[Corollary~7]{canetti2000security},
since all building-block invocations within $\pi_{\mathsf{upd}}$ execute sequentially.
(Refer to Appendix~\ref{sec:primitive-replacement} for additional details.)

\

\noindent
Since the traces are identical in every phase, we have:
\[
\{\Sim_{\mathsf{upd}}(P)\} \approx_c \{\mathsf{view}_{\mathsf{H}}^{\pi_{\mathsf{upd}}}(P, D)\}.
\]
\end{proof}

\subsection{Replacing Ideal Primitive Calls by Real Primitive Protocols}
\label{sec:primitive-replacement}

We now show the missing step that turns the three hybrid-model theorems into concrete component statements.
For a component $C \in \{\mathsf{req}, \mathsf{fetch}, \mathsf{upd}\}$, let $\pi_C^{\mathcal{F}}$ be the version of that component in which every primitive call is still an ideal call to one of $\F_{\OSort}$, $\F_{\OCmpSet}$, or $\F_{\OHT}$.
Let $g_1, \ldots, g_m$ be these primitive calls listed in program order.
A call list is in program order if $g_i$ happens before $g_{i+1}$ in every execution.
This is well defined here because all primitive calls inside each component run one after another.

\begin{lemma}[Primitive replacement inside one component]
\label{lem:primitive-replacement}
Assume that every real primitive protocol securely realizes its matching functionality.
Then, for each component $C \in \{\mathsf{req}, \mathsf{fetch}, \mathsf{upd}\}$, replacing the ideal calls $g_1, \ldots, g_m$ in $\pi_C^{\mathcal{F}}$ one by one by their real primitive protocols preserves computational indistinguishability.
\end{lemma}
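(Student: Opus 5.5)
We will prove the lemma by a standard hybrid argument over the program-ordered call list $g_1,\ldots,g_m$ of component $C$. For $k=0,\ldots,m$, define the hybrid $\pi_C^{(k)}$ in which the first $k$ calls $g_1,\ldots,g_k$ are executed by their real primitive protocols ($\pi_{\mathsf{OSort}}$, $\pi_{\mathsf{OCmpSet}}$, or $\pi_{\mathsf{OHT}}$) and the remaining calls $g_{k+1},\ldots,g_m$ are still ideal calls. Thus $\pi_C^{(0)}=\pi_C^{\mathcal{F}}$ and $\pi_C^{(m)}$ is the fully instantiated component. Let $\tau^{(k)}$ be the Hypervisor's canonical trace in $\pi_C^{(k)}$. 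The goal is to show $\{P,\tau^{(k-1)}\}\approx_c\{P,\tau^{(k)}\}$ for each $k$. Then, since $m$ is a fixed polynomial in the public parameters $P$ (all loop bounds are determined by $P$), the triangle inequality gives $\{P,\tau^{(0)}\}\approx_c\{P,\tau^{(m)}\}$. Together with Theorems~\ref{thm:req}, \ref{thm:fetch}, and~\ref{thm:upd}, this yields the concrete component statements.

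For a single step $k$, the plan is a reduction. Suppose a distinguisher $\mathcal{D}$ separates $\tau^{(k-1)}$ from $\tau^{(k)}$. We build a distinguisher $\mathcal{D}'$ against the security of the primitive realizing $g_k$, as follows.
\begin{enumerate}
\item $\mathcal{D}'$ runs the common prefix $g_1,\ldots,g_{k-1}$ (real), together with the interleaved non-primitive code of $C$, on the actual inputs. This prefix is identical in both hybrids, so $\mathcal{D}'$ can generate it, including the trace segment, which it records.
\item At $g_k$, $\mathcal{D}'$ forwards the primitive's input to its challenger. It receives either the real primitive's trace or the simulated one, which by the primitive's security depends only on the public size and shape arguments (array length $n$, $\mathsf{numBin}$, $\mathsf{binCap}$, or the fixed $\mathsf{pred}$ type). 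It also receives the correct output; correctness of the real primitive makes the outputs identical or indistinguishable in both worlds.
\item $\mathcal{D}'$ continues with the suffix $g_{k+1},\ldots,g_m$ using ideal calls and the public-order glue code. Whatever trace this suffix produces is a fixed function of $P$, as established in the proofs of Theorems~\ref{thm:req}--\ref{thm:upd}.
\end{enumerate}
The concatenated trace is distributed exactly as $\tau^{(k-1)}$ or $\tau^{(k)}$, depending on the challenge bit, so $\mathcal{D}'$ inherits $\mathcal{D}$'s advantage. This is exactly the non-concurrent sequential-composition setting of \cite[Corollary~7]{canetti2000security}, and we will invoke it rather than reprove it.

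The main obstacle is making sure the reduction really can run the suffix. It relies on two facts. First, the output of the real $g_k$ must match (up to negligible error) the ideal output. Otherwise later data-dependent \emph{values}, such as sorted orders or OHT bin contents, would differ. That difference is harmless for the trace only because every subsequent access is determined by $P$; the concern is just that the joint output with $M$ must stay consistent. For $\pi_{\mathsf{OHT}}$ we need the public-capacity contract (no observable overflow, no data-dependent retries), which Section~\ref{sec:building-blocks} assumes explicitly; we will rely on that assumption, and any failure probability enters as a negligible additive term per call. Second, the calls must be strictly sequential, which holds because each component's primitive calls are issued in program order. For the O2 parallel scan we will treat the public per-worker schedule as fixing a canonical interleaving, so the program order is still well defined.
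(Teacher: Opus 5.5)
Your proposal is correct and follows essentially the same route as the paper: a hybrid sequence over the program-ordered calls $g_1,\ldots,g_m$, with a reduction that runs the prefix using the real protocols, forwards the $k$-th call to the primitive's challenge oracle, and runs the suffix using the functionalities. Your extra remarks (that $m$ is polynomial, consistency of the joint view and output, the OHT public-capacity contract, and the canonical interleaving for O2) match the paper's standing assumptions and only make explicit what its proof leaves implicit.
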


\begin{proof}
Fix one component $C$.
Define a hybrid sequence $H_C^0, H_C^1,$ $ \ldots, H_C^m$ as follows.
In $H_C^0$, all primitive calls are ideal calls.
In $H_C^j$, the first $j$ primitive calls are replaced by their real primitive protocols, and the remaining calls stay ideal.
Thus $H_C^0$ is $\pi_C^{\mathcal{F}}$, and $H_C^m$ is the fully concrete component protocol.

Suppose for contradiction that there is an index $j \in [1..m]$ and a distinguisher $\mathcal{D}$ that tells $H_C^{j-1}$ from $H_C^j$ with non-negligible advantage.
We build an adversary $\mathcal A_j$ against the security of the $j$-th primitive protocol as follows.
Adversary $\mathcal A_j$ simulates the entire component execution for $\mathcal D$.
For the first $j-1$ primitive calls it uses the real primitive protocols.
For the last $m-j$ primitive calls it uses the functionalities.
For the $j$-th call it forwards the inputs of that call to its challenge oracle.
If the challenge oracle is the functionality, then the view generated for $\mathcal D$ is distributed exactly as in $H_C^{j-1}$.
If the challenge oracle is the real primitive protocol, then the view generated for $\mathcal D$ is distributed exactly as in $H_C^j$.
Therefore any non-negligible advantage of $\mathcal{D}$ gives the same non-negligible advantage to $\mathcal{A}_j$.
This contradicts the assumed security of that primitive protocol.
Hence every adjacent pair $H_C^{j-1}$ and $H_C^j$ is computationally indistinguishable.
By chaining these pairs, we get $H_C^0 \approx_c H_C^m$.
\end{proof}

\begin{corollary}[Concrete security of the three components]
\label{cor:components-real}
Assume that $\pi_{\mathsf{OCmpSet}}$, $\pi_{\OSort}$, and $\pi_{\mathsf{OHT}}$ securely realize $\F_{\OCmpSet}$, $\F_{\OSort}$, and $\F_{\OHT}$, respectively.
Then the concrete protocols $\pi_{\mathsf{req}}$, $\pi_{\mathsf{fetch}}$, and $\pi_{\mathsf{upd}}$ securely realize $\F_{\mathsf{req}}$, $\F_{\mathsf{fetch}}$, and $\F_{\mathsf{upd}}$, respectively.
\end{corollary}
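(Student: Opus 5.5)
The plan is to derive Corollary~\ref{cor:components-real} by chaining, for each component $C \in \{\mathsf{req},\mathsf{fetch},\mathsf{upd}\}$, the hybrid-model simulation of the corresponding theorem with the primitive-replacement hybrid of Lemma~\ref{lem:primitive-replacement}. Fix $C$ and write $\pi_C^{\mathcal F}$ for the component with ideal primitive calls and $\pi_C$ for the concrete one. Theorems~\ref{thm:req}, \ref{thm:fetch}, and \ref{thm:upd} give simulators $\Sim_C$ with $\{\Sim_C(P)\}\approx_c\{\mathsf{view}_{\mathsf H}^{\pi_C^{\mathcal F}}(P,D)\}$. Lemma~\ref{lem:primitive-replacement} gives $\{\mathsf{view}_{\mathsf H}^{\pi_C^{\mathcal F}}(P,D)\}\approx_c\{\mathsf{view}_{\mathsf H}^{\pi_C}(P,D)\}$. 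Transitivity of computational indistinguishability (a union bound over two distinguishing advantages) then gives $\{\Sim_C(P)\}\approx_c\{\mathsf{view}_{\mathsf H}^{\pi_C}(P,D)\}$, which is exactly Definition~\ref{def:security-working} instantiated for $C$.

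One subtlety is that $\Sim_C$ must remain a valid simulator for the concrete protocol. Its output is the trace of the $\pi_C^{\mathcal F}$ execution on dummy data, in which the primitive calls contribute their ideal, value-independent traces. I will therefore define the concrete simulator $\Sim_C'$ to run $\Sim_C$ and, at each primitive call, emit the trace produced by that primitive's own simulator, which exists by the assumption that $\pi_{\mathsf{OCmpSet}}$, $\pi_{\OSort}$, and $\pi_{\mathsf{OHT}}$ securely realize their functionalities. The hybrid argument in Lemma~\ref{lem:primitive-replacement} then applies unchanged, with the challenge oracle either the primitive's simulator or its real trace.

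I also need the reduction $\mathcal A_j$ to be efficient. It must run the remaining parts of the component, including the other ideal and real calls, in polynomial time given $P$ and $D$. This is immediate, because all loop bounds are fixed by $P$.

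The second requirement of the definition is output correctness: the concrete component must return the same output as $\F_C$. For this I will argue that each real primitive computes its specified function, namely sorting, conditional selection, or table build. Under the public no-overflow capacity contract for the OHT, this makes $\pi_C$ and $\pi_C^{\mathcal F}$ compute identical outputs. The joint (view, output) distribution therefore reduces to the view alone, as in the Simplification paragraph.

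The main obstacle is the OHT. Chan et al.'s construction is only statistically correct, since overflow can occur with negligible probability. I would handle this by conditioning on the public capacity contract, or by absorbing the negligible failure probability into the indistinguishability bound. Beyond that, the argument is bookkeeping: the number of primitive calls $m$ is polynomial in $|P|$, so the hybrid chain loses at most a factor of $m$ in advantage.
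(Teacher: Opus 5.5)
Your proposal is correct and follows the paper's proof, which simply combines the hybrid-model Theorems~\ref{thm:req}, \ref{thm:fetch}, and~\ref{thm:upd} with the primitive-replacement Lemma~\ref{lem:primitive-replacement} (a hybrid over the primitive calls in program order) and concludes by transitivity. Your extra remarks are details the paper leaves implicit rather than a different route: plugging in the primitives' own simulators, efficiency of the reduction, output correctness, and the OHT's failure probability under the public capacity contract.
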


\begin{proof}
Theorems~\ref{thm:req}, \ref{thm:fetch}, and~\ref{thm:upd} prove the hybrid-model claims for the three components.
Lemma~\ref{lem:primitive-replacement} then replaces the ideal primitive calls inside each component by the real primitive protocols.
\end{proof}

\paragraph{Optimized Fetcher schedules.}
The Fetcher proof above is written for the logical fixed-schedule Fetcher.
The implementation optimizations from Section~\ref{sec:neuron-fetcher} preserve this claim because they are public refinements of the same schedule.
O1 defers dummy-payload materialization but, before any candidate-slot scan, restores fixed-size \texttt{Neuron-Buffer}s for every OHT slot that can be scanned, including padding dummy slots.
O2 partitions public OHT-bin work across threads according to public scheduler metadata; the proof treats this as a public per-worker schedule, not an arbitrary OS-chosen interleaving.
O3 reuses the feedforward OHT layout only through oblivious sorting or fixed scans, not through direct secret-dependent indexing.
Thus these optimizations add only public scheduling metadata and do not change the simulator's input beyond~$P$.

\subsection{Security of One Batch Step}
\label{sec:proof-batch}

A batch step is one execution of Neuron Requester, Neuron Fetcher in \textsc{Read} mode, Neuron Updater, and Neuron Fetcher in \textsc{Write} mode on one training batch.
The current internal structures of a batch step are the current LSH table, dense layer~0, and the optimizer values stored in their neuron records.
For readability, the displayed batch functionality writes only the LSH-table input/output; dense layer~0 is carried as implicit state across batch steps and updated by $\F_{\mathsf{upd}}$.

\begin{figure}[ht]
\centering
\fbox{\begin{minipage}{0.94\linewidth}
\textbf{Functionality $\F_{\mathsf{batch}}$}
\\ \hrule
\footnotesize
\begin{tabular}{@{}l@{}}
$\mathsf{LSH}_\ell' \leftarrow \F_{\mathsf{batch}}(\ell,\; \{q_i\}_{i=1}^B,\; \{Y_i\}_{i=1}^B,\; \{\mathbf{x}_i\}_{i=1}^B,\; \mathsf{LSH}_\ell,\; P)$: \\[4pt]
\quad 1: $\mathsf{ReqArray}_\ell \leftarrow \F_{\mathsf{req}}(\ell,\; \{q_i\}_{i=1}^B,\; P)$. \\[2pt]
\quad 2: $\mathsf{ReqArray}_\ell, \mathsf{LSH}_\ell \leftarrow \F_{\mathsf{fetch}}(\mathsf{ReqArray}_\ell,\; \mathsf{LSH}_\ell,\; \textsc{Read},\; P)$. \\[2pt]
\quad 3: Allocate fresh $\mathsf{ActiveNodes}$ and $\mathsf{ActiveVals}$ of \\
\quad \; the public sizes fixed by $P$. \\[2pt]
\quad 4: $\mathsf{ReqArray}_\ell \leftarrow \F_{\mathsf{upd}}(\mathsf{ReqArray}_\ell,\; \mathsf{ActiveNodes},\; \mathsf{ActiveVals},\;$ \\
\quad\quad\quad\quad\quad\quad\quad\quad $\{Y_i\}_{i=1}^B,\; \{\mathbf{x}_i\}_{i=1}^B,\; P)$. \\[2pt]
\quad 5: $\mathsf{ReqArray}_\ell, \mathsf{LSH}_\ell' \leftarrow \F_{\mathsf{fetch}}(\mathsf{ReqArray}_\ell,\; \mathsf{LSH}_\ell,\; \textsc{Write},\; P)$. \\[2pt]
\quad 6: \textbf{return} $\mathsf{LSH}_\ell'$.
\end{tabular}
\end{minipage}}
\caption{Ideal functionality for one batch step.}
\label{func:batch}
\end{figure}

\begin{figure}[ht]
\centering
\fbox{\begin{minipage}{0.94\linewidth}
\textbf{Protocol $\pi_{\mathsf{batch}}$}
\\ \hrule
\begin{tabular}{@{}l@{}}
$\mathsf{LSH}_\ell' \leftarrow \pi_{\mathsf{batch}}(\ell,\; \{q_i\}_{i=1}^B,\; \{Y_i\}_{i=1}^B,\; \{\mathbf{x}_i\}_{i=1}^B,\; \mathsf{LSH}_\ell,\; P)$: \\[4pt]
\quad 1: Call $\F_{\mathsf{req}}$ exactly as in Figure~\ref{func:batch}, step~1. \\[2pt]
\quad 2: Call $\F_{\mathsf{fetch}}$ in \textsc{Read} mode exactly as in Figure~\ref{func:batch}, step~2. \\[2pt]
\quad 3: Allocate fresh $\mathsf{ActiveNodes}$ and $\mathsf{ActiveVals}$ of \\
\quad \; the public sizes fixed by $P$. \\[2pt]
\quad 4: Call $\F_{\mathsf{upd}}$ exactly as in Figure~\ref{func:batch}, step~4. \\[2pt]
\quad 5: Call $\F_{\mathsf{fetch}}$ in \textsc{Write} mode exactly as in Figure~\ref{func:batch}, step~5. \\[2pt]
\quad 6: \textbf{return} $\mathsf{LSH}_\ell'$.
\end{tabular}
\end{minipage}}
\caption{One batch step in the component hybrid model.}
\label{fig:pi-batch}
\end{figure}

\begin{figure}[ht]
\centering
\fbox{\begin{minipage}{0.94\linewidth}
\textbf{Simulator $\Sim_{\mathsf{batch}}(P)$}
\\ \hrule
\begin{tabular}{@{}l@{}}
$\widehat{\mathsf{LSH}}_\ell' \leftarrow \Sim_{\mathsf{batch}}(P)$: \\[4pt]
\quad 0: Generate dummy batch inputs $\{\hat{q}_i\}_{i=1}^B$, \\ 
\quad \; dummy labels $\{\hat{Y}_i\}_{i=1}^B$,  and dummy features $\{\hat{\mathbf{x}}_i\}_{i=1}^B$ \\
 \quad \; of the same public shapes as the real batch. \\[2pt]
\quad \; Generate a dummy current LSH table $\widehat{\mathsf{LSH}}_\ell$ \\
\quad \; with the same public shape as the real current LSH table. \\[2pt]
\quad 1: Call $\F_{\mathsf{req}}$, $\F_{\mathsf{fetch}}$ in \textsc{Read} mode, $\F_{\mathsf{upd}}$, and \\
\quad \; $\F_{\mathsf{fetch}}$ in \textsc{Write} mode in the same order as $\pi_{\mathsf{batch}}$. \\[2pt]
\quad 2: \textbf{return} $\widehat{\mathsf{LSH}}_\ell'$.
\end{tabular}
\end{minipage}}
\caption{Simulator for one batch step.}
\label{fig:sim-batch}
\end{figure}

\begin{theorem}[Hybrid-model security of one batch step]
\label{thm:batch-hybrid}
Protocol $\pi_{\mathsf{batch}}$ securely realizes $\F_{\mathsf{batch}}$ in the $(\F_{\mathsf{req}}, \F_{\mathsf{fetch}}, \F_{\mathsf{upd}})$-hybrid model.
\end{theorem}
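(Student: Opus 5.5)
The plan is to prove Theorem~\ref{thm:batch-hybrid} with the simulator $\Sim_{\mathsf{batch}}$ of Figure~\ref{fig:sim-batch}. I would show that its trace is identical to the Hypervisor's view of $\pi_{\mathsf{batch}}$ in the $(\F_{\mathsf{req}},\F_{\mathsf{fetch}},\F_{\mathsf{upd}})$-hybrid model. In that model, each component call is an ideal call. The Hypervisor receives $\bot$ from it and observes only the trace that the component's simulator produces from $P$ (by Corollary~\ref{cor:components-real}). The only other observable work in $\pi_{\mathsf{batch}}$ is step~3, which allocates $\mathsf{ActiveNodes}$ and $\mathsf{ActiveVals}$ at sizes fixed by $P$, and the handoff of intermediate state between calls.

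\textbf{Step 1.} First I would write the real hybrid view as the concatenation
\[
\tau=\tau_{\mathsf{req}}\,\|\,\tau_{\mathsf{fetch}}^{\textsc{Read}}\,\|\,\tau_{\mathsf{alloc}}\,\|\,\tau_{\mathsf{upd}}\,\|\,\tau_{\mathsf{fetch}}^{\textsc{Write}}.
\]
The concatenation order is fixed by the public program order, since the components run strictly sequentially.

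\textbf{Step 2.} Next I would define $\Sim_{\mathsf{batch}}(P)$ to output
\[
\Sim_{\mathsf{req}}(P)\,\|\,\Sim_{\mathsf{fetch}}(P,\textsc{Read})\,\|\,\tau_{\mathsf{alloc}}\,\|\,\Sim_{\mathsf{upd}}(P)\,\|\,\Sim_{\mathsf{fetch}}(P,\textsc{Write}).
\]
The term $\tau_{\mathsf{alloc}}$ is a deterministic function of $P$, so the simulator can compute it directly.

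\textbf{Step 3.} Then I would argue by a hybrid over the four component calls, replacing one real sub-trace at a time by its simulated counterpart. Each replacement is justified by the corresponding component theorem, Theorems~\ref{thm:req}, \ref{thm:fetch} and~\ref{thm:upd}; the Write-mode Fetcher is covered because $\mathsf{mode}$ is public. If some adjacent pair of hybrids could be distinguished, the reduction would embed one component's challenge trace and generate the other sub-traces itself. This is the same argument as Lemma~\ref{lem:primitive-replacement}, with components in place of primitives. Composition then follows from the sequential composition theorem of~\cite{canetti2000security}.

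\textbf{Main obstacle.} I expect the hardest point to be the inter-component dependency. The input to $\F_{\mathsf{fetch}}$ and $\F_{\mathsf{upd}}$ is private state produced by earlier calls: $\mathsf{ReqArray}_\ell$, $\mathsf{LSH}_\ell$, and the dense layer~0. For the reduction to work, each component simulator's guarantee must hold for \emph{every} input consistent with $P$, not just for a fresh dataset. I would check this against the component proofs. Their traces are shown to be deterministic functions of $P$ alone, meaning identical rather than merely indistinguishable, for arbitrary private contents of matching public shape. That makes the guarantee uniform over inputs, so conditioning on earlier outputs is harmless. Since every trace in the chain is identical, the sub-traces are jointly (not just individually) distributed as in the real execution. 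One further item needs verifying: the Write-mode precondition that there is at most one non-dummy entry per \texttt{LSHbucketID}. It only affects correctness, not the trace, because the merge step in $\F_{\mathsf{upd}}$ guarantees it. Finally, $\F_{\mathsf{batch}}$'s output goes to the User, and the real protocol computes the same function, so the joint output distributions coincide.
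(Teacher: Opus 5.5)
Your argument is sound, but it proves more than this theorem asks: your Steps~1--3 carry out the composition that the paper defers to Theorem~\ref{thm:batch-real}.

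\textbf{The paper's route.} In the $(\F_{\mathsf{req}},\F_{\mathsf{fetch}},\F_{\mathsf{upd}})$-hybrid model there are no component sub-traces to replace. Each ideal call reveals only the fixed public call schedule and returns an object of public shape. The only non-ideal observable work is the public-size allocation of $\mathsf{ActiveNodes}$ and $\mathsf{ActiveVals}$. The paper's proof is just this observation: $\Sim_{\mathsf{batch}}$ makes the same four ideal calls, in the same order, on dummy state of the same public shapes. No reduction to Theorems~\ref{thm:req}--\ref{thm:upd} is needed, and the composition step is isolated in Theorem~\ref{thm:batch-real}. Your Step~3 (a hybrid over the four calls, embedding one component's challenge) is essentially the paper's proof of that later theorem, where the calls are instantiated by $\pi_{\mathsf{req}},\pi_{\mathsf{fetch}},\pi_{\mathsf{upd}}$.

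\textbf{A modelling inconsistency.} Your setup mixes the two models. You say the Hypervisor sees $\Sim_C(P)$ for each ideal call, citing Corollary~\ref{cor:components-real}, which concerns the concrete instantiation. Yet Step~1 decomposes the view into real sub-traces $\tau_C$. Under the first reading, your Step~2 output already equals the view and Step~3 is vacuous. Under the paper's reading, where an ideal call leaks only that it occurred, a simulator that emits full component traces would not match the hybrid view at all. Fix the model explicitly before writing the argument.

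\textbf{What your route buys.} Your ``main obstacle'' is a genuine point: the component guarantees must hold uniformly for arbitrary intermediate inputs of public shape, not just for inputs derived from a fresh dataset. The paper's Theorem~\ref{thm:batch-real} reduction relies on this only implicitly, when it forwards the $j$-th call's inputs to the challenge oracle. In the hybrid-model theorem itself the issue does not arise, because ideal calls reveal nothing data-dependent.
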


\begin{proof}
The protocol and the simulator make the same four component calls in the same order.
The only work done outside these four calls is the allocation of $\mathsf{ActiveNodes}$ and $\mathsf{ActiveVals}$, whose sizes are fixed by the public parameters.
The simulator uses dummy inputs, a dummy current LSH table, and dummy dense layer~0 state of the same public shapes as the real ones.
By the definitions of $\F_{\mathsf{req}}$, $\F_{\mathsf{fetch}}$, and $\F_{\mathsf{upd}}$, each call reveals only the fixed public call schedule and returns an object of the same public shape.
Hence the trace of $\pi_{\mathsf{batch}}$ is computationally indistinguishable from the trace produced by $\Sim_{\mathsf{batch}}(P)$.
\end{proof}

\begin{theorem}[Concrete security of one batch step]
\label{thm:batch-real}
Assume the premise of Corollary~\ref{cor:components-real}.
Then the concrete batch protocol $\pi_{\mathsf{batch}}^{\pi_{\mathsf{req}},\pi_{ \mathsf{fetch}}, \pi_{\mathsf{upd}}}$ obtained by replacing the four component calls in $\pi_{\mathsf{batch}}^{\mathcal{F}_{\mathsf{req}},\mathcal{F}_{ \mathsf{fetch}}, \mathcal{F}_{\mathsf{upd}}}$ with $\pi_{\mathsf{req}}$, $\pi_{\mathsf{fetch}}$, $\pi_{\mathsf{upd}}$, and $\pi_{\mathsf{fetch}}$ securely realizes $\F_{\mathsf{batch}}$.
\end{theorem}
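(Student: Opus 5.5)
The proof will be a composition argument combining Theorem~\ref{thm:batch-hybrid} with Corollary~\ref{cor:components-real}. It follows the same hybrid-chaining template already used in Lemma~\ref{lem:primitive-replacement}, but at the granularity of component calls rather than primitive calls.

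First, list the four component invocations of $\pi_{\mathsf{batch}}$ in program order: $c_1=\F_{\mathsf{req}}$, $c_2=\F_{\mathsf{fetch}}$ (\textsc{Read}), $c_3=\F_{\mathsf{upd}}$, $c_4=\F_{\mathsf{fetch}}$ (\textsc{Write}). These are strictly sequential: each consumes the output of its predecessor, and no two run concurrently. This is the nonconcurrency condition needed for sequential composition~\cite[Corollary~7]{canetti2000security}. The only work outside these calls is the allocation of $\mathsf{ActiveNodes}$ and $\mathsf{ActiveVals}$, whose trace is fixed by $P$.

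Next, define hybrids $G^0,\ldots,G^4$. In $G^0$ all four calls are ideal, so $G^0$ is $\pi_{\mathsf{batch}}$ in the component hybrid model. In $G^k$ the first $k$ calls are replaced by $\pi_{\mathsf{req}},\pi_{\mathsf{fetch}},\pi_{\mathsf{upd}},\pi_{\mathsf{fetch}}$ respectively. For adjacent hybrids, suppose a distinguisher separates $G^{k-1}$ from $G^k$. Build a reduction against the security of the $k$-th component, as given by Corollary~\ref{cor:components-real}:
\begin{itemize}
\item it runs the earlier calls as real protocols and the later calls as functionalities, both of which it can execute itself since it holds all inputs;
\item it routes the $k$-th call to its challenge;
\item when the challenge is ideal, it embeds the component simulator's trace, $\Sim_{\mathsf{req}}(P)$, $\Sim_{\mathsf{fetch}}(P)$ or $\Sim_{\mathsf{upd}}(P)$, at that position.
\end{itemize}
The resulting view is exactly $G^{k-1}$ or $G^k$, so the advantage transfers, which contradicts the corollary. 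A hybrid argument then gives $G^0\approx_c G^4$. Composing this with Theorem~\ref{thm:batch-hybrid}, which gives $\Sim_{\mathsf{batch}}(P)\approx_c G^0$, yields the claim.

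The main obstacle is the reduction step. The component security statements quantify over inputs ``consistent with $P$'', but here the $k$-th component's inputs (the $\mathsf{ReqArray}_\ell$, the current $\mathsf{LSH}_\ell$, the dense layer~0 state) are intermediate states produced by earlier calls rather than raw data. My first step is to check that every intermediate state has exactly the public shape fixed by $P$: size $T$, $\mathsf{numLshBucs}\times\PADSIZE$ plus the public overflow region, and the OHT capacity contract. Each component simulator depends only on $P$, so its guarantee holds uniformly over all such inputs, including adversarially correlated ones. This lets the reduction fix the intermediate state non-uniformly, as advice, or generate it on the fly. I would also remark that the user outputs are identical in all hybrids by functional correctness, so only the traces need comparison.
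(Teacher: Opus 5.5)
Your proposal is correct and is essentially the paper's own proof. It uses the same four hybrids, replacing $\F_{\mathsf{req}}$, the \textsc{Read} call to $\F_{\mathsf{fetch}}$, $\F_{\mathsf{upd}}$, and the \textsc{Write} call to $\F_{\mathsf{fetch}}$ one at a time in program order; for each adjacent pair it reduces to Corollary~\ref{cor:components-real} (earlier calls real, later calls ideal, the $k$-th call forwarded to the challenger), and then combines the chain with Theorem~\ref{thm:batch-hybrid}. Your extra remarks make explicit two points the paper's proof leaves implicit: that the component simulator's trace is embedded on the ideal branch, and that the component guarantees must hold for intermediate states of the public shape fixed by $P$, not just for raw data consistent with $P$.
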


\begin{proof}
Define hybrids $H_0, H_1, H_2, H_3, H_4$.
In $H_0$, all four component calls are functionalities, so $H_0$ is exactly $\pi_{\mathsf{batch}}^{\mathcal{F}_{\mathsf{req}},\mathcal{F}_{ \mathsf{fetch}}, \mathcal{F}_{\mathsf{upd}}}$.
In $H_1$, replace only the first call, namely $\F_{\mathsf{req}}$, by $\pi_{\mathsf{req}}$, which is $\pi_{\mathsf{batch}}^{\pi_{\mathsf{req}},\mathcal{F}_{ \mathsf{fetch}}, \mathcal{F}_{\mathsf{upd}}}$.
In $H_2$, also replace the \textsc{Read} call to $\F_{\mathsf{fetch}}$ by $\pi_{\mathsf{fetch}}$, which is $\pi_{\mathsf{batch}}^{\pi_{\mathsf{req}},\pi_{ \mathsf{fetch}}, \mathcal{F}_{\mathsf{upd}}}$.
In $H_3$, also replace $\F_{\mathsf{upd}}$ by $\pi_{\mathsf{upd}}$, which is $\pi_{\mathsf{batch}}^{\pi_{\mathsf{req}},\pi_{ \mathsf{fetch}}, \pi_{\mathsf{upd}}}$.
In $H_4$, also replace the \textsc{Write} call to $\F_{\mathsf{fetch}}$ by $\pi_{\mathsf{fetch}}$.
Thus $H_4$ is the concrete protocol $\pi_{\mathsf{batch}}^{\pi_{\mathsf{req}},\pi_{ \mathsf{fetch}}, \pi_{\mathsf{upd}}}$.

Suppose some adjacent pair $H_{j-1}$ and $H_j$ can be distinguished with non-negligible advantage.
We build an adversary against the concrete security of the component replaced at step $j$.
That adversary runs the full batch step internally.
For the first $j-1$ component calls it uses the real component protocols.
For the last $4-j$ component calls it uses the component functionalities.
For the $j$-th component call it forwards the inputs of that call to its challenge oracle.
If the oracle is the functionality, the internal execution is distributed exactly as $H_{j-1}$.
If the oracle is the real component protocol, the internal execution is distributed exactly as $H_j$.
Hence, any distinguisher for $H_{j-1}$ and $H_j$ can be used to build an adversary for the security of that component, which contradicts Corollary~\ref{cor:components-real}.
Thus, every adjacent pair is computationally indistinguishable.
By chaining the four pairs, $H_0 \approx_c H_4$.
Together with Theorem~\ref{thm:batch-hybrid}, this proves the theorem.
\end{proof}

\subsection{Security of the Full Training Protocol}
\label{sec:proof-system}

\noindent\textbf{Ideal Functionality for LSH Refresh.}
For clarity, we define the ideal functionality $\mathcal{F}_{\mathsf{refresh}}$ for LSH initialization and refresh.
Parameters such as LSH size, LSH bucket capacity, and array lengths below are fixed by the public parameter set $P$.
For simplicity, the functionality specifies only the logical operations of the LSH.
Its detailed oblivious realization, including fixed-size oblivious sorting and branch-free capacity enforcement, is given by \textsc{OblRefreshLSH} in Algorithm~\ref{alg:lsh-refresh-obl} and \textsc{EnforceCapacity} in Algorithm~\ref{alg:enforce-capacity} shown in Appendix~\ref{app:lsh-refresh}.

\begin{figure}[ht]
\centering
\fbox{\begin{minipage}{0.96\linewidth}
\textbf{Functionality $\F_{\mathsf{refresh}}$}
\\ \hrule
\begin{tabular}{@{}l@{}}
$\mathsf{LSHTable}'_\ell \leftarrow
  \F_{\mathsf{refresh}}(\mathsf{LSHTable}_\ell,\;
  \mathsf{isInit},\; P)$: \\[4pt]
\quad 1: Collect all neuron records from the main and overflow \\
\quad\quad\; regions into a working array $A$. \\[2pt]
\quad 2: If $\mathsf{isInit}=1$, append the fixed dummy reservoir \\
\quad\quad\; determined by $P$ to $A$. \\[2pt]
\quad 3: Recompute the WTA signature for every real neuron; \\
\quad\quad\; each dummy neuron retains its predetermined signature. \\[2pt]
\quad 4: Group all neurons by their assigned buckets, placing real \\
\quad\quad\; neurons before dummy neurons within each bucket. \\[2pt]
\quad 5: Retain up to \texttt{PADSIZE} neurons in each main-region \\
\quad\quad\; bucket and place all remaining neurons in the fixed \\
\quad\quad\; overflow region determined by $P$. \\[2pt]
\quad 6: \textbf{return} refreshed $\mathsf{LSHTable}'_\ell$.
\end{tabular}
\end{minipage}}
\caption{Functionality for LSH initialization and refresh.}
\label{func:refresh}
\end{figure}

We now prove the top-level statement.
The full training protocol follows a public event schedule: one initial LSH construction event, a public-length sequence of batch steps, and periodic LSH refresh events at public batch indices determined by~$P$.
Let $N_{\mathsf{batch}}$ be the total number of batch steps, which is determined by the public training configuration in $P$.
Let $N_{\mathsf{event}}$ be the total number of public initialization, batch, and refresh events.
At the end of the last batch step, the trained model $M$ is read from the final internal structures and returned to the User.
The Hypervisor still sees only the trace.

\begin{theorem}[Security of \system]
\label{thm:system-final}
Assume the premise of Corollary~\ref{cor:components-real}. 
Also assume that the LSH initialization and refresh procedure of Appendix~\ref{app:lsh-refresh} operates on a main region (LSH table) of fixed size $\mathsf{numLshBucs}\times\texttt{PADSIZE}$ and an overflow region of fixed capacity $C_{\mathsf{ovf}}$, where these capacities, the total number of neurons, the refresh schedule, and any rule used to select \texttt{PADSIZE} are included in the public parameters $P$. 
Then, for the two-layer \system architecture formalized above, protocol $\pi_{\system}$ securely realizes $\mathcal{F}_{\mathsf{train}}$ in the sense of Definition~\ref{def:security-working}, regardless of whether the overflow region contains only dummy entries or also real neurons.
\end{theorem}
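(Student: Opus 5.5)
The plan is to treat the full training run as a sequential composition of $N_{\mathsf{event}}$ events along the public schedule fixed by $P$: one initialization event (a call to $\F_{\mathsf{refresh}}$ with $\mathsf{isInit}=1$), $N_{\mathsf{batch}}$ batch steps, and refresh events at public batch indices. Since the event list, its length, and the event type at each position depend only on $P$, the simulator $\Sim(P)$ walks this list and, at each position, outputs the trace of the matching component simulator. It uses $\Sim_{\mathsf{batch}}(P)$ from Theorem~\ref{thm:batch-hybrid} for batch events and a new simulator $\Sim_{\mathsf{refresh}}(P)$ for initialization and refresh events. The final readout of $M$ is a copy of fixed-shape structures in public order, so its trace is a fixed scan that the simulator reproduces verbatim.

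The one missing ingredient is a component theorem for $\F_{\mathsf{refresh}}$ in the $(\F_{\OSort},\F_{\OCmpSet})$-hybrid model, proved in the style of Theorems~\ref{thm:req}--\ref{thm:upd}. $\Sim_{\mathsf{refresh}}$ builds a dummy array $\hat A$ of length $|A| = \mathsf{numLshBucs}\cdot\PADSIZE + C_{\mathsf{ovf}}$, plus the reservoir when $\mathsf{isInit}=1$. This length is public because the total number of neurons, the capacities, and the isInit flag at each schedule position are all in $P$. The simulator then runs \textsc{OblRefreshLSH} on $\hat A$ and compares phases as follows:
\begin{itemize}
\item the flatten step and the dummy append have fixed addresses;
\item the WTA loop invokes $\F_{\OCmpSet}$ a number of times fixed by $K$ and $M$;
\item each of the two sorts has a trace depending only on $|A|$;
\item \textsc{EnforceCapacity} is a single pass over $|A|$ positions that uses only $\F_{\OCmpSet}$;
\item the two write-backs copy a prefix of length $\mathsf{numLshBucs}\cdot\PADSIZE$ and a suffix of length $C_{\mathsf{ovf}}$ to fixed regions.
\end{itemize}
Lemma~\ref{lem:primitive-replacement}, applied verbatim to this fourth component, then gives concrete security.

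Next I compose over the schedule with a hybrid argument that mirrors the proof of Theorem~\ref{thm:batch-real}. Hybrid $G_k$ runs the first $k$ events with their real protocols and the remaining events with the corresponding simulators on dummy state. A distinguisher between $G_{k-1}$ and $G_k$ yields an adversary against event $k$'s component: that adversary runs the real prefix itself, which is possible because it knows $D$ in the reduction, and plugs in simulated traces for the suffix. The prefix's internal state is passed as event $k$'s input. This is legitimate because the component statements quantify over all inputs consistent with $P$. Chaining $N_{\mathsf{event}}$ steps, a polynomial number because the training length is in $P$, gives $\{\Sim(P)\}\approx_c\{\mathsf{view}_{\mathsf H}^{\pi_{\system}}(P,D)\}$. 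The output simplification of Section~\ref{sec:security-model} then lifts this to the joint real/ideal distribution of Definition~\ref{def:security}.

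The main obstacle is the clause ``regardless of whether the overflow region contains only dummy entries or also real neurons.'' This requires that the trace be independent of how many real neurons overflow, which in turn requires that every size, address, and loop bound be insensitive to the real-versus-dummy split. I handle this by an invariant, proved by induction over events. The claim is that after every initialization or refresh, the main region has exactly $\mathsf{numLshBucs}\cdot\PADSIZE$ slots, the overflow region has exactly $C_{\mathsf{ovf}}$ slots, and their union always holds the same public number of records, namely the real neurons plus the fixed dummy reservoir. Batch steps preserve this invariant because Fetcher Write mode overwrites buckets in place and never touches the overflow region. Refreshes preserve it because $|A|$ is conserved and the split point is fixed. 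With the invariant in place, the number of overflowing real neurons affects only values written in encrypted memory and selector bits inside $\F_{\OCmpSet}$ and $\F_{\OSort}$, never the trace.

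A secondary point is to check that $C_{\mathsf{ovf}}$ being public is consistent: the sum of the main and overflow capacities must equal the conserved record count, so no data-dependent resizing is needed. I would state this consistency condition explicitly as part of the assumption on $P$.
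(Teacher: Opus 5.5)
Your proposal is correct and follows essentially the same route as the paper: a hybrid over the public event schedule. Batch events reduce to Theorem~\ref{thm:batch-real}. Initialization and refresh events are handled by noting that the array lengths, the two oblivious sorts, the \textsc{EnforceCapacity} scan, and the main/overflow write-back split are all fixed by $P$, so whether overflow entries are real or dummy changes only encrypted values and sort keys.

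Your separate hybrid-model refresh component (own simulator plus Lemma~\ref{lem:primitive-replacement}) and your explicit region-size invariant formalize what the paper argues inline. One harmless slip: at initialization $|A|$ is the number of real neurons plus the $\mathsf{numLshBucs}\cdot\PADSIZE$ dummy reservoir, not $\mathsf{numLshBucs}\cdot\PADSIZE + C_{\mathsf{ovf}}$ plus the reservoir, though it is public either way.
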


\begin{proof}
Define hybrids $G_0, G_1, \ldots, G_{N_{\mathsf{event}}}$ over the public event schedule.
In $G_t$, the first $t$ events are concrete events, and the remaining $N_{\mathsf{event}}-t$ events are simulated events driven by dummy state of the same public shape.
Batch events use the concrete protocol $\pi_{\mathsf{batch}}$ or the ideal functionality $\F_{\mathsf{batch}}$ as above.

Initialization and refresh events use either \textsc{OblRefreshLSH} from Algorithm~\ref{alg:lsh-refresh-obl} or the ideal functionality $\mathcal{F}_{\mathsf{refresh}}$ defined above in Figure~\ref{func:refresh}.

Thus $G_0$ is the fully simulated training loop, and $G_{N_{\mathsf{event}}}$ is the real execution of $\pi_{\system}$.

Fix any $t \in [1..N_{\mathsf{event}}]$.
The only difference between $G_{t-1}$ and $G_t$ is the $t$-th public event.
All earlier events are identical in the hybrids, so they produce the same public LSH-table shape before event $t$.
All later events use the same public schedule on state with the same public shape, so their traces are identical.
If event $t$ is a batch step, any distinguisher gives a distinguisher for one batch step, contradicting Theorem~\ref{thm:batch-real}.

If event $t$ is an initialization or refresh event, the real execution and the simulation process arrays of the same fixed length determined by $P$.
Both executions compute a WTA candidate bucket (i.e., WTA signature) for every neuron using the same fixed loop structure, invoke the same fixed-size oblivious sorts, execute algorithm \textsc{EnforceCapacity} as a branch-free linear scan over all neurons, and write the same fixed numbers of neurons to the main (LSH table) and overflow regions.
The values of \texttt{newID}, \texttt{isDummy}, and \texttt{isOverflow} may differ, but they affect only encrypted values and oblivious sort keys, not the access trace.
Therefore, based on \textsc{OblRefreshLSH} (Algorithm~\ref{alg:lsh-refresh-obl}) and \textsc{EnforceCapacity} (Algorithm~\ref{alg:enforce-capacity}), the existing overflow area handles overflowing neurons obliviously.
Whether an overflow entry contains a real or a dummy neuron affects only the contents of the refreshed LSH, not its observable memory access trace.
Consequently, the concrete refresh procedure realizes $\mathcal{F}_{\mathsf{refresh}}$ as shown in Figure~\ref{func:refresh}, regardless of the specifics of the run and its overflow, i.e., how many neurons overflow, if any.

The refreshed LSH has the same fixed main-region and overflow-region sizes in both executions.
All subsequent batch and refresh events therefore continue to operate on the same fixed sizes and event schedule, so replacing this event preserves the validity of the remaining hybrid steps.
Hence every adjacent pair $G_{t-1}$ and $G_t$ is computationally indistinguishable.
By chaining these pairs, $G_0\approx_c G_{N_{\mathsf{event}}}$.

In $G_0$, the simulator sees only $P$ and produces a trace by running the dummy initialization, batch-step, and refresh simulations according to the public event schedule.
In $G_{N_{\mathsf{event}}}$, the Hypervisor sees the real trace of $\pi_{\system}$.
The hybrid argument concerns only the Hypervisor's visible trace.
The model returned to the User is handled separately by correctness.
In the ideal world, the output model is produced by $\F_{\mathsf{train}}$, while in the real world it is produced by the execution of $\pi_{\system}$.
By correctness, the honest-party output of $\pi_{\system}$ matches the output specified by $\F_{\mathsf{train}}$.
Thus, the only part that needs simulation is the Hypervisor's trace.
We conclude
\[
    \bigl\{\Sim(P)\bigr\}
    \;\approx_c\;
    \bigl\{\mathsf{view}_{\mathsf{H}}^{\pi_{\system}}(P, D)\bigr\},
\]
which is exactly Definition~\ref{def:security-working}.
\end{proof}

\section{Cost of Hiding the Sparse Input}
\label{app:input-support}

As stated in Section~\ref{sec:threat-model}, \system treats the number and positions of non-zero input features as public parameters.
This appendix describes a compatible full-dimensional scan that removes this exposure and reports its measured cost on our evaluated workloads.

\noindent\textbf{Construction.}
The front-end hides the sparse input by replacing the sparse traversal of only non-zero features with a fixed-order scan over the entire original input space.
Let $D$ denote the original input dimensionality.
For every input, the front-end visits each dimension $d\in[0,D)$, including dimensions whose values are zero, and performs the same first-layer computation.
No feature is removed or combined; the front-end changes only which input positions are processed.
Concretely, for each input, \system\ processes all $D$ feature positions in a fixed order: 
the feed-forward pass of Algorithm~\ref{alg:updater-ff} iterates over $d\in[0,D)$ rather than only the $\mathrm{nnz}$ non-zero positions, and the
first-layer gradient accumulation in Phase~3 of Algorithm~\ref{alg:updater-bp} likewise updates $v.\mathtt{t}[d]$ for every $d\in[0,D)$.
Feature and neuron ranges are assigned deterministically to different threads.
Because the loop bounds and memory addresses are fixed by batch size $B$, the number of the first layer's neurons$N_0$, and input dimension $D$, feature values affect only arithmetic operands and not the access pattern.
Each position executes the same multiply-accumulate operations regardless of whether the value is zero or non-zero, so an oblivious comparison or oblivious selection primitive is not required.
The security argument of Appendix~\ref{app:proof} therefore still holds with this front-end, since $B, N_0, D$ are already included in the public parameter set $P$.

\noindent\textbf{Complexity and Measurement.}
For a batch of $B$ inputs, the full-dimensional \textit{feed-forward} pass performs $B \cdot N_0 \cdot D$ multiplication operations, where $N_0$ is the number of neurons in the first layer, and $D$ is the dimensionality of the input.
As for \textit{back-propagation}, first-layer gradient accumulation performs another $B \cdot N_0 \cdot D$ operations.
No input-gradient pass is needed because backpropagation does not continue past the first layer to compute gradients with respect to the input.
The resulting total cost is $2B\cdot N_0 \cdot D$ operations, or $O(B \cdot N_0 \cdot D)$ total work and $O(\frac{B \cdot N_0 \cdot D}{p})$ ideal parallel time with $p$ threads.
Table~\ref{tab:input-support} reports the measured server-side computation time of both passes.

\begin{table}[tbh]
\centering
\caption{Measured server-side computation cost of the
full-dimensional input scan at $B{=}32$, relative to \system's per-batch
runtime at 44 threads (Table~\ref{tab:eval-runtime-32}).
The scan time includes feed-forward computation and first-layer gradient
accumulation, but excludes client-to-\system\ transmission.
Measurements use the same Google Cloud Confidential C3 instance and
\texttt{-O0} build as Section~\ref{sec:evaluation}; we report the median of
10 repetitions after 3 warm-up rounds, with a standard deviation below
$0.4\%$ of the median in every configuration.}
\label{tab:input-support}
\small
\begin{tabular}{lrrrrr}
\toprule
Dataset & $D$ & $N_0$ & Front-end & \system & Overhead \\
\midrule
Wiki10-31K & 101,938   & 256 & 0.124\,s & 248.3\,s  & 0.050\% \\
Amz-Sub    & 135,909   & 128 & 0.089\,s & 695.7\,s  & 0.013\% \\
Wiki-325K  & 1,617,899 & 128 & 1.083\,s & 1578.3\,s & 0.069\% \\
\bottomrule
\end{tabular}
\end{table}

As shown in Table~\ref{tab:input-support}, the full-dimensional scan adds at most $0.07\%$ to \system's server-side per-batch runtime on the evaluated workloads with batch size $B=32$ and threat number $p=44$.

\noindent\textbf{Factors Contributing to the Low Overhead.}
Several properties contribute to this low measured overhead.
First, the fixed-order scan has high memory locality and can be partitioned deterministically across threads.
Second, each iteration consists only of multiplication operations; unlike the Neuron Fetcher, it requires no oblivious sorting, comparison, or selection.
Finally, the relative overhead is particularly small because \system's per-batch runtime is dominated by the oblivious sorting in the Neuron Fetcher (Figure~\ref{fig:breakdown-bs32} in Section~\ref{sec:eval-performance}), rather than by first-layer arithmetic.

\section{Effect of \texttt{PADSIZE} on Accuracy}
\label{app:padsize-accuracy}

Following the methodology of Section~\ref{sec:acc-eval}, we evaluate the effect of \texttt{PADSIZE} on the NN accuracy. 
We highlight here that for this experiment we use the baseline \texttt{SLIDE} implementation configured with our newly introduced \texttt{MP-WTA}. 
That is, the tested pipeline is configured without \system's TEE or oblivious execution mechanisms, since they do not affect NN training accuracy.
This experiment therefore isolates the effect of \texttt{PADSIZE} on predictive accuracy and does not concern itself with the time performance of \system and its oblivious mechanisms.

We evaluate small and medium-sized datasets,  Wiki10-31K and Amazon-Sub, with \texttt{PADSIZE} set to 128, 256, and 512.
The default value of \texttt{PADSIZE}=128 used in our main experiments follows the configuration adopted by prior work~\cite{chen2020slide}.
Except for \texttt{PADSIZE}, all model architectures, training hyperparameters, hashing configurations, batch sizes, training schedules, and evaluation procedures are identical to those used in Section~\ref{sec:acc-eval}.
As in that section, accuracy is measured using Precision at 1 (P@1).
Each curve reports the average of three independent runs.

\begin{figure}[H]
  \centering

  \begin{subfigure}[b]{0.49\linewidth}
    \centering
    \includegraphics[width=\linewidth]{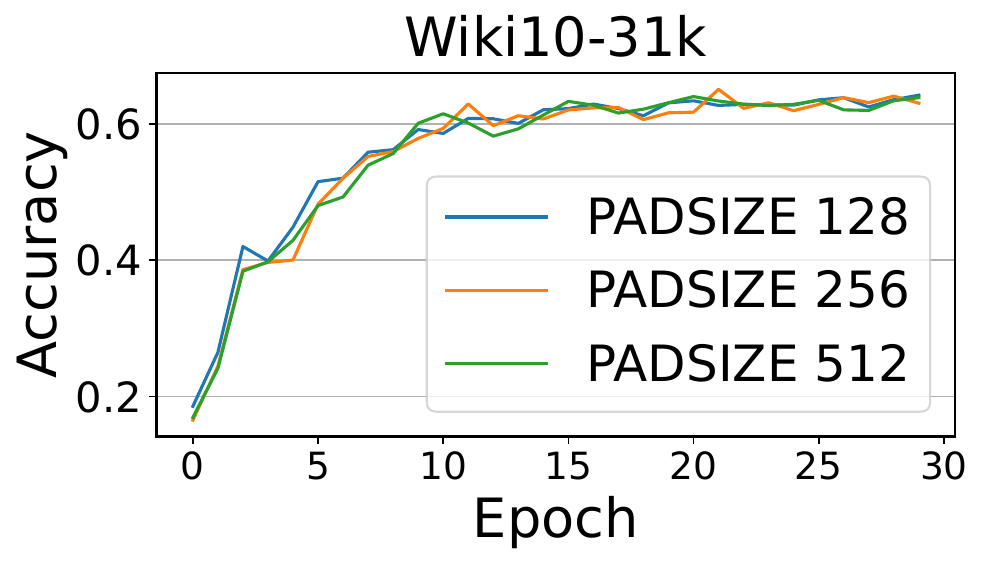}
    \caption{Wiki10-31K}
    \label{fig:padsize-wiki10}
  \end{subfigure}
  \hfill
  \begin{subfigure}[b]{0.49\linewidth}
    \centering
    \includegraphics[width=\linewidth]{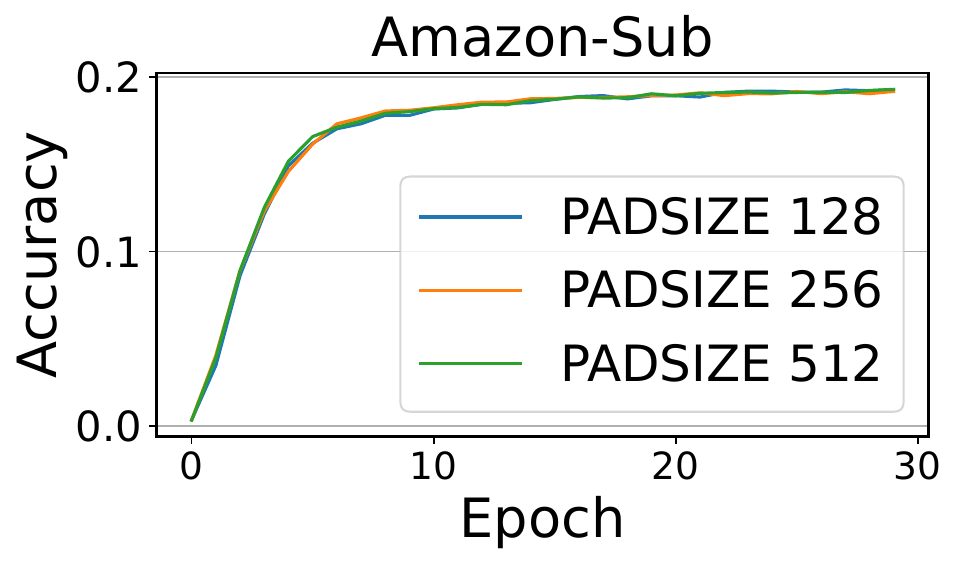}
    \caption{Amazon-Sub}
    \label{fig:padsize-amazon-sub}
  \end{subfigure}

  \caption{Classification accuracy over training epochs with $\texttt{PADSIZE}\in\{128,256,512\}$ on Wiki10-31K and Amazon-Sub. All other experimental parameters are identical to those used in Section~\ref{sec:acc-eval}.}
  \label{fig:padsize-accuracy}
\end{figure}

Figure~\ref{fig:padsize-accuracy} shows that all three configurations follow closely aligned accuracy curves and converge to comparable final P@1 on the tested datasets.

Figure~\ref{fig:padsize-accuracy} shows that all three configurations follow closely aligned accuracy trajectories and converge to comparable final P@1 on both datasets.
The maximum relative final-accuracy difference is approximately 0.24\% on Wiki10-31K and 0.33\% on Amazon-Sub.
Although larger values provide additional bucket capacity and can therefore reduce overflow caused by capacity limits, increasing \texttt{PADSIZE} from 128 to 512 does not produce a consistent accuracy improvement.
These results indicate that, for the evaluated workloads, using the default \texttt{PADSIZE}=128 instead of the larger capacities has no practically significant effect on predictive accuracy.

Although larger values provide additional LSH bucket capacity and may therefore reduce neuron overflow caused by capacity limits, increasing \texttt{PADSIZE} from 128 to 512 does not produce a consistent accuracy improvement.
These results indicate that, for the evaluated workloads, using the default \texttt{PADSIZE}=128, rather than larger capacities, has no practically significant effect on the neural network's predictive accuracy.

\end{document}